\documentclass[aps,prx,reprint,superscriptaddress,nofootinbib,longbibliography,floatfix]{revtex4-2}

\usepackage[T1]{fontenc}
\usepackage{amsmath,amssymb,mathtools,bm,amsthm}
\usepackage{graphicx}
\usepackage{booktabs,array}
\usepackage{algpseudocode}
\usepackage{microtype}
\usepackage{xcolor}
\usepackage{tikz}
\usepackage{pgfplots}
\pgfplotsset{compat=1.18}
\usetikzlibrary{arrows.meta,positioning,calc,fit,backgrounds,shapes.geometric}
\definecolor{gkpblue}{HTML}{2F6B9A}
\definecolor{gkpteal}{HTML}{2A9D8F}
\definecolor{gkporange}{HTML}{E76F51}
\definecolor{gkppurple}{HTML}{6A4C93}
\definecolor{gkpgreen}{HTML}{7CB342}
\usepackage{braket}
\usepackage{hyperref}
\hypersetup{colorlinks=true,linkcolor=blue,citecolor=blue,urlcolor=blue}
\allowdisplaybreaks[2]

\newcommand{\F}{\mathbb{F}}
\newcommand{\R}{\mathbb{R}}
\newcommand{\Z}{\mathbb{Z}}
\newcommand{\I}{\mathbb{I}}
\newcommand{\DV}{\mathrm{DV}}
\newcommand{\obs}{\mathrm{obs}}

\newcommand{\meas}{\mathrm{meas}}

\newcommand{\fus}{\mathrm{fus}}
\newcommand{\Bell}{\mathrm{Bell}}
\newcommand{\GKP}{\mathrm{GKP}}
\newcommand{\locera}{\bot_{\rm loc}}
\newcommand{\accera}{\bot_{\rm acc}}
\newcommand{\confera}{\bot_{\rm conf}}
\newcommand{\LA}{\mathrm{LA}}

\newcommand{\Tr}{\operatorname{Tr}}
\newcommand{\syn}{\operatorname{syn}}
\newcommand{\spanZ}{\operatorname{span}_{\Z}}
\newcommand{\spanF}{\operatorname{span}_{\F_2}}

\newcommand{\argmax}{\operatorname*{arg\,max}}

\newtheorem{theorem}{Theorem}[section]

\theoremstyle{definition}

\theoremstyle{remark}

\newcounter{prxalgorithm}
\renewcommand{\theprxalgorithm}{\arabic{prxalgorithm}}
\newenvironment{prxalgorithm}[1]{%
  \begin{figure}[!t]
  \refstepcounter{prxalgorithm}%
  \begin{minipage}{\columnwidth}
  \noindent\rule{\columnwidth}{0.5pt}\par\nobreak
  \vskip 3pt plus 1pt minus 3pt\nobreak
  \noindent\textbf{Algorithm \theprxalgorithm.}~\textbf{#1}\par\nobreak
  \vskip 3pt plus 1pt minus 3pt\nobreak
}{%
  \par\nobreak
  \vskip 3pt plus 1pt minus 3pt\nobreak
  \noindent\rule{\columnwidth}{0.5pt}%
  \end{minipage}
  \end{figure}
}
\algrenewcommand\algorithmicrequire{\textbf{Input:}}
\algrenewcommand\algorithmicensure{\textbf{Output:}}

\begin{document}

\title{Measurement-Based Loss Tolerance in Graph--GKP Codes through Syndrome-Resolved Pauli-Frame Decoding}

\author{Seid Koudia}
\affiliation{Interdisciplinary Centre for Security, Reliability and Trust (SnT), University of Luxembourg, Luxembourg}
\author{Symeon Chatzinotas}
\affiliation{Interdisciplinary Centre for Security, Reliability and Trust (SnT), University of Luxembourg, Luxembourg}
\date{\today}

\begin{abstract}
Graph codes offer multiple physical representatives of logical observables, while Gottesman–Kitaev–Preskill (GKP) codes retain analog information about bosonic displacement noise. We develop a causal framework that unifies these mechanisms for measurement‑based loss tolerance under pure loss followed by quantum‑limited amplification. In this framework, each local GKP recovery produces a refreshed logical block, a continuous syndrome record, and a confidence score for the inferred Pauli class. Low‑confidence outcomes are deliberately converted into located erasures, so the availability pattern is generated directly from the bosonic data rather than sampled independently. Both accepted and rejected syndromes contribute to a syndrome‑resolved posterior over the graph branch, which determines accessible logical representatives and the outgoing logical Pauli frame.
We derive decoder‑conditioned branch restriction, signed‑outcome reconstruction, Pauli‑frame updating, recursive concatenation of graph‑GKP modules, and syndrome‑resolved logical fusion. Numerical simulations across several squeezing levels identify task‑dependent loss‑tolerance behavior and finite‑depth pseudothresholds for square and hexagonal GKP lattices. The resulting graph‑GKP interface provides a unified causal control layer for fault‑tolerant MBQC, fusion‑based computation and all photonic repeaters.
\end{abstract}

\maketitle

\section{Introduction}
\label{sec:introduction}

Fault-tolerant quantum computation requires architectures that suppress physical noise while remaining compatible with realistic preparation, measurements, feedforward, and classical control. This is particularly demanding in photonic and bosonic platforms, where loss, finite squeezing, detector imperfections, mode mismatch, and probabilistic entangling operations must be incorporated into the computational model. Recent studies similarly emphasize the need for physically faithful optical models, experimentally accessible recovery operations, and decoders matched to the dominant noise mechanisms \cite{Vischi2024Photonic,Hoch2025AdaptiveBoson,EisertPreskill2025,ZengEtAl2025}.

Measurement-based quantum computation (MBQC) provides a natural setting for this integration. An entangled resource state is prepared in advance, while computation proceeds through local measurements, classical feedforward, and adaptive basis choices \cite{raussendorf2001,raussendorf2003}. In photonic implementations, successful operation therefore depends on the ability of the classical controller to identify usable measurement patterns, reroute around unavailable resource nodes, and update the logical Pauli frame.

Graph-state encodings provide such flexibility because a logical observable may admit several physically distinct representatives on the same graph. The controller can consequently replace an unavailable representative by another one supported on the surviving resource. This principle underlies loss-tolerant graph-state computation and adaptive measurement protocols \cite{varnava2006,morleyshort2019,bell2023}, and is also relevant to photonic networks and repeater architectures affected by transmission loss, probabilistic state preparation, and incomplete Bell measurements \cite{LiEtAl2025,koudia2024quantum}.

Gottesman--Kitaev--Preskill (GKP) codes provide a complementary form of protection by encoding logical information in the phase space of bosonic modes \cite{gkp2001,larsen2025integrated}. Modular quadrature measurements retain continuous information about displacement errors, which can substantially improve logical decoding compared with hard binary decisions \cite{conrad2022,lin2023}. Their performance is intrinsically linked to the underlying bosonic channel, including pure-loss and amplification processes \cite{ZhengEtAl2025,WangJiang2025}. GKP states also belong to the broader class of non-Gaussian resources relevant to bosonic information processing \cite{Koudia2021NGaussCausal,Koudia2019CausalDisc}.

Graph-code pathfinding and analog GKP decoding have nevertheless largely been treated as separate control layers. Graph decoders select logical representatives compatible with a declared availability pattern \cite{bell2023}, whereas GKP decoders infer logical sectors from continuous measurement records \cite{conrad2022,lin2023}. A simple sequential composition is insufficient when availability is itself determined by thresholding those records. In that case, the selected graph branch is a data-dependent event, and the likelihood of both accepted and rejected local decisions must remain part of the global decoder.

We develop a causal graph--GKP framework for this setting. Physical attenuation is modeled through pure loss followed by quantum-limited amplification, yielding displacement statistics processed by the local GKP decoder. Local recovery is performed before the graph controller selects the destructive outer measurements. Each recovered block produces an analog syndrome, an inferred residual Pauli class, and a confidence score. Blocks with insufficient confidence are deliberately converted into located erasures; thus, unavailability is generated from the bosonic measurement record rather than sampled independently from an additional deletion channel. This ordering distinguishes three operational events: local decoder rejection, failure to find an accessible graph representative, and insufficient confidence in the logical frame inferred from an accessible representative. Local rejection may be bypassed through graph pathfinding, whereas accessibility and frame-confidence failures directly affect the requested logical operation. The decoder therefore retains the selected branch, the available outer syndrome information, and the corresponding posterior over the outgoing logical Pauli frame. Exact wrapped logical-coset maximum likelihood defines the statistical decoding target, while the executable decoder uses a covariance-weighted closest-coset approximation. The resulting selected-event likelihood provides a common interface for local GKP recovery, logical Pauli measurements, adaptive transport to non-Pauli terminal measurements, recursive graph-code concatenation, and logical parity fusion. We instantiate the construction using cube and decorated-pentagon modules and extend it to recursively concatenated resources. Numerical simulations compare square and hexagonal GKP lattices, finite graph families, several concatenation depths, and adaptive and transversal fusion strategies, yielding finite-size pseudothresholds for logical measurements and their associated Pauli-frame layer.

The paper is organized as follows. Section~\ref{sec:gkp_machinery} introduces the GKP conventions, loss-and-amplification model, and local closest-coset decoder. Section~\ref{sec:graph_gkp_codes} develops the graph--GKP construction. Section~\ref{sec:logical_measurements} treats logical Pauli and non-Pauli measurements, while Sec.~\ref{sec:analog_qec} derives the branch-conditioned decoder and Pauli-frame updates. Sections~\ref{sec:modularization} and \ref{sec:fusion} address recursive concatenation and logical fusion, respectively. Section~\ref{sec:discussion} discusses applications to fault-tolerant MBQC, fusion-based computation, photonic repeaters, and multimode bosonic architectures. 

\section{Bosonic phase space and GKP codes}
\label{sec:gkp_machinery}

This section fixes the continuous-variable convention used in every subsequent derivation  \cite{gkp2001,conrad2022}.

\subsection{Phase-space and GKP lattice conventions}

For $m$ oscillator modes, define
\begin{equation*}
\hat{\bm x}=(\hat q_1,\ldots,\hat q_m,\hat p_1,\ldots,\hat p_m)^T,
\qquad
J_{2m}=\begin{pmatrix}0&I_m\\-I_m&0\end{pmatrix},
\end{equation*}
so that, with $\hbar=1$,
\begin{equation*}
[\hat x_j,\hat x_k]=i(J_{2m})_{jk}.
\end{equation*}
For a dimensionless phase-space vector $\xi\in\R^{2m}$, the displacement operator is
\begin{equation}
D(\xi)=\exp\!\left[-i\sqrt{2\pi}\,\xi^T J_{2m}\hat{\bm x}\right].
\label{eq:displacement}
\end{equation}
It shifts the quadratures by
\begin{equation*}
D(\xi)^\dagger\hat{\bm x}D(\xi)=\hat{\bm x}+\sqrt{2\pi}\,\xi,
\end{equation*}
and obeys the Weyl relations
\begin{align}
D(\xi)D(\eta)
&=e^{-i\pi\xi^TJ_{2m}\eta}D(\xi+\eta),
\label{eq:weyl_product}\\
D(\xi)D(\eta)
&=e^{-i2\pi\xi^TJ_{2m}\eta}D(\eta)D(\xi).
\label{eq:weyl_comm}
\end{align}
Thus two displacements commute exactly when their symplectic product is integral. A Gaussian unitary with symplectic matrix $S\in\mathrm{Sp}_{2m}(\R)$ acts by $U_SD(\xi)U_S^\dagger=D(S\xi)$.

\label{subsec:gkp_lattice}

Let $M\in\R^{2m\times2m}$ be full rank, with row vectors $\xi_1^T,\ldots,\xi_{2m}^T$. In the row convention used here,
\begin{equation*}
L(M)=\left\{\xi\in\R^{2m}:\xi^T=a^TM,\ a\in\Z^{2m}\right\}
\end{equation*}
is the displacement lattice generated by $M$. Its symplectic Gram matrix is
\begin{equation*}
A=MJ_{2m}M^T.
\end{equation*}
The displacement generators commute if and only if $A$ is integer valued. Importantly, the lattice alone does not specify every stabilizer sign when $A$ is not even. Let $A_{\triangle}$ denote the strictly lower-triangular part of $A$. Repeated use of Eq.~\eqref{eq:weyl_product} gives, for $\xi(a)=M^Ta$,
\begin{equation}
\prod_{j=1}^{2m}D(\xi_j)^{a_j}
=e^{i\pi a^TA_{\triangle}a}D\!\left(\xi(a)\right).
\label{eq:lattice_product_phase}
\end{equation}
We therefore define the phase function
\begin{equation}
\phi_M(\xi(a))=\pi a^TA_{\triangle}a\pmod{2\pi}
\label{eq:phi_M}
\end{equation}
and the GKP stabilizer group
\begin{equation}
\mathcal S(M,\phi_M)
=\left\{e^{i\phi_M(\xi)}D(\xi):\xi\in L(M)\right\}.
\label{eq:gkp_stabilizer_group}
\end{equation}
The ideal code space is the simultaneous $+1$ eigenspace of this group. If $M'=UM$ with $U\in\mathrm{GL}_{2m}(\Z)$ unimodular, then $M'$ generates the same geometric lattice, but the phase sector must be transported with the basis change; it must not in general be reset to zero.

The symplectic dual lattice is
\begin{equation*}
L^\perp
=\left\{\xi^\perp\in\R^{2m}:
(\xi^\perp)^TJ_{2m}\xi\in\Z\ \text{for all }\xi\in L\right\}.
\end{equation*}
A canonical dual generator is
\begin{equation*}
M^\perp=(J_{2m}M^T)^{-1}=M^{-T}J_{2m}^T.
\end{equation*}
For a GKP code, $L\subseteq L^\perp$. The quotient
\begin{equation*}
L^\perp/L\simeq C(\mathcal S)/\mathcal S
\end{equation*}
lists the logically distinct displacement operators. Its cardinality and the code-space dimension are
\begin{align}
|L^\perp/L|&=|\det A|=|\det M|^2,\nonumber\\
\dim\mathcal H_C&=|\det M|=\sqrt{|\det A|}.
\label{eq:gkp_dimension}
\end{align}
These are established lattice identities; they are included because the graph--GKP compilation below is an explicit superlattice construction and its logical dimension will be checked with Eq.~\eqref{eq:gkp_dimension}.

\subsection{Single-mode lattices and finite-energy GKP states}
\label{subsec:square_hex_gkp}

A scaled one-qubit GKP code has type $d=2$, equivalently $A=2J_2$. The two local lattices used in the numerical figures are generated by
\begin{align*}
M_{\square}&=\sqrt2\,I_2,\\
M_{\mathrm{hex}}&=3^{-1/4}
\begin{pmatrix}2&0\\1&\sqrt3\end{pmatrix}.\end{align*}
Direct multiplication gives
\begin{equation*}
M_{\square}J_2M_{\square}^T
=M_{\mathrm{hex}}J_2M_{\mathrm{hex}}^T
=2J_2.
\end{equation*}
Let $\xi_1^T,\xi_2^T$ be the rows of either matrix and choose the quotient representatives
\begin{equation}
e=\frac{\xi_1}{2},
\qquad
f=\frac{\xi_2}{2}.
\label{eq:local_e_f}
\end{equation}
Then $e,f\in L^\perp$, $2e,2f\in L$, and $e^TJ_2f=1/2$. We use
\begin{equation}
\overline X=D(e),
\qquad
\overline Z=D(f),
\qquad
\overline Y=D(e+f),
\label{eq:gkp_logical_paulis_ef}
\end{equation}
which satisfy the qubit Pauli algebra by Eq.~\eqref{eq:weyl_comm}. For the square code this specializes to
\begin{equation}
\overline X=e^{-i\sqrt\pi\hat p},
\qquad
\overline Z=e^{i\sqrt\pi\hat q},
\label{eq:square_logical_paulis}
\end{equation}
while the stabilizers are $e^{-i2\sqrt\pi\hat p}$ and $e^{i2\sqrt\pi\hat q}$. The Euclidean GKP distance
\begin{equation*}
\Delta(L)=\min_{x\in L^\perp\setminus L}\|x\|_2
\end{equation*}
is $\Delta_{\square}=2^{-1/2}$ and $\Delta_{\mathrm{hex}}=3^{-1/4}$ in the coordinates of Eq.~\eqref{eq:displacement}. The hexagonal advantage in the plots therefore originates at the local lattice level from a larger shortest nontrivial logical displacement; it is not introduced as an abstract ``lattice factor.''

\label{subsec:gkp_states}

Ideal GKP states are non-normalizable lattice-comb eigenstates.  Finite-energy codewords are commonly regularized as
\begin{equation}
|\widetilde\psi_{\beta}\rangle
=\mathcal N_{\beta}e^{-\beta\hat N}|\psi_L\rangle,
\qquad
\hat N=\sum_{j=1}^m\hat a_j^\dagger\hat a_j,
\label{eq:finite_energy_regularization}
\end{equation}
or by equivalent approximate comb models \cite{gkp2001,conrad2022}.  Those state-preparation models are not simulated here.  The numerical analysis begins after preparation and readout have been reduced to a calibrated Gaussian displacement channel with width $\widetilde\sigma$ and covariance contributions collected in Eq.~\eqref{eq:loss_amplification_total_covariance}.  We therefore do not identify $\widetilde\sigma$ with a state-envelope parameter without specifying a preparation circuit.

\subsection{Gaussian channel, closest-coset decoding, and local erasure conversion}
\label{subsec:local_gkp_mld}

The effective Gaussian displacement channel in the adopted normalization is
\begin{equation*}
\mathcal N_{\widetilde\sigma}(\rho)
=\int_{\R^{2m}}d^{2m}\epsilon\,
P_{\widetilde\sigma}(\epsilon)
D(\epsilon)\rho D(\epsilon)^\dagger,
\end{equation*}
with
\begin{equation*}
P_{\widetilde\sigma}(\epsilon)
=\frac{1}{(2\pi\widetilde\sigma^2)^m}
\exp\!\left[-\frac{\|\epsilon\|_2^2}{2\widetilde\sigma^2}\right].
\end{equation*}
Because Eq.~\eqref{eq:displacement} shifts a physical quadrature by $\sqrt{2\pi}\epsilon$, a conventional quadrature variance $V$ corresponds to a displacement-coordinate variance $V/(2\pi)$.

For an error $\epsilon$, the continuous GKP stabilizer syndrome is
\begin{equation*}
s(\epsilon)=MJ_{2m}\epsilon\pmod 1.
\end{equation*}
Choose the centered representative $\widetilde s\in[-1/2,1/2)^{2m}$ and the pure error
\begin{equation*}
\eta(s)=(MJ_{2m})^{-1}\widetilde s.
\end{equation*}
Then every error compatible with syndrome $s$ lies in one and only one affine logical class
\begin{equation*}
\eta(s)+\xi_{\alpha}^{\perp}+L,
\qquad
\xi_{\alpha}^{\perp}\in L^\perp/L.
\end{equation*}
The exact maximum-likelihood probability of logical class $\alpha$, conditioned on $s$, is therefore
\begin{equation}
P(\alpha|s)
=\frac{1}{P(s)}
\sum_{\lambda\in L}
P_{\widetilde\sigma}
\!\left(\eta(s)+\xi_{\alpha}^{\perp}+\lambda\right).
\label{eq:local_gkp_mld}
\end{equation}
Equation~\eqref{eq:local_gkp_mld} is the exact logical-class maximum-likelihood target. The executable decoder used everywhere below is the closest-coset (CC), or max-log, approximation. For prior $\pi^0(\alpha)$ define
\begin{align}
D_{\alpha}^{\rm CC}(s)
&=\min_{\lambda\in L}
\frac{\left\|\eta(s)+\xi_{\alpha}^{\perp}+\lambda\right\|_2^2}
{\widetilde\sigma^2},\nonumber\\
\widehat P_{\rm CC}(\alpha\mid s)
&=\frac{\pi^0(\alpha)e^{-D_{\alpha}^{\rm CC}(s)/2}}
{\sum_{\beta}\pi^0(\beta)e^{-D_{\beta}^{\rm CC}(s)/2}},
\label{eq:local_gkp_closest_coset}
\end{align}
and $\widehat\alpha_{\rm CC}=\argmax_{\alpha}\widehat P_{\rm CC}(\alpha\mid s)$. Thus the full coset sum fixes the maximum-likelihood model, while the working decoder retains only the dominant lattice point in each logical coset. The two coincide when a single representative dominates but need not coincide at moderate noise. Consequently, every confidence used for erasure conversion is formed from the normalized CC weights and is calibrated against the declared channel. The syndrome-resolved graph decoder derived later is the partial-readout, signed generalization of this same closest-coset rule.

For the square code the formulas can be written without any abstract lattice notation.  If $\epsilon=(\epsilon_q,\epsilon_p)^T$, then
\begin{equation*}
s_{\square}(\epsilon)
=\sqrt2\begin{pmatrix}\epsilon_p\\-\epsilon_q\end{pmatrix}
\pmod1,
\qquad
\eta_{\square}(s)
=\frac{1}{\sqrt2}
\begin{pmatrix}-\widetilde s_2\\ \widetilde s_1\end{pmatrix}.
\end{equation*}
Writing the four logical classes as $(\mu,\nu)\in\F_2^2$ with shift $\mu e+\nu f$, Eq.~\eqref{eq:local_gkp_mld} becomes
\begin{align*}
d_{\mu\nu;k l}(s)
&:=\eta_{\square}(s)
+\frac{1}{\sqrt2}\begin{pmatrix}\mu\\\nu\end{pmatrix}
+\sqrt2\begin{pmatrix}k\\l\end{pmatrix},\\
P_{\square}(\mu,\nu\mid s)
&\propto
\sum_{k,l\in\Z}
\exp\!\left[-\frac{\|d_{\mu\nu;k l}(s)\|_2^2}
{2\widetilde\sigma^2}\right].\end{align*}
For the closest-coset implementation the square-code sector distance is
\begin{align*}
D_{\mu\nu}^{\square,\rm CC}(s)
&=\min_{k,l\in\mathbb Z}
\frac{\|d_{\mu\nu;kl}(s)\|_2^2}{\widetilde\sigma^2},\\
\widehat P_{\square,\rm CC}(\mu,\nu\mid s)
&\propto e^{-D_{\mu\nu}^{\square,\rm CC}(s)/2}.
\end{align*}
The hexagonal decoder is obtained by replacing $M_{\square}$, its dual representatives, and the Euclidean metric by those generated by $M_{\mathrm{hex}}$. This is the precise square-versus-hexagonal change exercised by the benchmark figures; the decoder rule itself is unchanged.

\label{subsec:loss_amplification_local_interface}

The physical model contains no independent hard-loss variable.  Standard Gaussian-channel composition gives, for pure loss of transmissivity $\eta_i$ followed by an ideal quantum-limited phase-insensitive amplifier of gain $1/\eta_i$ \cite{noh2019,hastrup2023,fukui2021},
\begin{equation}
\mathcal A_{1/\eta_i}\circ\mathcal L_{\eta_i}
=\mathcal G_{V_{\LA,i}},
\qquad
V_{\LA,i}=\frac{1-\eta_i}{\eta_i}I_{2m_i}
\label{eq:loss_amplification_channel_identity}
\end{equation}
in physical quadrature units.  Since Eq.~\eqref{eq:displacement} shifts a quadrature by $\sqrt{2\pi}\,\epsilon$, the displacement-coordinate covariance is
\begin{equation}
\Sigma_{\LA,i}(\eta_i)
=\frac{1-\eta_i}{2\pi\eta_i}I_{2m_i}.
\label{eq:loss_amplification_covariance}
\end{equation}
All displacement contributions are assembled once,
\begin{equation}
\Sigma_{\epsilon}
=\Sigma_{\rm prep}
+\bigoplus_{i=1}^{n}\Sigma_{\LA,i}(\eta_i)
+\Sigma_{\rm gate}+\Sigma_{\rm anc}+\Sigma_{\rm ff},
\label{eq:loss_amplification_total_covariance}
\end{equation}
with generated correlations retained.  Pre-amplification produces a different covariance, $(1-\eta_i)I/(2\pi)$, and is not the channel used in the reported simulations.

The two-parameter numerical benchmarks isolate preparation width and loss--amplification noise. Unless stated otherwise, their displacement covariance is
\begin{equation}
\Sigma_{\epsilon}^{\rm bench}
=\frac{\sigma_{\GKP}^{2}}{2\pi}I
+\bigoplus_i\frac{1-\eta_i}{2\pi\eta_i}I_{2m_i},
\label{eq:benchmark_covariance}
\end{equation}
while the additional gate, ancilla, feedforward, and detector contributions in Eq.~\eqref{eq:loss_amplification_total_covariance} are set to zero. Equation~\eqref{eq:loss_amplification_total_covariance}, rather than Eq.~\eqref{eq:benchmark_covariance}, remains the general model of the framework.

Before any destructive outer graph measurement, each block undergoes teleportation-based or nondestructive GKP recovery. Let $D_i^{\rm loc}$ denote the full continuous local record, $P_i^{\rm loc}$ the local observation map, and
\begin{equation*}
K_i=L_i\cap\ker P_i^{\rm loc}.
\end{equation*}
For residual local Pauli class $u_i\in\F_2^2$, the exact logical-class maximum-likelihood target is the wrapped local likelihood
\begin{equation}
W_{i,{\rm ML}}^{\rm loc}(u_i;D_i^{\rm loc})
=\sum_{[\lambda_i]\in(\Phi_i(u_i)+L_i)/K_i}
 g_i^{\rm loc}\!\left(D_i^{\rm loc}\mid P_i^{\rm loc}\lambda_i\right),
\label{eq:local_loss_amplification_weight}
\end{equation}
where the kernel uses the covariance in Eq.~\eqref{eq:loss_amplification_total_covariance}. The operational decoder retains only the dominant visible representative,
\begin{align}
D_{i,{\rm CC}}^{\rm loc}(u_i;D_i^{\rm loc})
&=\min_{[\lambda_i]\in(\Phi_i(u_i)+L_i)/K_i}
\mathcal Q_i^{\rm loc}(\lambda_i),\nonumber\\
\mathcal Q_i^{\rm loc}(\lambda_i)
&=-2\log g_i^{\rm loc}
\!\left(D_i^{\rm loc}\mid P_i^{\rm loc}\lambda_i\right),\nonumber\\
\widehat W_{i,{\rm CC}}^{\rm loc}(u_i;D_i^{\rm loc})
&=\exp[-D_{i,{\rm CC}}^{\rm loc}(u_i;D_i^{\rm loc})/2].
\label{eq:local_closest_coset_weight}
\end{align}
For a Gaussian kernel this is a covariance-weighted closest-vector problem. With prior $\pi_i^0$, the posterior distribution used by the graph controller is
\begin{equation}
\pi_i^{\rm loc}(u_i\mid D_i^{\rm loc})
=\frac{\pi_i^0(u_i)\widehat W_{i,{\rm CC}}^{\rm loc}(u_i;D_i^{\rm loc})}
{\sum_{v_i\in\F_2^2}\pi_i^0(v_i)\widehat W_{i,{\rm CC}}^{\rm loc}(v_i;D_i^{\rm loc})}.
\label{eq:local_loss_amplification_posterior}
\end{equation}
This normalized closest-coset distribution is retained as a posterior distribution even if its maximum-score correction is applied or tracked. The exact wrapped likelihood in Eq.~\eqref{eq:local_loss_amplification_weight} is retained only as the statistical target and for analytic validation.

A conservative basis-independent confidence is
\begin{align*}
\widehat u_i&=\argmax_{u_i}\pi_i^{\rm loc}(u_i\mid D_i^{\rm loc}),\\
\Gamma_i^{\rm blk}
&=\log\frac{\pi_i^{\rm loc}(\widehat u_i\mid D_i^{\rm loc})}
{1-\pi_i^{\rm loc}(\widehat u_i\mid D_i^{\rm loc})}.\end{align*}
For graph pathfinding, the measurement-specific confidence uses only the flip bit relevant to a proposed local Pauli $P\in\{X,Y,Z\}$:
\begin{align*}
p_i^P(c\mid D_i^{\rm loc})
&=\sum_{u_i:[P,u_i]_1=c}
\pi_i^{\rm loc}(u_i\mid D_i^{\rm loc}),\\
\widehat c_i^P&=\argmax_{c\in\F_2}p_i^P(c\mid D_i^{\rm loc}),\\
\Gamma_i^P&=\left|\log\frac{p_i^P(0\mid D_i^{\rm loc})}
{p_i^P(1\mid D_i^{\rm loc})}\right|.\end{align*}
A block may therefore be reliable for one measurement basis and unreliable for another. The conservative availability bit and the basis-resolved profile are
\begin{align}
a_i^{\rm blk}&=\mathbf1[\Gamma_i^{\rm blk}\geq\Gamma_{\rm loc}],\nonumber\\
a_i^P&=\mathbf1[\Gamma_i^P\geq\Gamma_{{\rm loc},P}].
\label{eq:basis_availability}
\end{align}

Operationally, the local recovery is a flagged instrument with accepted refreshed and deliberate-erasure outputs,
\begin{equation}
\left\{\mathcal J^{\rm acc}_{i,D},\mathcal J^{\rm era}_{i,D}\right\}_{D},
\qquad
\int dD\,[\mathcal J^{\rm acc}_{i,D}+\mathcal J^{\rm era}_{i,D}]
\ \text{trace preserving}.
\label{eq:local_flagged_instrument}
\end{equation}
The acceptance and rejection regions are
\begin{equation*}
\mathcal A_i=\{D:\Gamma_i(D)\geq\Gamma_{\rm loc}\},
\qquad
\mathcal E_i=\{D:\Gamma_i(D)<\Gamma_{\rm loc}\}.
\end{equation*}
For $D\in\mathcal A_i$, the refreshed output is admitted to the graph protocol. For $D\in\mathcal E_i$, the refreshed output is intentionally discarded and the classical flag $\locera$ is emitted. The calibrated local erasure and accepted-error probabilities are
\begin{align*}
\varepsilon_{\GKP,i}
&=\Pr[D_i^{\rm loc}\in\mathcal E_i],\\
&=\sum_{u_i}\pi_i^0(u_i)
\int_{\mathcal E_i}W_{i,{\rm ML}}^{\rm loc}(u_i;D)\,dD,\\
p_{i,{\rm err}}^{\rm acc}
&=\Pr[\widehat u_i\ne u_i,\,
D_i^{\rm loc}\in\mathcal A_i].
\end{align*}
Changing the threshold trades accepted Pauli error against a located erasure. This decision must occur before the destructive outer measurement; a low-confidence result obtained only afterward cannot generally be reinterpreted as a lost vertex because the measurement backaction and its unknown byproduct have already occurred.

\paragraph{Ideal-square validation of the closest-coset interface.}
The physical residual distribution can be evaluated without Monte Carlo for an ideal square GKP comb under the isotropic Gaussian loss--amplification channel. Let
\begin{equation*}
a_{\square}=\frac{1}{\sqrt2},\qquad
\widetilde\sigma^2(\eta)
=\widetilde\sigma_{\rm prep}^2
+\frac{1-\eta}{2\pi\eta},
\end{equation*}
and let $r=(r_1,r_2)$ be the centered pure-error coordinate in the dual fundamental cell
$\mathcal F_{\square}^{\perp}=[-a_{\square}/2,a_{\square}/2)^2$.
For $c\in\F_2$, the exact one-dimensional physical comb is
\begin{equation}
w_c(r_j;\widetilde\sigma)
=\sum_{k\in\Z}
\frac{\exp\![-(r_j+(2k+c)a_{\square})^2/(2\widetilde\sigma^2)]}
{\sqrt{2\pi}\widetilde\sigma}.
\label{eq:square_loss_amplification_axis_weight}
\end{equation}
The exact joint density of residual $r$ and residual Pauli class $u=(\mu,\nu)$ is
\begin{equation*}
p_{\square}(r,\mu,\nu)
=w_{\mu}(r_1;\widetilde\sigma)
 w_{\nu}(r_2;\widetilde\sigma).
\end{equation*}
It is normalized because the four logical shifts and the stabilizer translates partition $\R^2$. The closest-coset decision replaces each comb by its dominant peak,
\begin{align*}
\widehat w_{c,{\rm CC}}(r_j;\widetilde\sigma)
&=\exp\!\left[-\frac{1}{2\widetilde\sigma^2}
\min_{k\in\Z}(r_j+(2k+c)a_{\square})^2\right],\\
\widehat p_{\square,{\rm CC}}(r,\mu,\nu)
&=\widehat w_{\mu,{\rm CC}}(r_1;\widetilde\sigma)
\widehat w_{\nu,{\rm CC}}(r_2;\widetilde\sigma),\\
\widehat\pi_{\square,{\rm CC}}^{\rm loc}(\mu,\nu\mid r)
&=\frac{\widehat p_{\square,{\rm CC}}(r,\mu,\nu)}
{\sum_{a,b\in\F_2}\widehat p_{\square,{\rm CC}}(r,a,b)}.
\end{align*}
Let $\widehat u_{\rm CC}(r)=\argmax_u\widehat\pi_{\square,{\rm CC}}^{\rm loc}(u\mid r)$, let $\widehat p_{\max,{\rm CC}}(r)$ be the corresponding maximum normalized weight, and set $q_{\Gamma}=(1+e^{-\Gamma_{\rm loc}})^{-1}$. The CC acceptance region is
\begin{equation*}
\mathcal A_{\Gamma}^{\rm CC}
=\{r\in\mathcal F_{\square}^{\perp}:\widehat p_{\max,{\rm CC}}(r)\ge q_{\Gamma}\}.
\end{equation*}
The erasure and accepted-error probabilities combine the exact physical density with the CC decision rule,
\begin{align}
\varepsilon_{\GKP}^{\square}(\eta)
&=\int_{\mathcal F_{\square}^{\perp}\setminus\mathcal A_{\Gamma}^{\rm CC}}
\sum_{u\in\F_2^2}p_{\square}(r,u)\,d^2r,
\label{eq:square_loss_amplification_exact_erasure}\\
p_{{\rm err},\square}^{\rm acc}(\eta)
&=\int_{\mathcal A_{\Gamma}^{\rm CC}}
\sum_{u\in\F_2^2}p_{\square}(r,u)
\mathbf1[u\neq\widehat u_{\rm CC}(r)]\,d^2r.
\label{eq:square_loss_amplification_exact_error}
\end{align}
These expressions provide an analytic validation of the same closest-coset error--erasure interface used in the simulations. The hexagonal and multimode versions replace the square-lattice nearest-coset problem and physical comb by their corresponding lattice forms.

Algorithm~\ref{alg:local_interface} summarizes the local recovery and confidence-to-erasure interface.

\begin{prxalgorithm}{Local loss--amplification GKP interface}
\label{alg:local_interface}
\footnotesize
\begin{algorithmic}[1]
\Require Block transmissivity $\eta_i$; preparation, gate, ancilla, and detector covariances; local lattice and readout circuit; confidence thresholds.
\Ensure Local Pauli posterior and confidence profile; accepted refreshed block or $\locera$.
\State Apply $\mathcal A_{1/\eta_i}\circ\mathcal L_{\eta_i}$, or sample the equivalent displacement with covariance Eq.~\eqref{eq:loss_amplification_covariance}; do not sample a Bernoulli loss flag.
\State Perform teleportation-based or nondestructive GKP recovery and retain $D_i^{\rm loc}$.
\State Solve Eq.~\eqref{eq:local_closest_coset_weight} for all four local Pauli classes and normalize Eq.~\eqref{eq:local_loss_amplification_posterior}.
\State Compute $\Gamma_i^{\rm blk}$ and, when used, the basis-resolved profile $(\Gamma_i^X,\Gamma_i^Y,\Gamma_i^Z)$.
\If{the selected confidence exceeds threshold}
  \State Admit the refreshed output and pass $\pi_i^{\rm loc}$ upward.
\Else
  \State Deliberately discard the refreshed output, emit $\locera$, and retain the rejected syndrome or its integrated rejection likelihood.
\EndIf
\State \Return the complete local posterior distribution and the availability decision.
\end{algorithmic}
\end{prxalgorithm}

\paragraph{Modular homodyne records.}\label{subsec:modular_measurement_machinery}

Steane-type and teleportation-based GKP recovery circuits return a stabilizer syndrome modulo the lattice period together with a continuous residual in a chosen fundamental cell \cite{gkp2001,conrad2022}.  In the present normalization, a homodyne record $\widetilde q$ associated with $D(\xi)$ satisfies
\begin{equation*}
\widetilde q/\sqrt{2\pi}\pmod 1=s_{\xi}.
\end{equation*}
For a square-code logical-$\overline Z$ readout, the two logical hypotheses are the even and odd Gaussian combs in Eq.~\eqref{eq:square_loss_amplification_axis_weight} after conversion to physical quadrature units.  The retained analog statistic is their log-likelihood ratio, or its closest-coset approximation.  Fourier-dual, hexagonal, and multimode readouts are obtained by changing the measurement projection, lattice representatives, and covariance in Eqs.~\eqref{eq:local_gkp_mld} and \eqref{eq:local_closest_coset_weight}; no separate decoding principle is introduced.

\section{Graph codes as compiled GKP lattices}
\label{sec:graph_gkp_codes}

The graph-code layer is constructed inside the finite logical quotient of the product GKP code.  This is the point at which the binary graph formalism and the lattice formalism meet.  

\subsection{Outer graph code and Pauli embedding}

For a simple graph $G=(V,E)$, the associated graph state is the simultaneous $+1$ eigenstate of
\begin{equation}
K_v=X_v\prod_{w\in N(v)}Z_w,
\label{eq:graph_stabilizers}
\end{equation}
where $N(v)$ denotes the neighborhood of vertex $v$.  A phase-free $n$-qubit Pauli product is represented by
\begin{equation*}
u=(x\mid z)\in\F_2^{2n},
\end{equation*}
with binary symplectic pairing
\begin{equation}
[u,v]_{\DV}=x\!\cdot\!z'+z\!\cdot\!x'\pmod2.
\label{eq:symplectic_pairing}
\end{equation}

A graph-code progenitor contains one distinguished input vertex $\iota$ and $n$ code vertices.  Let $S_0\subset\F_2^{2(n+1)}$ be the stabilizer space of the progenitor, and let $\pi_{\rm in}$ and $\pi_c$ project respectively onto the input coordinate and the code coordinates.  For each one-qubit Pauli label $\alpha\in\F_2^2$, define the input fiber
\begin{equation}
F_\alpha=\{\widetilde s\in S_0:\pi_{\rm in}(\widetilde s)=\alpha\},
\qquad
\mathcal L_\alpha=\pi_c(F_\alpha).
\label{eq:fibers}
\end{equation}
When $\pi_{\rm in}|_{S_0}$ has rank two, the identity-input fiber produces the outer binary stabilizer code
\begin{equation*}
Q_G:=\mathcal L_0\subset\F_2^{2n},
\qquad \dim Q_G=n-1,
\end{equation*}
and each nonidentity fiber is an affine logical class
\begin{equation*}
\mathcal L_a=\ell_a+Q_G,
\qquad a\in\{X,Y,Z\}.
\end{equation*}
The space $Q_G$ is isotropic, every $\ell_a$ commutes with $Q_G$, and representatives may be chosen with $[\ell_X,\ell_Z]_{\DV}=1$.  These are the standard graph-code/stabilizer-fiber facts following from rank--nullity and isotropy of the progenitor stabilizer space \cite{gottesman1997,schlingemann2001,hein2006}; here they only fix the projection notation used below.

\paragraph{Product GKP quotient.}
\label{subsec:explicit_graph_gkp_compilation}

For graph vertex $i$, choose a one-qubit GKP lattice $L_i=L(M_i)\subset\R^{2m_i}$ and local logical displacement representatives $e_i,f_i\in L_i^\perp$ as in Eq.~\eqref{eq:local_e_f}.  Let $\iota_i:\R^{2m_i}\hookrightarrow V_{\rm loc}$ be the canonical block embedding and write
\begin{equation*}
E_i:=\iota_i(e_i),\qquad F_i:=\iota_i(f_i).
\end{equation*}
In blockwise quadrature order,
\begin{align*}
V_{\rm loc}&=\bigoplus_{i=1}^n\R^{2m_i},
&L_{\rm loc}&=\bigoplus_{i=1}^nL_i,\\
J_{\rm loc}&=\bigoplus_{i=1}^nJ_{2m_i},
&m_{\rm tot}&=\sum_i m_i.\end{align*}
The finite logical displacement module of the uncompiled product code is
\begin{equation*}
\mathsf H_{\rm loc}:=L_{\rm loc}^{\perp}/L_{\rm loc}
\simeq\bigoplus_{i=1}^n\F_2^2,
\end{equation*}
with quotient map
\begin{equation*}
\pi_{\rm loc}:L_{\rm loc}^{\perp}\longrightarrow\mathsf H_{\rm loc}.
\end{equation*}
The induced binary pairing is
\begin{equation*}
\bigl\langle[\xi],[\eta]\bigr\rangle_{\rm loc}
:=2\xi^TJ_{\rm loc}\eta\pmod2.
\end{equation*}
It is well defined because changing either representative by a vector of $L_{\rm loc}$ changes the symplectic product by an integer.

The canonical binary-to-logical map is the quotient isomorphism
\begin{equation*}
\overline\Phi:\F_2^{2n}\longrightarrow\mathsf H_{\rm loc},
\qquad
\overline\Phi(x\mid z)
=\left[\sum_i x_iE_i+\sum_i z_iF_i\right].
\end{equation*}
A concrete phase-space representative requires a section of $\pi_{\rm loc}$.  We fix the lift
\begin{equation*}
\Phi(u)=\sum_i x_iE_i+\sum_i z_iF_i\in L_{\rm loc}^{\perp}.
\end{equation*}
The quotient map $\overline\Phi$ is linear, whereas the chosen lift is linear only modulo the base lattice:
\begin{equation*}
\Phi(u\oplus v)-\Phi(u)-\Phi(v)\in L_{\rm loc}.
\end{equation*}
Moreover,
\begin{equation}
2\Phi(u)^TJ_{\rm loc}\Phi(v)
=[u,v]_{\DV}\pmod2.
\label{eq:Phi_symplectic}
\end{equation}
Thus $\overline\Phi$ identifies the binary Pauli module with the finite Heisenberg--Weyl quotient of the product GKP code; $\Phi$ is only a chosen lift used to write explicit displacement vectors and phases.

\subsection{Graph--GKP lattice compilation and two-block example}

The compiled graph--GKP lattice is the pullback of the outer binary code through the local logical quotient map:
\begin{equation}
L_G:=\pi_{\rm loc}^{-1}\!\left(\overline\Phi(Q_G)\right).
\label{eq:static_lattice_pullback}
\end{equation}
Equivalently,
\begin{equation*}
L_G=L_{\rm loc}+\spanZ\{\Phi(q):q\in Q_G\}.
\end{equation*}
Let $H_G$ be a binary generator matrix for $Q_G$, and let $R\in\R^{2n\times2m_{\rm tot}}$ have rows $(E_1^T,\ldots,E_n^T,F_1^T,\ldots,F_n^T)$ in the same binary-coordinate order.  An explicit overcomplete generator array is
\begin{equation}
\widetilde M_G=
\begin{pmatrix}
M_{\rm loc}\\ H_GR
\end{pmatrix},
\qquad
M_{\rm loc}=\bigoplus_iM_i.
\label{eq:overcomplete_MG}
\end{equation}
For square single-mode blocks, this becomes the standard binary-lifted square-lattice formula:
\begin{align*}
\widetilde M_G^{\square}
&=\frac{1}{\sqrt2}
\begin{pmatrix}2I_{2n}\\ H_G\end{pmatrix},\\
L_G^{\square}
&=\{v\in\R^{2n}:\sqrt2\,v\pmod2\in Q_G\}.\end{align*}
The hexagonal construction is obtained by the corresponding blockwise Gaussian code switch.  Hence the square/hexagonal comparison changes the local GKP basis and metric but leaves the outer graph code $Q_G$ unchanged.

A full-rank row basis $M_G$ is obtained by Smith/Hermite reduction of the integer module defined by Eq.~\eqref{eq:overcomplete_MG}, transporting the lattice phase sector with every basis change.  Since $Q_G$ is isotropic, $L_G$ is symplectically integral.  Faithfulness of $\overline\Phi$ gives
\begin{equation*}
L_G/L_{\rm loc}\simeq Q_G,
\qquad
[L_G:L_{\rm loc}]=2^{n-1}.
\end{equation*}
Because $|\det M_{\rm loc}|=2^n$ for $n$ local GKP qubits,
\begin{equation*}
|\det M_G|=\frac{2^n}{2^{n-1}}=2,
\end{equation*}
so the compiled lattice encodes one qubit.  Equivalently,
\begin{equation*}
L_G^\perp/L_G\simeq Q_G^\perp/Q_G\simeq\F_2^2.
\end{equation*}
Logical displacement representatives may be chosen as
\begin{equation*}
E_G:=\Phi(\ell_X),\qquad F_G:=\Phi(\ell_Z)
\end{equation*}
modulo $L_G$.  The ideal and finite-energy graph--GKP states are defined by
\begin{align*}
e^{i\phi_G(\lambda)}D(\lambda)|\mu_G\rangle
&=|\mu_G\rangle &&(\lambda\in L_G),\\
D(F_G)|\mu_G\rangle&=(-1)^\mu|\mu_G\rangle,\\
D(E_G)|\mu_G\rangle&=|\mu\oplus1\rangle_G,\\
|\widetilde\mu_{G,\beta}\rangle
&=\mathcal N_{G,\mu,\beta}e^{-\beta\hat N_{\rm tot}}|\mu_G\rangle.\end{align*}
The phase function $\phi_G$ extends the local lattice phase sector to the additional graph-code generators; Appendix~\ref{app:signed_compilation} gives the carry-aware construction.

\paragraph{Two-block compilation example.}
\label{subsec:worked_compilation}

Let the input vertex $\iota$ and code vertices $1,2$ form a triangle.  The progenitor stabilizers are
\begin{equation*}
K_\iota=X_\iota Z_1Z_2,\qquad
K_1=Z_\iota X_1Z_2,\qquad
K_2=Z_\iota Z_1X_2.
\end{equation*}
The phase-free projection of $K_1K_2$ generates the identity-input fiber, while $K_\iota$ and $K_1$ supply input-$X$ and input-$Z$ representatives.  Hence
\begin{equation*}
Q_G=\spanF\{Y_1Y_2\},
\qquad
\ell_X=Z_1Z_2,
\qquad
\ell_Z=X_1Z_2.
\end{equation*}
Replacing vertices $1,2$ by square GKP qubits and ordering coordinates as $(q_1,p_1,q_2,p_2)$ gives
\begin{align*}
M_{\rm loc}&=\sqrt2 I_4,\\
\Phi(Y_1Y_2)&=\frac{1}{\sqrt2}(1,1,1,1)^T.\end{align*}
Thus
\begin{equation*}
\widetilde M_G=
\begin{pmatrix}
\sqrt2&0&0&0\\
0&\sqrt2&0&0\\
0&0&\sqrt2&0\\
0&0&0&\sqrt2\\
1/\sqrt2&1/\sqrt2&1/\sqrt2&1/\sqrt2
\end{pmatrix}.
\end{equation*}
The additional row gives index two, so $|\det M_G|=2$.  The remaining logical displacement representatives are
\begin{align*}
E_G&=\Phi(\ell_X)=\frac{1}{\sqrt2}(0,1,0,1)^T,\\
F_G&=\Phi(\ell_Z)=\frac{1}{\sqrt2}(1,0,0,1)^T,\end{align*}
with $E_G^TJ_{\rm loc}F_G=-1/2\equiv1/2\pmod\Z$.  This example displays the same pullback construction used for every numerical graph.

\subsection{Decoder-conditioned branch restrictions as quotient submodules}
\label{subsec:branch_geometry}

The local GKP interface is completed before the destructive outer graph measurement. A realized graph branch is therefore specified by an actual outer-measurement word and a decoder-generated availability record,
\begin{equation*}
\mathbf q_b=(q_{b,1},\ldots,q_{b,n}),
\qquad
\mathbf a_b=(a_{b,1},\ldots,a_{b,n}).
\end{equation*}
For the conservative interface, $a_{b,i}\in\{0,1\}$. For the basis-resolved interface, the graph policy first consults $(a_i^X,a_i^Y,a_i^Z)$ from Eq.~\eqref{eq:basis_availability}; if the proposed basis is rejected, the refreshed block is discarded, $a_{b,i}$ is set to zero, and another representative is sought. The symbol $\locera$ denotes this local-decoder outcome, not direct observation that the oscillator was absent.

After the basis and availability decisions are fixed, define the local compatibility subspace
\begin{equation}
\mathsf C_i(q_i,a_i)=
\begin{cases}
\F_2^2,&a_i=1,\ q_i=I,\\
\spanF\{m(q_i)\},&a_i=1,\ q_i\in\{X,Y,Z\},\\
\{0\},&a_i=0,
\end{cases}
\label{eq:local_compatibility}
\end{equation}
where $m(X)=(1,0)$, $m(Y)=(1,1)$, and $m(Z)=(0,1)$. The global compatibility and measurement-supported submodules are
\begin{align*}
\mathsf C_b&:=\bigoplus_{i=1}^n\mathsf C_i(q_{b,i},a_{b,i}),\\
\mathsf R_b&:=\{u\in\mathsf C_b:q_{b,i}=I\Rightarrow u_i=0\}.\end{align*}
The compatible and measured outer-check spaces are
\begin{equation}
Q_b:=Q_G\cap\mathsf C_b,
\qquad
Q_b^{\meas}:=Q_G\cap\mathsf R_b,
\label{eq:branch_Q_spaces}
\end{equation}
and the logical fibers are the affine intersections
\begin{align*}
\mathcal L_{a,b}&:=(\ell_a+Q_G)\cap\mathsf C_b,\\
\mathcal L_{a,b}^{\meas}&:=(\ell_a+Q_G)\cap\mathsf R_b.\end{align*}

Because the intersection of an affine coset with a linear subspace is either empty or an affine coset of the intersected translation space, any nonempty branch fiber obeys
\begin{align}
\mathcal L_{a,b}&=q_{a,b}+Q_b,
& q_{a,b}&\in\mathcal L_{a,b},
\label{eq:branch_fiber_coset}\\
\mathcal L_{a,b}^{\meas}&=q_{a,b}^{\meas}+Q_b^{\meas},
& q_{a,b}^{\meas}&\in\mathcal L_{a,b}^{\meas}.
\label{eq:real_branch_fiber_coset}
\end{align}

A branch realizes an indirect Pauli measurement precisely when $\mathcal L_{a,b}^{\meas}$ is nonempty. Its reconstruction lattice is the same quotient pullback applied to the restricted outer code,
\begin{equation}
L_{G,b}:=\pi_{\rm loc}^{-1}(\overline\Phi(Q_b))
=L_{\rm loc}+\spanZ\{\Phi(q):q\in Q_b\}.
\label{eq:branch_lattice}
\end{equation}

\begin{theorem}[Branch-compatible graph--GKP compilation]
\label{thm:branch_compilation}
For $u,v\in\mathsf C_b$,
\begin{equation*}
u\oplus v\in Q_b
\quad\Longleftrightarrow\quad
\Phi(u)+L_{G,b}=\Phi(v)+L_{G,b}.
\end{equation*}
Consequently $\mathcal L_{a,b}$ is represented by the single displacement coset $\Phi(q_{a,b})+L_{G,b}$.
\end{theorem}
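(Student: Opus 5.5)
The argument runs entirely through the quotient isomorphism $\overline\Phi$ and the pullback definition of $L_{G,b}$ in Eq.~\eqref{eq:branch_lattice}. First we record that $L_{G,b}$ is a lattice containing $L_{\rm loc}$ and that $L_{G,b}/L_{\rm loc}=\overline\Phi(Q_b)$ inside $\mathsf H_{\rm loc}=L_{\rm loc}^\perp/L_{\rm loc}$. Since $Q_b=Q_G\cap\mathsf C_b$ is a linear subspace, $\overline\Phi(Q_b)$ is a subgroup. Its preimage under the group homomorphism $\pi_{\rm loc}$ is therefore a subgroup containing $\ker\pi_{\rm loc}=L_{\rm loc}$. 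The identity $\pi_{\rm loc}^{-1}(\overline\Phi(Q_b))=L_{\rm loc}+\spanZ\{\Phi(q):q\in Q_b\}$ follows because $\pi_{\rm loc}\circ\Phi=\overline\Phi$ and the $\Phi(q)$ lift a generating set.

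For the equivalence, take $u,v\in\mathsf C_b$. The key fact is the lift relation $\Phi(u\oplus v)-\Phi(u)-\Phi(v)\in L_{\rm loc}$, together with $2\Phi(v)\in L_{\rm loc}$. The latter holds since $2e_i,2f_i\in L_i$, so $-\Phi(v)\equiv\Phi(v)$ modulo $L_{\rm loc}$. Hence $\Phi(u)-\Phi(v)\equiv\Phi(u\oplus v)\pmod{L_{\rm loc}}$, that is, $\pi_{\rm loc}(\Phi(u)-\Phi(v))=\overline\Phi(u\oplus v)$. Now
\[
\Phi(u)+L_{G,b}=\Phi(v)+L_{G,b}
\iff \Phi(u)-\Phi(v)\in L_{G,b}
\iff \overline\Phi(u\oplus v)\in\overline\Phi(Q_b).
\]
The last condition is equivalent to $u\oplus v\in Q_b$ because $\overline\Phi$ is injective (indeed an isomorphism $\F_2^{2n}\to\mathsf H_{\rm loc}$). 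This gives both directions at once. The hypothesis $u,v\in\mathsf C_b$ is used only so that the statement concerns branch-compatible words. The equivalence itself holds for all $u,v$, because $Q_b\subseteq\F_2^{2n}$ and injectivity is global.

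For the consequence, assume $\mathcal L_{a,b}$ is nonempty and use Eq.~\eqref{eq:branch_fiber_coset}, $\mathcal L_{a,b}=q_{a,b}+Q_b$. Every element $w=q_{a,b}\oplus q$ with $q\in Q_b$ lies in $\mathsf C_b$, since $\mathsf C_b$ is a subspace and both summands lie in it. Moreover $w\oplus q_{a,b}=q\in Q_b$, so the forward direction gives $\Phi(w)+L_{G,b}=\Phi(q_{a,b})+L_{G,b}$. Conversely, distinct branch fibers, or compatible words outside $\mathcal L_{a,b}$, map to different cosets by the reverse direction. The fiber is therefore represented by exactly one displacement coset.

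The main obstacle is the nonlinearity of the chosen lift $\Phi$. It must be handled carefully through the two congruences modulo $L_{\rm loc}$, including the sign issue resolved by $2\Phi(v)\in L_{\rm loc}$. Everything else is elementary group theory of preimages. The phase function $\phi_G$ plays no role, since the statement concerns displacement cosets only.
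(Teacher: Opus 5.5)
Your proof is correct and follows the paper's argument. You use the same ingredients: the congruence $\Phi(u)-\Phi(v)\equiv\Phi(u\oplus v)\pmod{L_{\rm loc}}$ from the lift relation and $2\Phi(v)\in L_{\rm loc}$, the pullback form $L_{G,b}=\pi_{\rm loc}^{-1}(\overline\Phi(Q_b))$, injectivity of $\overline\Phi$, and then $\mathcal L_{a,b}=q_{a,b}+Q_b$ for the consequence. The only differences are cosmetic: you run both directions through a single chain of equivalences, and you note that the hypothesis $u,v\in\mathsf C_b$ is not actually needed for the equivalence.
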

A proof, using only the quotient pullback and faithfulness of $\overline\Phi$, is given in Appendix~\ref{app:branch_compilation}.

The branch lattice is used only for observable reconstruction and signed evidence. Physical errors remain periodic under the static lattice $L_G$. The complete decoder record is
\begin{equation*}
D_b=\left(D_{1:n}^{\rm loc},\mathbf a_b,
 y_b^{\rm out},r_b^{\rm out}\right),
\end{equation*}
where all attempted local GKP recovery records are retained, including rejected records. The observation map has the causal block form
\begin{equation}
P_b^{\rm tot}=
\begin{pmatrix}
P^{\rm loc}\\
P_b^{\rm out}(\mathbf a_b)
\end{pmatrix}.
\label{eq:total_observation_map}
\end{equation}
The local rows are present before branch selection. A local erasure suppresses only the future outer-measurement rows on that block. Pure attenuation changes the covariance of the local record through Eq.~\eqref{eq:loss_amplification_covariance}; it does not delete an observation row.

The availability mask and branch are deterministic functions of the same analog records used for error inference,
\begin{equation}
b=\mathcal B_{\Gamma}(D_{1:n}^{\rm loc}).
\label{eq:branch_policy}
\end{equation}
Consequently their probability belongs to the physical likelihood. It is generally incorrect to multiply an independently sampled mask into the simulation or to assign an erased block a unit likelihood factor.

The three erasure events used later are now unambiguous:
\begin{equation*}
\begin{aligned}
\locera&:\ \text{local GKP abstention},\\
\accera&:\ \mathcal L_{a,b}^{\meas}=\varnothing,\\
\confera&:\ \Gamma_b^{\rm fr}<\Gamma_{\rm th}^{\rm fr}.
\end{aligned}
\end{equation*}
Figure~\ref{fig:overview} summarizes the static code, the local GKP interface, and the decoder-conditioned branch.

\begin{figure*}[!t]
\centering
\resizebox{0.98\textwidth}{!}{\begin{tikzpicture}[
  >=Latex,
  every node/.style={font=\footnotesize},
  box/.style={draw=black!55,rounded corners=2pt,very thick,align=center,minimum height=13mm,text width=34mm,inner sep=4pt},
  arrow/.style={-{Latex[length=2.4mm]},very thick,draw=black!65},
  soft/.style={-{Latex[length=2.2mm]},gkppurple,dashed,very thick}
]
\begin{scope}[xshift=0mm,yshift=25mm]
  \coordinate (g0) at (0,0.7); \coordinate (g1) at (1.15,0.7);
  \coordinate (g2) at (1.65,1.45); \coordinate (g3) at (1.65,-0.05);
  \draw[very thick,black!60] (g0)--(g1)--(g2)--(g3)--(g1) (g0)--(g2) (g0)--(g3);
  \node[circle,draw=black!60,fill=gkporange!75,minimum size=6mm,inner sep=0pt] at (g0) {0};
  \node[circle,draw=black!60,fill=gkpteal!70,minimum size=6mm,inner sep=0pt] at (g1) {1};
  \node[circle,draw=black!60,fill=gkpteal!70,minimum size=6mm,inner sep=0pt] at (g2) {2};
  \node[circle,draw=black!60,fill=gkpteal!70,minimum size=6mm,inner sep=0pt] at (g3) {3};
  \node[font=\scriptsize,align=center] at (0.85,-0.55) {graph-code\\progenitor};
\end{scope}
\node[box,fill=gkpblue!10] (noise) at (4.0,3.1) {pure loss $\rightarrow$ gain compensation\\\textbf{one Gaussian displacement channel}};
\node[box,fill=gkpgreen!12] (local) at (8.2,3.1) {local GKP recovery\\$D_i^{\rm loc}$ and $\pi_i^{\rm loc}(u_i\mid D_i)$\\accepted block or $\bot_{\rm loc}$};
\node[box,fill=gkporange!13] (branch) at (12.4,3.1) {graph-level pathfinding\\availability $\mathbf a_b$\\compatible representative};
\node[box,fill=gkppurple!12] (post) at (12.4,0.8) {selected-branch decoder\\$\Pi_b(\sigma,e\mid D_b)$\\frame posterior $\Pi_b^{\rm fr}$};
\node[box,fill=gkpteal!12] (ctrl) at (16.6,0.8) {MBQC controller\\Pauli-frame update\\basis adaptation / confidence};
\draw[arrow] (1.9,3.1)--(noise.west);
\draw[arrow] (noise)--(local);
\draw[arrow] (local)--(branch);
\draw[arrow] (branch)--(post);
\draw[arrow] (post)--(ctrl);
\draw[soft] (local.south) |- node[pos=0.28,left,font=\scriptsize]{retained evidence} (post.west);
\draw[soft] (ctrl.north) |- node[pos=0.25,right,font=\scriptsize]{adaptive requests} (branch.east);
\end{tikzpicture}}
\caption{Graph--GKP decoding pipeline. Pure loss followed by ideal gain compensation produces one additive Gaussian displacement channel. A local GKP recovery converts its continuous syndrome into a Pauli posterior and, at low confidence, a located decoder erasure $\locera$. The resulting availability record restricts the quotient only for graph representative selection. Physical error weights use the full static lattice and the joint selected-branch likelihood, which is then pushed to the MBQC Pauli frame.}
\label{fig:overview}
\end{figure*}

\section{Loss-tolerant logical measurements with graph--GKP codes}
\label{sec:logical_measurements}

After Algorithm~\ref{alg:local_interface}, the graph layer selects each destructive measurement from the basis-resolved availability profile while retaining every local recovery record in the global likelihood.

\subsection{Decoder-conditioned Pauli and equatorial measurements}

Let $a\in\{X,Y,Z\}$ be the requested logical Pauli observable. For a fixed accepted branch choose a reference
\begin{equation*}
q^0_{a,b}\in\mathcal L_{a,b}^{\meas}.
\end{equation*}
Any other measured representative has the form $q_{a,b}=q^0_{a,b}+s_{a,b}$ with $s_{a,b}\in Q_b^{\meas}$. The cocycle-aware product of local outcomes is transported to the fixed reference by subtracting the observed signed stabilizer character. Together with the tracked deterministic frame this gives the raw branch character
\begin{equation*}
\chi_b^{\rm raw}(a;q^0_{a,b})\in\F_2.
\end{equation*}
The final logical inference is independent of the operational representative and of the chosen reference section.

The adaptive policy differs from a hardware-loss decision tree in one essential respect. Before measuring a candidate local factor $P$, the controller queries the local confidence $\Gamma_i^P$. If that basis is accepted, the refreshed block is measured and the outcome is appended to the outer record. If it is rejected, the block is deliberately discarded, the availability entry is set to zero, and the policy searches for another representative. Since the rejected local syndrome was obtained before this decision, it remains part of $D_b$ even though no outer measurement is performed on that block.

This algorithm exposes the correct causal ordering: local GKP recovery and abstention precede graph pathfinding, whereas the global syndrome--frame posterior is evaluated only after the selected outer measurements have been performed.

\paragraph{Equatorial measurements.}

Universal MBQC requires a non-Clifford ingredient, commonly represented by
\begin{equation*}
A(\theta)=X\cos\theta+Y\sin\theta.
\end{equation*}
At the graph-code level, the encoded state is teleported to an output block and the output is measured through a non-Gaussian logical resource. For candidate output $o$, define
\begin{equation*}
\mathsf T_{b,o}
=\{u\in\mathsf C_b:u_i=0\ \text{for every unmeasured }i\neq o\}.
\end{equation*}
The stabilizer-pathfinding candidates are
\begin{align}
\mathcal P_{b,o}:=\bigl\{(q_X,q_Z):{}&q_X\in(\ell_X+Q_G)\cap\mathsf T_{b,o},\nonumber\\
&q_Z\in(\ell_Z+Q_G)\cap\mathsf T_{b,o},\nonumber\\
&[q_{X,o},q_{Z,o}]_1=1,\nonumber\\
&[q_{X,i},q_{Z,i}]_1=0\quad\forall i\neq o\bigr\}.
\label{eq:SPC_pair_set}
\end{align}
A branch exists when at least one output has $\mathcal P_{b,o}\neq\varnothing$ after the basis-specific local confidence tests.

A generic $A(\theta)$ measurement is not a displacement observable. We therefore isolate its non-Gaussianity in an encoded magic or Choi resource and use GKP Bell and modular-homodyne measurements for the remaining conditional circuit. For Pauli frame $X^aZ^b$,
\begin{equation*}
(X^aZ^b)^\dagger A(\theta)(X^aZ^b)
=(-1)^bA((-1)^a\theta).
\end{equation*}

The operator description below permits a noisy encoded resource and a noisy terminal readout. In the numerical panels labeled $A(\theta)$, however, the encoded non-Gaussian resource is treated as supplied ideally and no additional magic-state preparation, distillation, or injection error is assigned. Those panels therefore quantify the graph--GKP transport, branch selection, modular measurements, and syndrome-resolved Pauli-frame adaptation surrounding the terminal $A(\theta)$ interface; they are not end-to-end logical non-Clifford error rates.

Figure~\ref{fig:branch_magic} shows both the branch restriction produced by a local abstention and the magic-assisted equatorial interface. Thus an $X$-frame bit reverses the angle and a $Z$-frame bit flips the reported outcome. The uncertain physical contribution to this frame is obtained from the posterior in Sec.~\ref{subsec:outcome_likelihood}; it must not be replaced by a hard local decision before the teleportation branch is decoded.  The Pauli and equatorial cases share the same decoder-conditioned search and differ only in their terminal interface.

\begin{figure*}[!t]
\centering
\resizebox{0.98\textwidth}{!}{\begin{tikzpicture}[
  >=Latex,
  every node/.style={font=\footnotesize},
  panel/.style={draw=black!28,rounded corners=3pt,fill=black!1},
  qnode/.style={circle,draw=black!65,very thick,
                minimum size=7mm,inner sep=0pt},
  box/.style={draw=black!55,rounded corners=2pt,very thick,
              align=center,text width=27mm,
              minimum height=12mm,inner sep=3pt},
  arrow/.style={-{Latex[length=2.3mm]},
                very thick,draw=black!65}
]

\node[
  panel,
  minimum width=78mm,
  minimum height=45mm,
  anchor=south west
] at (0,0) {};

\node[font=\small\bfseries,align=center] at (3.9,4.05)
  {(a) Branch restriction after local abstention};

\coordinate (n0) at (1.1,2.3);
\coordinate (n1) at (2.7,3.25);
\coordinate (n2) at (4.45,2.8);
\coordinate (n3) at (5.15,1.3);
\coordinate (n4) at (2.9,1.0);

\draw[black!50,thick]
  (n0)--(n1)--(n2)--(n3)--(n4)--(n1)
  (n0)--(n4);

\draw[gkpblue,very thick]
  (n0)--(n1)--(n2);

\draw[gkporange,very thick,dashed]
  (n0)--(n4)--(n3);

\foreach \p/\lab in {n0/0,n1/1,n2/2,n3/3}{
  \node[qnode,fill=gkpteal!60] at (\p) {\lab};
}

\node[qnode,fill=gkporange!75] at (n4) {4};

\draw[gkporange,very thick]
  ($(n4)+(-0.22,-0.22)$)--($(n4)+(0.22,0.22)$);

\draw[gkporange,very thick]
  ($(n4)+(-0.22,0.22)$)--($(n4)+(0.22,-0.22)$);

\node[
  font=\scriptsize,
  align=center,
  text width=68mm
] at (3.9,0.25)
  {blue: selected branch; orange dashed: discarded candidate;
   the rejected syndrome remains in $D_b$.};

\node[
  panel,
  minimum width=120mm,
  minimum height=45mm,
  anchor=south west
] at (8.2,0) {};

\node[font=\small\bfseries,align=center] at (14.2,4.05)
  {(b) Magic-assisted equatorial interface};

\node[box,fill=gkpblue!10] (data) at (10.15,2.95)
  {data block\\$\rho_{\rm out}$};

\node[box,fill=gkporange!13] (magic) at (10.15,1.45)
  {encoded resource\\$\lvert A(\theta)\rangle$};

\node[box,fill=gkpgreen!13] (bell) at (14.15,2.20)
  {GKP Bell\\measurement};

\node[
  box,
  fill=gkppurple!12,
  text width=30mm
] (frame) at (17.75,2.20)
  {frame posterior\\
   $\Pi_b^{\rm fr}(f\mid D_b)$\\
   $\widehat f_b=(\widehat a,\widehat b)$,
   $\Gamma_b^{\rm fr}$};

\draw[arrow] (data.east)--(bell.west);
\draw[arrow] (magic.east)--(bell.west);
\draw[arrow] (bell.east)--(frame.west);

\node[
  box,
  fill=gkpteal!12,
  text width=42mm,
  minimum height=14mm
] (out) at (17.75,0.65)
  {feedforward $X^{\widehat a}Z^{\widehat b}$\\
   adapted angle
   $\theta_{\rm phys}=(-1)^{\widehat a}\theta$\\
   corrected bit
   $m_{\rm corr}=m_{\rm raw}\oplus\widehat b$};

\draw[arrow] (frame.south)--(out.north);

\end{tikzpicture}}
\caption{Decoder-conditioned branch restriction and non-Pauli terminal interface. (a) A locally ambiguous block is converted to $\locera$ before the outer graph measurements, so candidates requiring that block are removed, while its rejected local syndrome remains in the global record. (b) An encoded non-Gaussian resource and a GKP Bell measurement implement the equatorial interface. The posterior frame controls the future angle and reported bit. The numerical $A(\theta)$ benchmarks treat this terminal resource ideally and evaluate the surrounding transport and frame layer.}
\label{fig:branch_magic}
\end{figure*}

Algorithm~\ref{alg:logical_branch} gives one branch-selection procedure for both Pauli and equatorial logical measurements.

\begin{prxalgorithm}{Decoder-conditioned logical measurement branch}
\label{alg:logical_branch}
\footnotesize
\begin{algorithmic}[1]
\Require Requested task $T\in\{a,A(\theta)\}$; graph-code data; local GKP posteriors and basis-resolved confidences; precompiled pathfinding policy; calibrated outer readout model; declared terminal non-Gaussian resource model (ideal in the reported $A(\theta)$ benchmarks).
\Ensure $\accera$, or accepted branch data $b$, a logical representative or teleportation pair, and the retained measurement record.
\State Initialize the availability and basis record from the accepted local GKP outputs; retain every local recovery record, including rejected ones.
\If{$T=a\in\{X,Y,Z\}$}
  \State Set the viable set to $(\ell_a+Q_G)\cap\mathsf C_b$ and search for a representative in $\mathsf R_b$.
\Else
  \State For each candidate output $o$, construct $\mathcal P_{b,o}$ from Eq.~\eqref{eq:SPC_pair_set} and search for a valid teleportation pair.
\EndIf
\While{no valid representative or teleportation pair survives}
  \State Query the next required local Pauli basis and its confidence from the precompiled policy.
  \If{the required basis is rejected}
    \State Discard the refreshed block, set its availability to zero, record $\locera$, retain its syndrome, and update the viable sets.
  \Else
    \State Perform the selected outer modular measurement and append its signed discrete and analog record.
  \EndIf
  \If{all viable representatives or outputs are exhausted} \State \Return $\accera$ with the retained records. \EndIf
\EndWhile
\If{$T=A(\theta)$}
  \State Perform the accepted non-Gaussian output interface and retain the conditional operators $E_{m,\theta}^{(\sigma,e,b)}$.
\Else
  \State Choose a fixed reference $q^0_{a,b}$ and construct the reference-normalized raw character $\chi_b^{\rm raw}$.
\EndIf
\State Assemble $D_b$, the selected observation map $P_b^{\rm tot}$, and the branch-transfer data.
\State \Return the accepted branch and complete retained record.
\end{algorithmic}
\end{prxalgorithm}

\paragraph{Graph-level observation map.}\label{subsec:modular_homodyne_graph}

The local recovery rows $P^{\rm loc}$ are present for every attempted block.  If $S_b$ is the net Gaussian transformation before the selected outer detections, $N_b^{\rm out}$ contains the candidate outer displacement measurements, and $\Pi_b^{\rm out}(\mathbf a_b)$ keeps only those actually performed, then
\begin{equation*}
P_b^{\rm out}(\mathbf a_b)
=\Pi_b^{\rm out}(\mathbf a_b)N_b^{\rm out}J_{\rm loc}S_b.
\end{equation*}
Together with Eq.~\eqref{eq:total_observation_map}, this gives the retained residual $r_b=\operatorname{red}_{\mathcal F_b}(P_b^{\rm tot}\epsilon+n_b)$ and covariance
\begin{equation*}
\Sigma_{b,\obs}
=P_b^{\rm tot}\Sigma_\epsilon(P_b^{\rm tot})^T
+\Sigma_{b,\det}.
\end{equation*}
Attenuation changes $\Sigma_\epsilon$ but removes no row.  A decoder erasure suppresses only a future outer row; an unretained record is marginalized over its calibrated acceptance or rejection region.

\section{Measurement-based analog error correction from the GKP quotient}
\label{sec:analog_qec}

The graph front end identifies an accessible logical representative and its signed raw value.  The remaining task is the standard error-correction decomposition of the outer binary code, followed by the GKP logical-coset likelihood and a branch-dependent pushforward.  

\subsection{Outer and observed syndrome characters}
\label{subsec:syndrome_sectors}

The coordinate-free syndrome of a physical Pauli label $u\in\F_2^{2n}$ is the character
\begin{equation*}
\begin{aligned}
\syn_G(u)&\in Q_G^*:=\operatorname{Hom}(Q_G,\F_2),\\
\bigl(\syn_G(u)\bigr)(q)&=[q,u]_{\DV}.
\end{aligned}
\end{equation*}
Choose an ordered basis of $Q_G$ and stack it into a full-row-rank matrix $H_G$.  With
\begin{equation*}
J_{\DV}=\begin{pmatrix}0&I_n\\I_n&0\end{pmatrix}
\quad\text{over }\F_2,
\end{equation*}
the coordinate vector of the same character is
\begin{equation*}
\sigma=\syn_G(u)=H_GJ_{\DV}u^T\in\F_2^{n-1}.
\end{equation*}
A binary pure-error section is a map
\begin{equation*}
r_G:Q_G^*\longrightarrow\F_2^{2n},
\qquad
\syn_G(r_G(\sigma))=\sigma,
\qquad r_G(0)=0.
\end{equation*}
Its phase-space lift is
\begin{equation*}
\eta_G(\sigma):=\Phi(r_G(\sigma))\in L_{\rm loc}^{\perp}.
\end{equation*}
This notation separates the measured syndrome from a chosen pure error.  The binary section need not be linear; changing it merely relabels residual logical classes.

Choose representatives $\ell_e$ of $Q_G^\perp/Q_G$, with $e\in\F_2^2\simeq\{I,X,Z,Y\}$.  The standard stabilizer decomposition is
\begin{equation}
\F_2^{2n}
=\bigsqcup_{\sigma\in Q_G^*}
 \bigsqcup_{e\in\F_2^2}
 \bigl(r_G(\sigma)+\ell_e+Q_G\bigr).
\label{eq:syndrome_logical_partition}
\end{equation}
Thus every physical error is specified by an outer syndrome character and a residual logical class.  Equation~\eqref{eq:syndrome_logical_partition} is the standard stabilizer syndrome/normalizer decomposition \cite{gottesman1997}.  Changing the pure-error section only relabels $e$ within each fixed $\sigma$ and therefore leaves every physical coset and its controller pushforward unchanged.

\paragraph{Observed-syndrome restriction.}

A destructive branch generally reveals only the restriction of the full syndrome to the checks that were actually measured, that is, to the measured-check subspace $Q_b^{\meas}=Q_G\cap\mathsf R_b$ of Eq.~\eqref{eq:branch_Q_spaces}.  Let $\iota_b:Q_b^{\meas}\hookrightarrow Q_G$ be the inclusion.  The abstract observed-syndrome map is the dual restriction
\begin{equation*}
\operatorname{res}_b:=\iota_b^*:Q_G^*\longrightarrow(Q_b^{\meas})^*.
\end{equation*}
The signed measurement record produces an observed character
\begin{equation*}
z_b\in(Q_b^{\meas})^*.
\end{equation*}
After choosing a basis $q_{b,k}$ of $Q_b^{\meas}$ and writing $q_{b,k}=c_{b,k}H_G$, the matrix with rows $c_{b,k}$ is denoted by $C_b$.  It is only the coordinate matrix of $\operatorname{res}_b$.

Linearity of the stabilizer character immediately gives, for an ideal signed record generated by physical Pauli error $u$,
\begin{equation}
z_b=\operatorname{res}_b\!\left(\syn_G(u)\right)
=C_b\syn_G(u),
\label{eq:observed_syndrome_restriction}
\end{equation}
and hence
\begin{equation}
\Sigma_b
=\{\sigma\in Q_G^*: \operatorname{res}_b(\sigma)=z_b\}
=\{\sigma:C_b\sigma=z_b\}.
\label{eq:Sigma_b}
\end{equation}

The signs in $z_b$ are accumulated with the Hermitian Pauli convention of Appendix~\ref{app:signed_compilation}, including local $Y$ phases and the tracked Pauli frame.  When cell bits are noisy, the hard restriction is replaced by a calibrated likelihood on $(Q_b^{\meas})^*$; it is not multiplied independently of the analog data unless that conditional independence is part of the readout model.

Figure~\ref{fig:observable_error_layers} emphasizes the central separation.  The submodules $\mathsf C_b$ and $\mathsf R_b$ constrain only the observable and check representatives reconstructed from the branch.  A physical conjugate error need not lie in either submodule.

\begin{figure*}[!t]
\centering
\resizebox{0.96\textwidth}{!}{\begin{tikzpicture}[
  >=Latex,
  every node/.style={font=\footnotesize},
  box/.style={draw=black!55,rounded corners=2pt,very thick,align=center,text width=61mm,minimum height=11mm,inner sep=4pt},
  arrow/.style={-{Latex[length=2.4mm]},very thick,draw=black!65},
  cross/.style={-{Latex[length=2.2mm]},gkppurple,dashed,very thick},
  paneltitle/.style={font=\bfseries\small,align=center},
  question/.style={font=\scriptsize,align=center,text width=68mm}
]
\node[paneltitle] at (3.6,5.1) {(a) Observable compatibility};
\node[question] at (3.6,4.55) {Which representatives remain measurable after the availability restriction?};
\node[box,fill=gkpblue!11] (c) at (3.6,3.65) {branch compatibility module $\mathsf C_b$};
\node[box,fill=gkpgreen!12] (r) at (3.6,2.25) {measurement-supported module $\mathsf R_b$\\and restricted check character};
\node[box,fill=gkppurple!12] (p) at (3.6,0.85) {observable reconstruction and output map $P_b$};
\draw[arrow] (c)--(r); \draw[arrow] (r)--(p);
\node[paneltitle] at (12.0,5.1) {(b) Physical-error compatibility};
\node[question] at (12.0,4.55) {Which global sectors remain plausible for the retained analog record?};
\node[box,fill=gkpblue!11] (s) at (12.0,3.65) {static outer-code sectors $r_G(\sigma)+\ell_e+Q_G$};
\node[box,fill=gkpgreen!12] (l) at (12.0,2.25) {compiled lattice cosets $\mathcal C_{\sigma,e}=\tau_{\sigma,e}+L_G$};
\node[box,fill=gkporange!13] (w) at (12.0,0.85) {selected-branch likelihood and posterior weights};
\draw[arrow] (s)--(l); \draw[arrow] (l)--(w);
\draw[cross] (w.south) -- ++(0,-0.55) -| (p.south);
\node[font=\scriptsize,fill=white,inner sep=1pt] at (7.8,-0.15) {push forward only after sector scoring};
\end{tikzpicture}}
\caption{Observable compatibility is not physical-error compatibility.  The branch word $b$ restricts the quotient modules used to reconstruct the observable and the measured syndrome character.  Physical errors are classified by the full outer-code partition $r_G(\sigma)+\ell_e+Q_G$, compiled into static $L_G$ cosets, and only then pushed forward through $P_b$.}
\label{fig:observable_error_layers}
\end{figure*}

\subsection{Syndrome-resolved sectors and selected-branch likelihood}
\label{subsec:branch_likelihood}

For each syndrome--logical pair define the binary affine sector
\begin{equation*}
U_{\sigma,e}:=r_G(\sigma)+\ell_e+Q_G
\end{equation*}
and its phase-space shift
\begin{equation*}
\tau_{\sigma,e}:=\eta_G(\sigma)+\Phi(\ell_e).
\end{equation*}
The physical displacement sector is the static coset
\begin{equation*}
\mathcal C_{\sigma,e}=\tau_{\sigma,e}+L_G.
\end{equation*}
Using $\eta_G(\sigma)+\Phi(\ell_e)$ rather than $\Phi(r_G(\sigma)+\ell_e)$ makes the role of the chosen lift explicit; the two expressions differ at most by a base-lattice carry and therefore define the same $L_G$ coset.

\paragraph{Complete-syndrome specialization.}
If every compiled stabilizer syndrome is available, the exact and closest-coset rules are obtained directly from Eqs.~\eqref{eq:local_gkp_mld} and \eqref{eq:local_gkp_closest_coset} by replacing $L$ with $L_G$ and using a full-rank generator $M_G$.  Writing
\begin{align*}
s_G(\epsilon)&=M_GJ_{\rm loc}\epsilon\pmod1,\\
\eta_{M_G}(s_G)&=(M_GJ_{\rm loc})^{-1}\widetilde s_G,
\end{align*}
the implemented sector distance is
\begin{align*}
\Delta_{G,e}(\lambda)
&:=\eta_{M_G}(s_G)+\xi_e^\perp+\lambda,\\
D_G^{\rm CC}(e\mid s_G)
&=\min_{\lambda\in L_G}
\Delta_{G,e}(\lambda)^T\Sigma_G^{-1}\Delta_{G,e}(\lambda).
\end{align*}
A destructive graph branch reveals only a projection of this comb, which leads to the quotient problem below.

\paragraph{Selected-branch likelihood.}
\label{subsec:selected_branch_likelihood}

By Eq.~\eqref{eq:branch_policy}, the availability mask and branch are deterministic functions of the retained local recovery records. The exact selected-event channel is therefore
\begin{equation}
p_b(D_b,b\mid\sigma,e)
=\mathbf1[b=\mathcal B_{\Gamma}(D_{1:n}^{\rm loc})]
\,p_{\LA}(D_b\mid\sigma,e),
\label{eq:selected_branch_channel}
\end{equation}
where $p_{\LA}$ is induced by Eq.~\eqref{eq:loss_amplification_total_covariance}, the local instruments, and the selected outer measurements. The indicator displays the selected-event conditioning explicitly; once $D_{1:n}^{\rm loc}$ is stored, it is fixed rather than sampled independently.

For a factorized local readout, define the acceptance and rejection factors
\begin{align}
H_i^{\rm acc}(u_i;D_i)
&=\mathbf1[D_i\in\mathcal A_i]W_i^{\rm loc}(u_i;D_i),\nonumber\\
H_i^{\rm era}(u_i;D_i)
&=\mathbf1[D_i\in\mathcal E_i]W_i^{\rm loc}(u_i;D_i),
\label{eq:Hera_retained}
\end{align}
when a rejected syndrome is retained. If only the erasure flag is retained, the correct factor is instead
\begin{equation}
H_i^{\rm era}(u_i)
=\int_{\mathcal E_i}W_i^{\rm loc}(u_i;D)\,dD.
\label{eq:Hera_integrated}
\end{equation}
Setting an erased-site factor to one is valid only if this sector-dependent rejection probability has already been absorbed into a branch prior.

Let $H_G$ have $r=n-1$ rows and define
\begin{equation*}
u_b(\sigma,e,s)
=r_G(\sigma)\oplus\ell_e\oplus sH_G,
\qquad s\in\F_2^{r}.
\end{equation*}
For accepted block $i$, let $\Theta_{i,b}^{\rm out}$ denote the wrapped likelihood of the outer modular measurement actually performed, and abbreviate $u_{b,i}=[u_b(\sigma,e,s)]_i$. The exact factorized binary-lifted score is
\begin{align}
\widetilde p_b(D_b\mid\sigma,e)
={}&\mathbf1[C_b\sigma=z_b]\nonumber\\
&\times\sum_{s\in\F_2^{r}}
\left[\prod_{i:a_{b,i}=1}
H_i^{\rm acc}(u_{b,i};D_i^{\rm loc})\right]
\nonumber\\
&\times\left[\prod_{i:a_{b,i}=0}
H_i^{\rm era}(u_{b,i};D_i^{\rm loc})\right]
\nonumber\\
&\times\left[\prod_{i:a_{b,i}=1}
\Theta_{i,b}^{\rm out}(u_{b,i};D_i^{\rm out})\right].
\label{eq:loss_amplification_factorized_score}
\end{align}
There is no outer factor on a locally erased block because no destructive outer measurement is performed there. Its local rejection factor remains as evidence.

For correlated multimode noise, the products in Eq.~\eqref{eq:loss_amplification_factorized_score} are replaced by a single joint wrapped likelihood. The static physical sector is
\begin{equation*}
\mathcal C_{\sigma,e}=\eta_G(\sigma)+\Phi(\ell_e)+L_G.
\end{equation*}
For compact notation set $P_b:=P_b^{\rm tot}$ and
\begin{equation*}
K_b^{\rm lat}=L_G\cap\ker P_b.
\end{equation*}
A calibrated joint readout kernel $g_b(D_b,b\mid\mu)$ includes the loss--amplification covariance, local acceptance/rejection regions, the branch policy, cell labels, and outer residuals. The quotient wrapped channel is
\begin{align}
\widetilde p_b(D_b,b\mid\sigma,e)
&=\sum_{[\lambda]\in\mathcal C_{\sigma,e}/K_b^{\rm lat}}
 g_b(D_b,b\mid P_b\lambda),
\label{eq:unnormalized_joint_channel}\\
Z_{b,\sigma,e}
&=\sum_b\int \widetilde p_b(D,b\mid\sigma,e)\,dD,
\label{eq:sector_normalizer}\\
p_b(D_b,b\mid\sigma,e)
&=\widetilde p_b(D_b,b\mid\sigma,e)/Z_{b,\sigma,e}.
\label{eq:joint_channel}
\end{align}
A translation-covariant readout has a common sector normalizer; the explicit expression is retained for a general calibrated instrument.

For a Gaussian residual conditioned on the discrete local and outer records,
\begin{align*}
\widetilde p_b(D_b,b\mid\sigma,e)
={}&\sum_{[\lambda]\in\mathcal C_{\sigma,e}/K_b^{\rm lat}}
 p_b^{\rm disc}(D_b^{\rm disc},b\mid P_b\lambda)\\
&\times\varphi_{\Sigma_{b,\obs}}(r_b-P_b\lambda),\\
Q_b(\lambda)
={}&(r_b-P_b\lambda)^T\Sigma_{b,\obs}^{-1}(r_b-P_b\lambda),\end{align*}
with one term per visible quotient class. The local erasure event is contained in $p_b^{\rm disc}$ and is not multiplied a second time.

The closest-coset decoder converts each term to an additive energy,
\begin{align}
\mathcal E_b(\lambda;D_b,b)
&=Q_b(\lambda)-2\log p_b^{\rm disc}
(D_b^{\rm disc},b\mid P_b\lambda),\nonumber\\
D_b^{\rm CC}(\sigma,e)
&=\min_{[\lambda]\in\mathcal C_{\sigma,e}/K_b^{\rm lat}}
\mathcal E_b(\lambda;D_b,b),\nonumber\\
\widehat{\widetilde p}_{b}^{\rm CC}(D_b,b\mid\sigma,e)
&=\exp[-D_b^{\rm CC}(\sigma,e)/2].
\label{eq:closest_coset_branch_score}
\end{align}

The approximation discarded by the closest-coset rule can be isolated exactly: the corresponding wrapped score before normalization factorizes as
\begin{align}
\widetilde p_b(D_b,b\mid\sigma,e)
&=e^{-D_b^{\rm CC}(\sigma,e)/2}
R_b(\sigma,e;D_b),\nonumber\\
R_b(\sigma,e;D_b)
&=\sum_{[\lambda]\in\mathcal C_{\sigma,e}/K_b^{\rm lat}}
e^{-\delta\mathcal E_b(\lambda;\sigma,e)/2}\geq1,\nonumber\\
\delta\mathcal E_b(\lambda;\sigma,e)
&:=\mathcal E_b(\lambda;D_b,b)-D_b^{\rm CC}(\sigma,e)\geq0.
\label{eq:maxlog_correction_factor}
\end{align}
Thus the closest-coset and exact wrapped posteriors coincide whenever $R_b$ is independent of $(\sigma,e)$, and they differ only through the sector dependence of this omitted multiplicity factor. All posterior odds and confidence values reported below are therefore probabilities within the declared closest-coset model; they are not asserted to equal exact maximum-likelihood posterior probabilities.

We use the convention $-\log0=+\infty$. When sector normalizers are common, they cancel from all posterior ratios. For a non-translation-covariant calibrated instrument, a sector-dependent CC normalizer $\widehat Z_{b,\sigma,e}^{\rm CC}$ is retained and $\widehat p_b^{\rm CC}=\widehat{\widetilde p}_b^{\rm CC}/\widehat Z_{b,\sigma,e}^{\rm CC}$. Equation~\eqref{eq:closest_coset_branch_score} is the sole sector-score approximation used by the algorithms and simulations.

The quotient sum in Eq.~\eqref{eq:joint_channel} is ordinary marginalization over periodic directions invisible to $P_b$.  Equivalently, one may assign variance $\tau I$ to such a coordinate, normalize its periodized Gaussian, and take $\tau\to\infty$; the standard Poisson-summation limit for lattice theta functions leaves one term per class of $\mathcal C_{\sigma,e}/K_b^{\rm lat}$ \cite{conrad2022}.  This concerns a genuinely unrecorded coordinate, such as an omitted future outer measurement or an unavailable Bell quadrature. It is not the physical model for attenuation. Under the adopted loss--amplification channel, attenuation changes the finite covariance in Eq.~\eqref{eq:loss_amplification_total_covariance}; a locally rejected output suppresses a future outer row, while the earlier local recovery record remains in the likelihood.

\subsection{Closest-coset syndrome and Pauli-frame decoding}
\label{subsec:outcome_likelihood}

Let $D_b=(D_{1:n}^{\rm loc},\mathbf a_b,y_b^{\rm out},r_b^{\rm out})$ denote the complete record retained by the selected branch. The sector engine supplies $\widehat p_b^{\rm CC}(D_b,b\mid\sigma,e)$ for every outer syndrome $\sigma$ and residual logical class $e$, including the probability of the local acceptance and rejection events that generated $b$. The primary decoder object is the normalized closest-coset distribution
\begin{equation}
\Pi_b(\sigma,e\mid D_b)
=
\frac{\pi_b(\sigma,e)\widehat p_b^{\rm CC}(D_b,b\mid\sigma,e)}
{\displaystyle\sum_{\sigma',e'}
\pi_b(\sigma',e')\widehat p_b^{\rm CC}(D_b,b\mid\sigma',e')}
\label{eq:sector_posterior}
\end{equation}
for a declared prior $\pi_b$. Throughout the remainder, $\Pi_b$ denotes this normalized CC distribution; replacing $\widehat p_b^{\rm CC}$ by the exact channel $p_b$ recovers the exact maximum-likelihood posterior. For an exact signed record, the distribution has support only on $\sigma\in\Sigma_b$; a fully resolved syndrome collapses the syndrome marginal. The maximum-score sector under the closest-coset likelihood is
\begin{equation*}
(\widehat\sigma_b,\widehat e_b)
:=\argmax_{\sigma,e}\Pi_b(\sigma,e\mid D_b),
\end{equation*}
with a fixed tie-breaking convention. This pair is useful for recovery diagnostics and for passing syndrome information to a higher decoder, but an MBQC controller ultimately needs the Pauli action induced on the surviving logical wire.

For a completed branch with one logical output, the stabilizer/Clifford branch compiler determines a binary Pauli-transfer map
\begin{equation*}
\tau_b:\F_2^{2n}/Q_G\longrightarrow\F_2^2,
\end{equation*}
which is constant on every physical sector $r_G(\sigma)+\ell_e+Q_G$. Operationally, $\tau_b$ is obtained by propagating a basis of physical Pauli labels through the branch tableau and reading the induced logical Pauli on the output block. Define the error-induced outgoing frame
\begin{equation*}
f_b(\sigma,e)
:=\tau_b\!\left(r_G(\sigma)\oplus\ell_e+Q_G\right)
\in\F_2^2.
\end{equation*}
For a terminal Pauli measurement with no surviving output wire, the corresponding transfer map is scalar and equals the flip action $[q^0_{a,b},r_G(\sigma)\oplus\ell_e]_{\DV}$.

The controller-facing frame posterior is the pushforward
\begin{equation}
\Pi_b^{\rm fr}(f\mid D_b)
=
\sum_{\substack{\sigma,e:\\ f_b(\sigma,e)=f}}
\Pi_b(\sigma,e\mid D_b),
\qquad f\in\F_2^2.
\label{eq:frame_posterior}
\end{equation}
Its maximum-a-posteriori estimate and log-odds confidence are
\begin{align*}
\widehat f_b
&:=\argmax_f\Pi_b^{\rm fr}(f\mid D_b),\\
\Gamma_b^{\rm fr}
&:=\log\frac{\Pi_b^{\rm fr}(\widehat f_b\mid D_b)}
{1-\Pi_b^{\rm fr}(\widehat f_b\mid D_b)}.\end{align*}
Within the normalized closest-coset model,
\begin{equation}
\Pr_{\rm CC}[f_b\neq\widehat f_b\mid D_b]
=\frac{1}{1+e^{\Gamma_b^{\rm fr}}}.
\label{eq:frame_error_probability}
\end{equation}
The corresponding frequentist accepted-frame error is measured in Monte Carlo and used to calibrate the confidence threshold; it is not assumed to equal the raw max-log probability without validation.
The joint-sector confidence $\Gamma_b^{\sigma e}$ is defined analogously from $\Pi_b(\widehat\sigma_b,\widehat e_b\mid D_b)$. The two confidences need not agree because several physical sectors may induce the same logical frame. The preferred decoder output is the full posterior $\Pi_b(\sigma,e\mid D_b)$ together with its frame pushforward; a compressed decision record is
\begin{equation*}
\mathcal O_b^{\rm hard}
=(\widehat\sigma_b,\widehat e_b,\widehat f_b,
\Gamma_b^{\rm fr}).
\end{equation*}
This contains the requested $(\widehat e_b,\widehat\sigma_b,\Gamma_b)$ interface while making explicit the controller-facing frame $\widehat f_b$. For a terminal Pauli measurement, define the pure-error-corrected branch offset
\begin{equation*}
\chi_b^{r_G}(a,\sigma;q^0)
:=\chi_b^{\rm raw}(a;q^0)
\oplus[q^0_{a,b},r_G(\sigma)]_{\DV}.
\end{equation*}

Define the terminal Pauli-outcome map
\begin{align}
\mu_{a,b}(\sigma,e)
&:=\chi_b^{\rm raw}(a;q^0)
 \oplus[q^0_{a,b},r_G(\sigma)\oplus\ell_e]_{\DV}
\nonumber\\
&=\chi_b^{r_G}(a,\sigma;q^0)\oplus[a,e]_1.
\label{eq:outcome_likelihood_residual}
\end{align}

The terminal logical-measurement distribution is the deterministic pushforward of the normalized sector distribution,
\begin{equation}
\Pr_b(m\mid D_b,a)
=\sum_{\sigma,e}\Pi_b(\sigma,e\mid D_b)
\mathbf1[m=\mu_{a,b}(\sigma,e)].
\label{eq:outcome_likelihood}
\end{equation}
For a branch with a surviving logical output, Eq.~\eqref{eq:frame_posterior} is the corresponding Pauli-frame pushforward.  These are standard applications of Bayes' rule and the law of total probability.  The distributions are independent of the operational representative and of the chosen measurement-supported reference: changing either multiplies the raw record and its stabilizer correction by the same signed character.  Changing the pure-error section only permutes the labels $(\sigma,e)$ of fixed physical sectors, so the induced outcome and frame distributions are unchanged.

For an equatorial measurement, the terminal bit posterior is instead
\begin{align}
\Pr_b(m\mid D_b,\theta,\varrho)
={}&\sum_{\sigma,e}\Pi_b(\sigma,e\mid D_b)
\Tr[\varrho E^{(\sigma,e,b)}_{m,\theta}],
\label{eq:nonpauli_povm}
\end{align}
because an uncertain $X$ frame changes the angle rather than merely flipping a bit. The frame posterior must therefore be retained before selecting the next physical basis.

Let $\mathfrak f_t\in\F_2^2$ be the logical Pauli frame tracked by the controller before branch $b$. Let $T_b^{\rm log}\in\operatorname{Sp}_2(\F_2)$ be the Clifford action of the ideal branch and let $d_b(y_b)\in\F_2^2$ be the deterministic measurement byproduct. For true sector $(\sigma,e)$, the true outgoing frame is
\begin{equation}
\mathfrak f_{t+1}^{\rm true}
=T_b^{\rm log}\mathfrak f_t^{\rm true}
\oplus d_b(y_b)\oplus f_b(\sigma,e),
\label{eq:true_frame_update}
\end{equation}
whereas the hard controller uses
\begin{equation}
\widehat{\mathfrak f}_{t+1}
=T_b^{\rm log}\widehat{\mathfrak f}_t
\oplus d_b(y_b)\oplus\widehat f_b.
\label{eq:controller_frame_update}
\end{equation}
If the incoming frame was tracked correctly, the untracked residual is
\begin{equation}
\delta\mathfrak f_{t+1}
=f_b(\sigma,e)\oplus\widehat f_b.
\label{eq:residual_frame_fault}
\end{equation}
Thus a correctly identified frame is removed completely by classical feedforward, while an incorrect frame estimate leaves a nontrivial logical Pauli fault. For the next equatorial measurement,
\begin{equation}
\theta_{t+1}^{\rm phys}
=(-1)^{(\widehat{\mathfrak f}_{t+1})_X}\theta_{t+1},
\qquad
m_{t+1}^{\rm corr}
=m_{t+1}^{\rm raw}\oplus(\widehat{\mathfrak f}_{t+1})_Z.
\label{eq:future_basis_update}
\end{equation}

Confidence can be propagated rather than thresholded immediately. If the incoming controller frame has distribution $\Pi_t^{\rm ctrl}$ and is conditionally independent of the current branch error, then
\begin{align}
\Pi_{t+1}^{\rm ctrl}(f')
={}&\sum_{f,g}
\Pi_t^{\rm ctrl}(f)\Pi_b^{\rm fr}(g\mid D_b)\nonumber\\
&\times\mathbf1\!\left[
 f'=T_b^{\rm log}f\oplus d_b(y_b)\oplus g
\right].
\label{eq:posterior_frame_propagation}
\end{align}
Correlated branches require the corresponding joint posterior instead of the product in Eq.~\eqref{eq:posterior_frame_propagation}. A confidence erasure is declared only when the controller chooses to compress the posterior distribution and $\Gamma_b^{\rm fr}<\Gamma_{\rm th}^{\rm fr}$; a terminal measurement may analogously use the binary outcome log odds.

The graph-module failure probability is
\begin{equation}
P_{\rm fail}
=P_{\accera}+P_{\confera}+P_{\rm frame}^{\rm err}
\label{eq:loss_amplification_failure_metric}
\end{equation}
where
\begin{align*}
P_{\accera}&=\Pr[\mathcal L_{a,b}^{\meas}=\varnothing],\\
P_{\confera}&=\Pr[\mathcal L_{a,b}^{\meas}\neq\varnothing,
\ \Gamma_b^{\rm fr}<\Gamma_{\rm th}^{\rm fr}],\\
P_{\rm frame}^{\rm err}
&=\Pr[\widehat f_b\neq f_b(\sigma,e),\ \text{module accepted}].\end{align*}
The local erasure probability $\varepsilon_{\GKP}$ and accepted local Pauli error $p_{\rm err}^{\rm acc}$ are reported separately. Adding $\varepsilon_{\GKP}$ directly to Eq.~\eqref{eq:loss_amplification_failure_metric} would double count locally erased blocks that the graph code routes around successfully.

Algorithm~\ref{alg:syndrome_frame_decoder} combines selected-branch closest-coset scoring, syndrome/frame inference, and the MBQC controller update.

\begin{prxalgorithm}{Syndrome-resolved closest-coset and Pauli-frame decoder}
\label{alg:syndrome_frame_decoder}
\footnotesize
\begin{algorithmic}[1]
\Require Complete selected-branch record $(D_b,b)$; priors $\pi_b(\sigma,e)$; compiled lattice and invisible sublattice; pure-error section; branch transfer $\tau_b$; incoming controller-frame posterior; confidence threshold.
\Ensure Syndrome and frame posteriors, accepted frame update and adapted measurement instruction, or $\accera/\confera$.
\If{the requested branch operation is inaccessible} \State \Return $\accera$. \EndIf
\For{each retained syndrome $\sigma$ and logical class $e$}
  \State Form $\mathcal C_{\sigma,e}/K_b^{\rm lat}$ and include the branch event and all accepted, rejected, and readout likelihood factors.
  \State Solve the covariance-weighted closest-vector problem and set $w_{\sigma,e}=\pi_b(\sigma,e)e^{-D_b^{\rm CC}(\sigma,e)/2}$, including any calibrated sector normalizer.
\EndFor
\State Normalize $w_{\sigma,e}$ to obtain $\Pi_b(\sigma,e\mid D_b)$ and its MAP sector.
\State Push $\Pi_b$ through $\tau_b$ to obtain $\Pi_b^{\rm fr}$, the MAP frame $\widehat f_b$, and $\Gamma_b^{\rm fr}$; form any terminal Pauli or equatorial outcome posterior.
\State Compose the incoming controller-frame posterior with $\Pi_b^{\rm fr}$ using Eq.~\eqref{eq:posterior_frame_propagation}, or use the supplied joint posterior for correlated branches.
\If{the relevant frame or outcome confidence is below threshold}
  \State \Return $\confera$ while retaining the complete posterior distributions for higher-level decoding.
\EndIf
\State Apply the accepted frame update, adapt subsequent equatorial bases and reported bits using Eq.~\eqref{eq:future_basis_update}, and return the updated controller record.
\end{algorithmic}
\end{prxalgorithm}

\paragraph{Closest-coset implementation.}\label{subsec:closest_coset_decoder}

For Gaussian observations, whitening by $\Sigma_{b,\obs}^{-1/2}$ converts the continuous part of Eq.~\eqref{eq:closest_coset_branch_score} to a Euclidean closest-vector problem; discrete acceptance, rejection, cell, and branch factors remain additive negative-log costs.  The solver searches $\mathcal C_{\sigma,e}/K_b^{\rm lat}$, returns a global minimizer under a deterministic tie rule, and assigns score $e^{-D_b^{\rm CC}/2}$.  All numerical panels, recursive posterior distributions, and fusion results use this same max-log rule.  Because nonminimal representatives are discarded, raw odds are calibrated against simulated accepted-frame errors before thresholding.

\subsection{End-to-end specialization to the cube and decorated pentagon}
\label{subsec:worked_modules}

The preceding constructions are now evaluated for the two modules used in the simulations.  All binary sums are over $\F_2$, Pauli products use the Hermitian convention of Appendix~\ref{app:signed_compilation}, and every executable local and graph-level factor uses the closest-coset scores in Eqs.~\eqref{eq:local_closest_coset_weight} and \eqref{eq:closest_coset_branch_score}.  The exact wrapped sums remain reference models only.

\paragraph{Cube progenitor: explicit code and GKP lattice.}

Order the seven code vertices as $(1,2,3,4,5,6,7)$.  With input vertex $0$, the neighbors of the input are $\{1,3,7\}$.  The identity-input fiber is generated by
\begin{align*}
h_1&=\pi_c(K_2)=Z_1X_2Z_3Z_5,\\
h_2&=\pi_c(K_4)=Z_1X_4Z_5Z_7,\\
h_3&=\pi_c(K_5)=Z_2Z_4X_5Z_6,\\
h_4&=\pi_c(K_6)=Z_3Z_5X_6Z_7,\\
h_5&=\pi_c(K_1K_3)=X_1X_3Z_4Z_6,\\
h_6&=\pi_c(K_1K_7)=X_1X_7Z_2Z_6.\end{align*}
The input-$X$ generator $K_0$ and the input-$Z$ generator $K_1$ give
\begin{equation*}
\ell_X=Z_1Z_3Z_7,
\qquad
\ell_Z=X_1Z_2Z_4,
\qquad
[\ell_X,\ell_Z]_{\DV}=1.
\end{equation*}
In the binary order $(x_1,\ldots,x_7\mid z_1,\ldots,z_7)$, the resulting outer-code generator is
\begin{widetext}
\begin{equation}
H_{\rm cube}=
\left(
\begin{array}{ccccccc|ccccccc}
0&1&0&0&0&0&0&1&0&1&0&1&0&0\\
0&0&0&1&0&0&0&1&0&0&0&1&0&1\\
0&0&0&0&1&0&0&0&1&0&1&0&1&0\\
0&0&0&0&0&1&0&0&0&1&0&1&0&1\\
1&0&1&0&0&0&0&0&0&0&1&0&1&0\\
1&0&0&0&0&0&1&0&1&0&0&0&1&0
\end{array}
\right).
\label{eq:H_cube_explicit}
\end{equation}
\end{widetext}
Thus the square-block oscillator code is the explicit binary-lifted lattice.  Here $\widetilde M_{\rm cube}^{\square}$ is an overcomplete generating array, whereas $M_{\rm cube}$ denotes any square full-rank lattice basis obtained from it:
\begin{align*}
\widetilde M_{\rm cube}^{\square}
&=\frac1{\sqrt2}
\begin{pmatrix}2I_{14}\\ H_{\rm cube}\end{pmatrix},\\
\operatorname{covol}(L_{\rm cube})
&=|\det M_{\rm cube}|=\frac{2^7}{2^6}=2.\end{align*}
The hexagonal version is obtained by the blockwise symplectic switch $S_{\rm hex}^{\oplus7}$; the binary matrix and the branch algebra do not change.

An exact graph-level check is obtained by enumerating the $64$ representatives of each affine class $\ell_a+Q_G$.  Let $N_r^{(a)}$ be the number of $r$-block local-decoder-erasure sets for which at least one representative of logical $a$ has identity on every unavailable block.  For each $a\in\{X,Y,Z\}$ the enumeration gives
\begin{equation*}
(N_0,N_1,N_2,N_3,N_4,N_5,N_6,N_7)
=(1,7,21,28,7,0,0,0).
\end{equation*}
Hence the heralded branch-accessibility polynomial is
\begin{align}
P_{\rm cube}^{\rm acc}(\varepsilon)
&=\sum_{r=0}^7N_r\varepsilon^r(1-\varepsilon)^{7-r}\nonumber\\
&=1-7\varepsilon^3+21\varepsilon^5-21\varepsilon^6+6\varepsilon^7.
\label{eq:cube_accessibility_polynomial}
\end{align}
It follows directly that every two-block local decoder erasure is routable at the representative-existence level and that $P_{\rm cube}^{\rm acc}(1/2)=1/2$.  Equation~\eqref{eq:cube_accessibility_polynomial} is the algebraic transfer function of the local GKP erasure rate under independent identical block decisions. The online policy still imposes the causal measurement order of Algorithm~\ref{alg:logical_branch}.

\paragraph{One cube branch conditioned on a local GKP erasure.}
Consider a terminal logical-$X$ measurement with
\begin{equation*}
\mathbf q_{\rm c}=(I,X,Z,X,Z,X,Z),\qquad\mathbf a_{\rm c}=(0,1,1,1,1,1,1).
\end{equation*}
Although the reference $\ell_X$ acts on block $1$, whose refreshed output was discarded by the local GKP decoder, multiplication by $h_1$ gives the accessible representative
\begin{equation*}
q_{X,\rm c}=\ell_X\oplus h_1=X_2Z_5Z_7.
\end{equation*}
The measured-check subspace has rank two and may be generated by
\begin{align*}
q_{\rm c,1}&=h_4=Z_3Z_5X_6Z_7,\\
q_{\rm c,2}&=h_1\oplus h_2=X_2Z_3X_4Z_7.\end{align*}
Relative to the rows of Eq.~\eqref{eq:H_cube_explicit}, the syndrome-restriction matrix is therefore
\begin{equation*}
C_{\rm c}=
\begin{pmatrix}
0&0&0&1&0&0\\
1&1&0&0&0&0
\end{pmatrix}.
\end{equation*}
Let $\mu_i^P\in\F_2$ denote the eigenvalue bit of the modular measurement of local Pauli $P$ on block $i$.  For zero incoming frame and the displayed positive stabilizer representatives, the signed record is
\begin{align*}
\chi_{\rm c}^{\rm raw}(X)
&=\mu_2^X\oplus\mu_5^Z\oplus\mu_7^Z,\\
z_{{\rm c},1}
&=\mu_3^Z\oplus\mu_5^Z\oplus\mu_6^X\oplus\mu_7^Z,\\
z_{{\rm c},2}
&=\mu_2^X\oplus\mu_3^Z\oplus\mu_4^X\oplus\mu_7^Z.\end{align*}
Known lattice-phase and incoming-frame offsets are added exactly as in Appendix~\ref{app:signed_compilation}.  The exact discrete constraints are
\begin{equation*}
\sigma_4=z_{{\rm c},1},
\qquad
\sigma_1\oplus\sigma_2=z_{{\rm c},2}.
\end{equation*}
Thus $\operatorname{rank}C_{\rm c}=2$ and $16$ of the $64$ complete syndromes remain possible before the analog record is used.

A convenient pure-error section is
\begin{equation*}
r_{\rm cube}(\sigma)
=Z_2^{\sigma_1}Z_4^{\sigma_2}Z_5^{\sigma_3}
 Z_6^{\sigma_4}Z_3^{\sigma_5}Z_7^{\sigma_6},
\end{equation*}
for which $\syn_G(r_{\rm cube}(\sigma))=\sigma$.  Direct symplectic products give
\begin{equation*}
[q_{X,\rm c},r_{\rm cube}(\sigma)]_{\DV}=\sigma_1,
\qquad
[q_{X,\rm c},\ell_e]_{\DV}=e_Z.
\end{equation*}
The outcome pushforward in Eq.~\eqref{eq:outcome_likelihood} therefore becomes the closed formula
\begin{equation*}
 m_X
=\mu_2^X\oplus\mu_5^Z\oplus\mu_7^Z
  \oplus\sigma_1\oplus e_Z
\end{equation*}
For independent local readouts, set
\begin{equation*}
u_{\rm c}(\sigma,e,a)
:=r_{\rm cube}(\sigma)\oplus\ell_e\oplus aH_{\rm cube}.
\end{equation*}
Writing $u_{{\rm c},i}=[u_{\rm c}(\sigma,e,a)]_i$, define the $64$ candidate negative-log costs
\begin{align*}
\mathcal E_{\rm c}(\sigma,e,a)
=-2\log\Bigg[&H_1^{\rm era}(u_{{\rm c},1};D_1^{\rm loc})
\prod_{i=2}^{7}H_i^{\rm acc}(u_{{\rm c},i};D_i^{\rm loc})\\
&\times\prod_{i=2}^{7}
\Theta_{i,\rm c}^{\rm out}(u_{{\rm c},i};D_i^{\rm out})\Bigg],
\end{align*}
with $-\log0=+\infty$. The explicit closest-coset score is
\begin{align}
D_{\rm c}^{\rm CC}(\sigma,e)
&=\min_{a\in\F_2^6}\mathcal E_{\rm c}(\sigma,e,a),\nonumber\\
\widehat p_{\rm c}^{\rm CC}(D_{\rm c}\mid\sigma,e)
&=\mathbf1[\sigma_4=z_{{\rm c},1}]\nonumber\\
&\quad\times\mathbf1[\sigma_1\oplus\sigma_2=z_{{\rm c},2}]
 e^{-D_{\rm c}^{\rm CC}(\sigma,e)/2}.
\label{eq:cube_explicit_sector_score}
\end{align}
Normalizing $\pi_{\rm c}(\sigma,e)\widehat p_{\rm c}^{\rm CC}$ gives $\Pi_{\rm c}$, and the final binary distribution is
\begin{equation}
\Pr_{c}(m\mid D_{\rm c})
=\sum_{\sigma,e}
\Pi_{\rm c}(\sigma,e\mid D_{\rm c})
\mathbf1\!\left[
m=\chi_{\rm c}^{\rm raw}\oplus\sigma_1\oplus e_Z
\right].
\label{eq:cube_explicit_posterior_pushforward}
\end{equation}
Equations~\eqref{eq:cube_explicit_sector_score} and \eqref{eq:cube_explicit_posterior_pushforward} provide the fully expanded closest-coset-and-pushforward example used to instantiate Algorithm~\ref{alg:syndrome_frame_decoder}.

\paragraph{Decorated pentagon: teleportation and frame maps.}

Order the five code vertices as $(1,2,3,4,5)$.  The identity-input fiber is generated by
\begin{align*}
g_1&=\pi_c(K_2)=Z_1X_2Z_3,\\
g_2&=\pi_c(K_3)=Z_2X_3Z_4,\\
g_3&=\pi_c(K_5)=Z_1X_5,\\
g_4&=\pi_c(K_1K_4)=X_1X_4Z_2Z_3Z_5,\end{align*}
and the logical classes may be represented by
\begin{equation*}
\ell_X=Z_1Z_4,
\qquad
\ell_Z=X_1Z_2Z_5.
\end{equation*}
Thus
\begin{equation}
H_{\rm dec}=
\left(
\begin{array}{ccccc|ccccc}
0&1&0&0&0&1&0&1&0&0\\
0&0&1&0&0&0&1&0&1&0\\
0&0&0&0&1&1&0&0&0&0\\
1&0&0&1&0&0&1&1&0&1
\end{array}
\right),
\label{eq:H_decorated_explicit}
\end{equation}
and the square-block binary-lifted generator is the following overcomplete array; $M_{\rm dec}$ denotes any square full-rank basis of the same lattice:
\begin{align*}
\widetilde M_{\rm dec}^{\square}
&=\frac1{\sqrt2}
\begin{pmatrix}2I_{10}\\H_{\rm dec}\end{pmatrix},\\
\operatorname{covol}(L_{\rm dec})
&=|\det M_{\rm dec}|=\frac{2^5}{2^4}=2.\end{align*}

For a candidate output $o$, the stabilizer-pathfinding test is the explicit search for $(q_X,q_Z)\in\mathcal P_{b,o}$ of Eq.~\eqref{eq:SPC_pair_set}.  Enumerating the $16$ representatives in each logical class gives $148$ labeled pairs before loss pruning.  If $N_r^{\rm SPC}$ is the number of $r$-block local-erasure sets admitting at least one such pair, then
\begin{align*}
&(N_0^{\rm SPC},N_1^{\rm SPC},N_2^{\rm SPC},
  N_3^{\rm SPC},N_4^{\rm SPC},N_5^{\rm SPC})\\
&\hspace{5em}=(1,5,5,0,0,0).\end{align*}
The five routable two-local-erasure sets are
$\{1,2\}$, $\{1,3\}$, $\{1,5\}$, $\{2,5\}$, and $\{3,5\}$, and the heralded SPC-accessibility polynomial is
\begin{align}
P_{\rm dec}^{\rm SPC}(\varepsilon)
&=(1-\varepsilon)^5+5\varepsilon(1-\varepsilon)^4
 +5\varepsilon^2(1-\varepsilon)^3\nonumber\\
&=1-5\varepsilon^2+5\varepsilon^3-\varepsilon^5.
\label{eq:decorated_accessibility_polynomial}
\end{align}
In particular every single local decoder erasure has an explicit teleportation certificate.  Table~\ref{tab:decorated_single_loss} gives identity-oriented pairs, meaning that the local factors of $(q_X,q_Z)$ on the output are $(X_o,Z_o)$.  For such a pair, if the non-output measurement parities are $d_X$ and $d_Z$, the ideal output state is $X_o^{d_X}Z_o^{d_Z}|\psi\rangle$.

\begin{table*}[t]
\caption{Explicit one-local-erasure stabilizer-pathfinding certificates for the decorated-pentagon-plus-leaf code.  The bits $\mu_i^P$ are local measurement eigenvalue bits.  Each row avoids the indicated locally erased block, and $q_X$ and $q_Z$ anticommute only on the listed output.}
\label{tab:decorated_single_loss}
\small
\begin{tabular*}{\textwidth}{@{\extracolsep{\fill}}cllll@{}}
\toprule
Locally erased block & Output & $q_X$ & $q_Z$ & Deterministic frame $(d_X,d_Z)$\\
\midrule
$1$ & $3$ & $Z_2X_3X_5$ & $Z_3X_4$ & $(\mu_4^X,\,\mu_2^Z\oplus\mu_5^X)$\\
$2$ & $5$ & $Z_4X_5$ & $X_1X_3Z_4Z_5$ & $(\mu_1^X\oplus\mu_3^X\oplus\mu_4^Z,\,\mu_4^Z)$\\
$3$ & $5$ & $Z_4X_5$ & $X_1Z_2Z_5$ & $(\mu_1^X\oplus\mu_2^Z,\,\mu_4^Z)$\\
$4$ & $5$ & $Z_2X_3X_5$ & $X_1Z_2Z_5$ & $(\mu_1^X\oplus\mu_2^Z,\,\mu_2^Z\oplus\mu_3^X)$\\
$5$ & $3$ & $Z_1Z_2X_3$ & $Z_3X_4$ & $(\mu_4^X,\,\mu_1^Z\oplus\mu_2^Z)$\\
\bottomrule
\end{tabular*}
\end{table*}

The Pauli-frame transfer is also explicit.  For any physical phase-free Pauli label $u$, an identity-oriented SPC pair induces
\begin{equation}
\tau_{b,o}(u)
=\left([q_Z,u]_{\DV},[q_X,u]_{\DV}\right)
=:(f_X,f_Z).
\label{eq:SPC_frame_transfer_explicit}
\end{equation}
Indeed, a sign flip of the logical-$Z$ relation is an $X_o$ byproduct and a sign flip of the logical-$X$ relation is a $Z_o$ byproduct.

Choose the pure-error section
\begin{equation*}
r_{\rm dec}(\sigma)
=Z_2^{\sigma_1}X_4^{\sigma_2}
 Z_5^{\sigma_3}X_5^{\sigma_4},
\qquad
\syn_G(r_{\rm dec}(\sigma))=\sigma.
\end{equation*}
For the last row of Table~\ref{tab:decorated_single_loss}, with output $3$ and block $5$ locally erased,
\begin{equation*}
q_X^{(3)}=Z_1Z_2X_3,
\qquad
q_Z^{(3)}=Z_3X_4,
\end{equation*}
and direct pairing gives
\begin{equation*}
f_{b_3}(\sigma,e)=(e_X,e_Z).
\end{equation*}
The frame posterior therefore simplifies to
\begin{equation}
\Pi_{b_3}^{\rm fr}(f\mid D)
=\sum_{\sigma}\Pi_{b_3}(\sigma,f\mid D).
\label{eq:decorated_frame_posterior_output3}
\end{equation}
By contrast, for the third row of the table, with block $3$ locally erased and output $5$,
\begin{equation*}
q_X^{(5)}=Z_4X_5,
\qquad
q_Z^{(5)}=X_1Z_2Z_5,
\end{equation*}
and Eq.~\eqref{eq:SPC_frame_transfer_explicit} gives the nontrivial syndrome-dependent map
\begin{equation*}
f_{b_5}(\sigma,e)
=\left(e_X\oplus\sigma_4,\,
        e_Z\oplus\sigma_2\oplus\sigma_3\right)
\end{equation*}
Accordingly,
\begin{equation}
\Pi_{b_5}^{\rm fr}(f\mid D)
=\sum_{\substack{\sigma,e:\\
f_X=e_X\oplus\sigma_4,\\
f_Z=e_Z\oplus\sigma_2\oplus\sigma_3}}
\Pi_{b_5}(\sigma,e\mid D).
\label{eq:decorated_frame_posterior_output5}
\end{equation}
This is a concrete case in which marginalizing over $\sigma$ before computing the output frame is incorrect.  For both minimal branches, $Q_b^{\meas}=\{0\}$: no complete outer check is supported solely on the Pauli-measured non-output blocks.  The $16$ syndrome values must therefore be weighted by their analog GKP likelihood rather than replaced by $\sigma=0$.

For the output-$5$ branch, the deterministic frame from Table~\ref{tab:decorated_single_loss} is
\begin{equation*}
d_{b_5}(y)=
(\mu_1^X\oplus\mu_2^Z,\,\mu_4^Z).
\end{equation*}
After the posterior estimate $\widehat f=(\widehat f_X,\widehat f_Z)$, the MBQC controller implements
\begin{equation}
\theta_{\rm phys}
=(-1)^{\mu_1^X\oplus\mu_2^Z\oplus\widehat f_X}\theta,
\qquad
m_{\rm corr}
=m_{\rm raw}\oplus\mu_4^Z\oplus\widehat f_Z
\label{eq:decorated_explicit_controller_update}
\end{equation}
A correct $\widehat f$ removes the physical sector completely by feedforward; an incorrect estimate leaves the residual logical Pauli $f\oplus\widehat f$; and low log-odds produce the confidence erasure of Algorithm~\ref{alg:syndrome_frame_decoder}.

For independent local readouts, substituting $H_{\rm dec}$ and $r_{\rm dec}$ into the factorized likelihood produces $2^4=16$ candidate negative-log costs. The decorated-pentagon closest-coset score is the exponential of minus one half of their minimum. Unlike the cube branch, both minimal branches have $Q_b^{\meas}=\{0\}$, so no measured-check indicator precedes this minimization. The normalized CC distribution is then pushed through Eq.~\eqref{eq:decorated_frame_posterior_output3} or Eq.~\eqref{eq:decorated_frame_posterior_output5}, and a terminal non-Pauli measurement additionally uses the error-conditioned POVM of Eq.~\eqref{eq:nonpauli_povm}.

\subsection{Worked-module transfer and finite-size benchmarks}
\label{subsec:loss_amplification_exact_results}

The local decoder produces the IID abstention probability
\begin{equation*}
\varepsilon
=\varepsilon_{\GKP}(\eta,\Sigma_{\rm prep},\Gamma_{\rm loc}).
\end{equation*}
The exact cube and decorated-pentagon accessibility probabilities are therefore obtained by evaluating Eqs.~\eqref{eq:cube_accessibility_polynomial} and \eqref{eq:decorated_accessibility_polynomial} at this calibrated $\varepsilon$.  This composition is the only conversion from optical attenuation to graph unavailability: the graph polynomials act on decoder flags, not directly on $1-\eta$.

For the decorated pentagon, the nontrivial accessibility crossover with a single accepted refreshed block is
\begin{equation*}
\varepsilon_{\rm dec}^{\star}=0.2738905549\ldots,
\qquad
P_{\rm dec}^{\rm SPC}(\varepsilon_{\rm dec}^{\star})
=1-\varepsilon_{\rm dec}^{\star}.
\end{equation*}
As for the cube crossover stated after Eq.~\eqref{eq:cube_accessibility_polynomial}, this is an erasure-only accessibility fixed point of the reliability polynomial.  Total module failure additionally contains accepted frame errors and $\confera$, and the decoder-generated erasure probability is itself a function of the analog channel and the confidence rule.

The local quantities $\varepsilon_{\GKP}$ and $p_{\rm err}^{\rm acc}$ follow from Eqs.~\eqref{eq:square_loss_amplification_exact_erasure} and \eqref{eq:square_loss_amplification_exact_error} for the ideal-square benchmark, or from the corresponding calibrated lattice instrument.  Raising $\Gamma_{\rm loc}$ transfers probability from accepted local errors to $\locera$, after which the graph routes only those abstention patterns admitted by its accessibility polynomial.

For independent but nonidentical local interfaces, the correct accessibility expression is the multivariate reliability polynomial
\begin{equation*}
P_G^{\rm acc}
=\sum_{\bm a\in\F_2^n}
\prod_{i=1}^{n}(1-\varepsilon_i)^{a_i}
\varepsilon_i^{1-a_i}
\,\mathbf1[\mathcal L^{\meas}_{a,\bm a}\neq\varnothing].
\end{equation*}
If the local records or erasure flags are correlated, the product distribution must be replaced by the exact mask distribution generated by the joint Gaussian channel and local policy,
\begin{equation*}
P_G^{\rm acc}
=\sum_{\bm a}
\Pr(\bm a\mid\bm\eta,\Sigma_{\epsilon},\Gamma_{\rm loc})
\,\mathbf1[\mathcal L^{\meas}_{a,\bm a}\neq\varnothing].
\end{equation*}
Thus the scalar substitution $\ell\mapsto\varepsilon_{\GKP}$ is valid only in the IID case.

\paragraph{Benchmark conventions and interpretation.}\label{subsec:fixed_family_results}

Every numerical result in this section uses the closest-coset decoder of Sec.~\ref{subsec:outcome_likelihood}; no alternative decoder is mixed into any panel. The simulations use the attenuation coordinate
\begin{equation*}
\ell=1-\eta,
\end{equation*}
and do not draw an independent Bernoulli deletion. The parameter $\ell$ changes the displacement covariance through the loss--amplification channel, while the availability mask is generated only after local GKP decoding and confidence filtering. To isolate the additional degradation caused by propagation relative to the nonzero finite-squeezing baseline, the logarithmic panels report
\begin{equation}
P_{\rm exc}^{a}(\ell,\sigma_{\GKP})
=P_{\rm fail}^{a}(\ell,\sigma_{\GKP})
-P_{\rm fail}^{a}(0,\sigma_{\GKP}),
\label{eq:excess_failure_definition}
\end{equation}
with a numerical plotting floor applied only on logarithmic axes. The total failure probability is always recovered as
\begin{equation*}
P_{\rm fail}^{a}(\ell,\sigma_{\GKP})
=P_{\rm fail}^{a}(0,\sigma_{\GKP})+P_{\rm exc}^{a}(\ell,\sigma_{\GKP}),
\end{equation*}
and remains the absolute operational quantity.

We use three distinct benchmark markers. First, the fixed points of Eqs.~\eqref{eq:cube_accessibility_polynomial} and \eqref{eq:decorated_accessibility_polynomial} characterize only the IID erasure-accessibility subproblem. Second, an attenuation-reference crossing $\ell_{\rm ref}$ is defined by
\begin{equation}
P_{\rm exc}^{a}(\ell_{\rm ref},\sigma_{\GKP})=\ell_{\rm ref}.
\label{eq:attenuation_reference_crossing}
\end{equation}
Because $\ell$ is a transmissivity deficit rather than a Pauli-error probability, Eq.~\eqref{eq:attenuation_reference_crossing} is a  reference crossing, not a fault-tolerance threshold. Third, the ratio-one contour defines an excess-noise break-even point
\begin{equation}
P_{\rm exc}^{a,\rm enc}(\ell_{\rm br},\sigma_{\GKP})
=P_{\rm exc}^{a,\rm 1GKP}(\ell_{\rm br},\sigma_{\GKP}).
\label{eq:excess_break_even}
\end{equation}
It compares only the propagation-induced increments; it does not by itself establish an advantage in total failure probability.

For experimental interpretation we quote both attenuation and GKP squeezing in decibels. The plotted $\sigma_{\GKP}$ is taken in the conventional quadrature normalization with vacuum variance $1/2$; in the displacement coordinates of Eq.~\eqref{eq:displacement}, its preparation-noise contribution is $\Sigma_{\rm prep}=\sigma_{\GKP}^{2}I/(2\pi)$. We use
\begin{align}
\mathcal L_{\rm ch}^{\rm dB}(\ell)
&=-10\log_{10}\eta=-10\log_{10}(1-\ell),\nonumber\\
s_{\GKP}^{\rm dB}
&=-10\log_{10}\!\left(2\sigma_{\GKP}^{2}\right).
\label{eq:db_conversions}
\end{align}
Thus $\sigma_{\GKP}=0.10,0.15,0.20$ correspond respectively to $17.0$, $13.5$, and $11.0\,\mathrm{dB}$ of GKP squeezing. 

Figure~\ref{fig:result1_square_excess} specializes the closest-coset decoder to the cube and decorated-pentagon progenitors. The cube implements logical Pauli measurements, whereas the decorated pentagon implements the protected transport and frame-management layer of the $A(\theta)$ branch with an ideal supplied terminal resource. The line panels compare the encoded excess failure with a matched single-block GKP reference; the heat maps resolve the joint dependence on $\ell$ and $\sigma_{\GKP}$. A ratio below one means a smaller propagation-induced excess, not necessarily a smaller total logical failure. The branch diagrams fix the exact labeled progenitors used by the simulations.

\begin{figure*}[!t]
\centering
\includegraphics[width=0.98\textwidth]{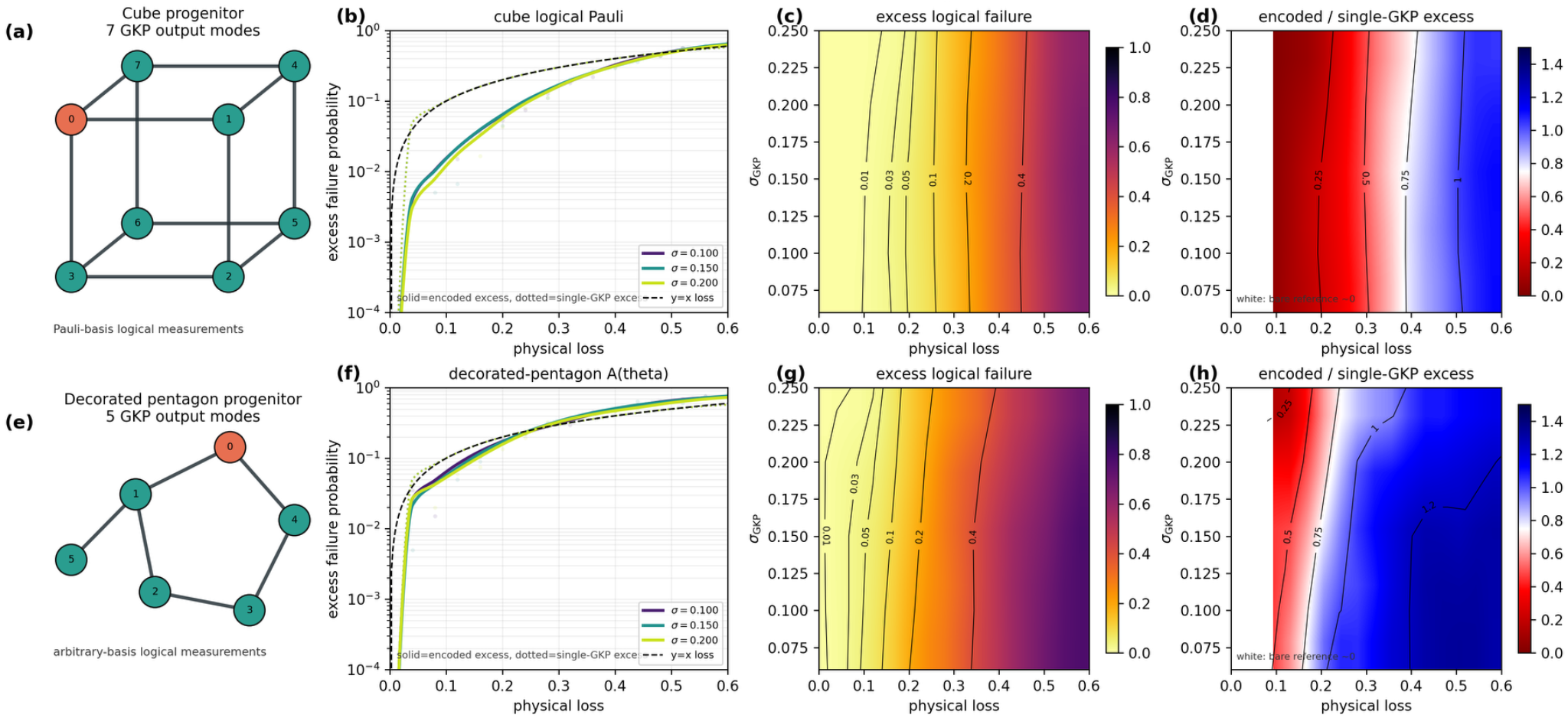}
\caption{Fixed-progenitor closest-coset benchmarks for square GKP blocks. Panels (a)--(d) use the seven-block cube code for logical Pauli measurements; panels (e)--(h) use the five-block decorated-pentagon code for the protected routing and frame layer surrounding an ideal supplied $A(\theta)$ terminal resource. Solid curves in (b) and (f) are encoded excess failure, dotted curves are the matched single-block GKP excess, and the dash-dotted line is the attenuation-coordinate reference. Panels (c) and (g) show encoded excess failure across $(\ell,\sigma_{\GKP})$, while (d) and (h) show the ratio of encoded to single-block excess. The coordinate $\ell=1-\eta$ enters through the Gaussian loss--amplification covariance; graph unavailability is generated by local decoding.}
\label{fig:result1_square_excess}
\end{figure*}

\paragraph{Fixed-progenitor interpretation.}
Reading the attenuation-reference intersections in Fig.~\ref{fig:result1_square_excess}, the cube Pauli module has $\ell_{\rm ref}\simeq0.49$--$0.51$, equivalent through Eq.~\eqref{eq:db_conversions} to approximately $2.9$--$3.1\,\mathrm{dB}$ of channel attenuation. The decorated-pentagon transport/frame benchmark has $\ell_{\rm ref}\simeq0.24$--$0.28$, or $1.2$--$1.4\,\mathrm{dB}$. The ratio-one contours give the distinct excess-noise pseudothreshold ranges: approximately $\ell_{\rm br}\simeq0.43$--$0.50$ ($2.4$--$3.0\,\mathrm{dB}$) for the cube and $\ell_{\rm br}\simeq0.22$--$0.35$ ($1.1$--$1.9\,\mathrm{dB}$) for the $A(\theta)$ transport/frame layer. Reducing the available squeezing from $17.0$ to $11.0\,\mathrm{dB}$ raises the excess-failure surface and narrows the region in which the encoded propagation-induced increment is smaller. These statements concern the closest-coset finite-size benchmark and do not include noise in the terminal non-Gaussian resource.

Figures~\ref{fig:graph_family_catalog}, \ref{fig:result2_hex_family}, and \ref{fig:result2_square_family} compare the selected $n=5,\ldots,11$ graph families for logical-Pauli measurements and for the protected $A(\theta)$ transport/frame task. Figure~\ref{fig:graph_family_catalog} shows the specific progenitor graphs used in the family sweeps; the orange vertex marks the input node $0$ from which the logical-measurement pathfinding starts. Graph size is not monotonically ordered once analog frame errors and decoder-generated erasures are included, and the displayed families were chosen only as representative benchmarks. They are therefore comparisons of the declared graph catalog, with mode-constrained optimization reserved for future work.

\begin{figure*}[!t]
\centering
\resizebox{0.98\linewidth}{!}{%
\begin{tikzpicture}[
  x=1cm,
  y=1cm,
  v/.style={
    circle,
    draw=black,
    line width=0.8pt,
    fill=gkpteal!95,
    minimum size=6mm,
    inner sep=0pt
  },
  vin/.style={
    circle,
    draw=black,
    line width=0.8pt,
    fill=gkporange!95,
    minimum size=6mm,
    inner sep=0pt
  },
  edge/.style={
    line width=0.9pt,
    draw=black!70,
    line cap=round,
    line join=round
  },
  nlabel/.style={font=\Large}
]

\path[use as bounding box] (0.4,-0.1) rectangle (32.2,5.7);

\node[font=\bfseries\LARGE] at (16.4,5.25)
  {Graph-code families selected for measurement simulations};

\node[v] at (12.1,4.20) {};
\node[anchor=west,font=\Large] at (12.5,4.20)
  {remaining graph vertex};

\node[vin] at (19.1,4.20) {};
\node[anchor=west,font=\Large] at (19.5,4.20)
  {input vertex \(0\)};

\begin{scope}[shift={(3.0,1.45)},name prefix=g5-]
\node[nlabel] at (0,2.05) {$n=5$};

\coordinate (a1) at (-0.9,0.0);
\coordinate (a2) at (0.0,1.0);
\coordinate (a3) at (1.1,0.6);
\coordinate (a4) at (1.1,-0.8);
\coordinate (a5) at (0.0,-1.2);

\draw[edge]
  (a1)--(a2)--(a3)--(a4)--(a5)--(a1);

\node[v]   at (a1) {};
\node[v]   at (a2) {};
\node[vin] at (a3) {};
\node[v]   at (a4) {};
\node[v]   at (a5) {};
\end{scope}

\begin{scope}[shift={(7.3,1.45)},name prefix=g6-]
\node[nlabel] at (0,2.05) {$n=6$};

\coordinate (a1) at (-1.4,0.3);
\coordinate (a2) at (-0.4,0.1);
\coordinate (a3) at (0.1,1.0);
\coordinate (a4) at (1.1,0.3);
\coordinate (a5) at (0.9,-0.8);
\coordinate (a6) at (-0.2,-0.9);

\draw[edge]
  (a2)--(a3)--(a4)--(a5)--(a6)--(a2)
  (a1)--(a2);

\node[v]   at (a1) {};
\node[v]   at (a2) {};
\node[v]   at (a3) {};
\node[vin] at (a4) {};
\node[v]   at (a5) {};
\node[v]   at (a6) {};
\end{scope}

\begin{scope}[shift={(11.7,1.45)},name prefix=g7-]
\node[nlabel] at (0,2.05) {$n=7$};

\coordinate (a1) at (-1.2,0.7);
\coordinate (a2) at (-1.5,-0.4);
\coordinate (a3) at (-0.3,-1.0);
\coordinate (a4) at (0.9,-0.6);
\coordinate (a5) at (1.2,0.5);
\coordinate (a6) at (0.0,0.9);
\coordinate (a7) at (-0.2,0.0);

\draw[edge]
  (a1)--(a2)--(a3)--(a4)--(a5)--(a6)--(a1)
  (a6)--(a7)--(a3);

\node[v]   at (a1) {};
\node[v]   at (a2) {};
\node[v]   at (a3) {};
\node[vin] at (a4) {};
\node[v]   at (a5) {};
\node[v]   at (a6) {};
\node[v]   at (a7) {};
\end{scope}

\begin{scope}[shift={(16.0,1.45)},name prefix=g8-]
\node[nlabel] at (0,2.05) {$n=8$};

\coordinate (a1) at (-1.4,0.4);
\coordinate (a2) at (-0.8,1.3);
\coordinate (a3) at (0.3,0.9);
\coordinate (a4) at (1.2,0.4);
\coordinate (a5) at (0.9,-0.7);
\coordinate (a6) at (-0.2,-1.0);
\coordinate (a7) at (-1.2,-0.4);
\coordinate (a8) at (-0.8,0.0);

\draw[edge]
  (a1)--(a2)--(a3)--(a4)--(a5)--(a6)--(a7)--(a1)
  (a2)--(a6)
  (a8)--(a2)
  (a8)--(a5)
  (a1)--(a8);

\node[v]   at (a1) {};
\node[v]   at (a2) {};
\node[vin] at (a3) {};
\node[v]   at (a4) {};
\node[v]   at (a5) {};
\node[v]   at (a6) {};
\node[v]   at (a7) {};
\node[v]   at (a8) {};
\end{scope}

\begin{scope}[shift={(20.5,1.45)},name prefix=g9-]
\node[nlabel] at (0,2.05) {$n=9$};

\coordinate (a1) at (-1.4,0.5);
\coordinate (a2) at (-0.8,1.0);
\coordinate (a3) at (-0.5,-0.1);
\coordinate (a4) at (0.3,0.3);
\coordinate (a5) at (1.0,0.8);
\coordinate (a6) at (1.8,1.1);
\coordinate (a7) at (0.7,-0.6);
\coordinate (a8) at (-0.6,-0.9);
\coordinate (a9) at (-1.6,-0.4);

\draw[edge]
  (a1)--(a2)--(a4)--(a5)--(a7)--(a3)--(a1)
  (a3)--(a4)
  (a3)--(a8)
  (a8)--(a7)
  (a9)--(a1)
  (a9)--(a3)
  (a5)--(a6);

\node[v]   at (a1) {};
\node[v]   at (a2) {};
\node[v]   at (a3) {};
\node[vin] at (a4) {};
\node[v]   at (a5) {};
\node[v]   at (a6) {};
\node[v]   at (a7) {};
\node[v]   at (a8) {};
\node[v]   at (a9) {};
\end{scope}

\begin{scope}[shift={(24.8,1.45)},name prefix=g10-]
\node[nlabel] at (0,2.05) {$n=10$};

\coordinate (a1)  at (-1.4,0.6);
\coordinate (a2)  at (-0.9,1.1);
\coordinate (a3)  at (-0.2,0.8);
\coordinate (a4)  at (0.6,0.9);
\coordinate (a5)  at (0.9,0.1);
\coordinate (a6)  at (0.3,-0.6);
\coordinate (a7)  at (-0.6,-0.7);
\coordinate (a8)  at (-1.2,-0.2);
\coordinate (a9)  at (1.7,-0.1);
\coordinate (a10) at (2.5,-0.3);

\draw[edge]
  (a1)--(a2)--(a3)--(a4)--(a5)--(a6)--(a7)--(a8)--(a1)
  (a3)--(a6)
  (a2)--(a7)
  (a8)--(a3)
  (a5)--(a9)--(a10);

\node[v]   at (a1) {};
\node[vin] at (a2) {};
\node[v]   at (a3) {};
\node[v]   at (a4) {};
\node[v]   at (a5) {};
\node[v]   at (a6) {};
\node[v]   at (a7) {};
\node[v]   at (a8) {};
\node[v]   at (a9) {};
\node[v]   at (a10) {};
\end{scope}

\begin{scope}[shift={(30.0,1.45)},name prefix=g11-]
\node[nlabel] at (0,2.05) {$n=11$};

\coordinate (a1)  at (-1.5,0.2);
\coordinate (a2)  at (-0.9,0.9);
\coordinate (a3)  at (-0.3,1.2);
\coordinate (a4)  at (0.4,0.9);
\coordinate (a5)  at (0.9,0.1);
\coordinate (a6)  at (0.3,-0.7);
\coordinate (a7)  at (-0.5,-0.9);
\coordinate (a8)  at (-1.2,-0.4);
\coordinate (a9)  at (1.7,0.5);
\coordinate (a10) at (1.7,-0.4);
\coordinate (a11) at (0.9,0.6);

\draw[edge]
  (a1)--(a2)--(a3)--(a4)--(a5)--(a6)--(a7)--(a8)--(a1)
  (a2)--(a7)
  (a1)--(a11)
  (a3)--(a11)
  (a4)--(a11)
  (a5)--(a11)
  (a9)--(a11)
  (a10)--(a5);

\node[v]   at (a1) {};
\node[v]   at (a2) {};
\node[v]   at (a3) {};
\node[v]   at (a4) {};
\node[v]   at (a5) {};
\node[v]   at (a6) {};
\node[v]   at (a7) {};
\node[v]   at (a8) {};
\node[v]   at (a9) {};
\node[v]   at (a10) {};
\node[vin] at (a11) {};
\end{scope}

\end{tikzpicture}%
}
\caption{
Representative graph-code families used in the graph-family simulations
of Figs.~\ref{fig:result2_hex_family}
and~\ref{fig:result2_square_family}.
The same fixed catalog is used for both logical-Pauli measurements and
the protected \(A(\theta)\) transport/Pauli-frame task.
Orange marks the input vertex \(0\), while teal marks the remaining
graph-code vertices.
These graphs form the benchmark catalog used in the numerical sweeps
and are not claimed to be optimized at any fixed number of modes.
}
\label{fig:graph_family_catalog}
\end{figure*}

\begin{figure*}[!t]
\centering
\includegraphics[
  width=0.98\linewidth
]{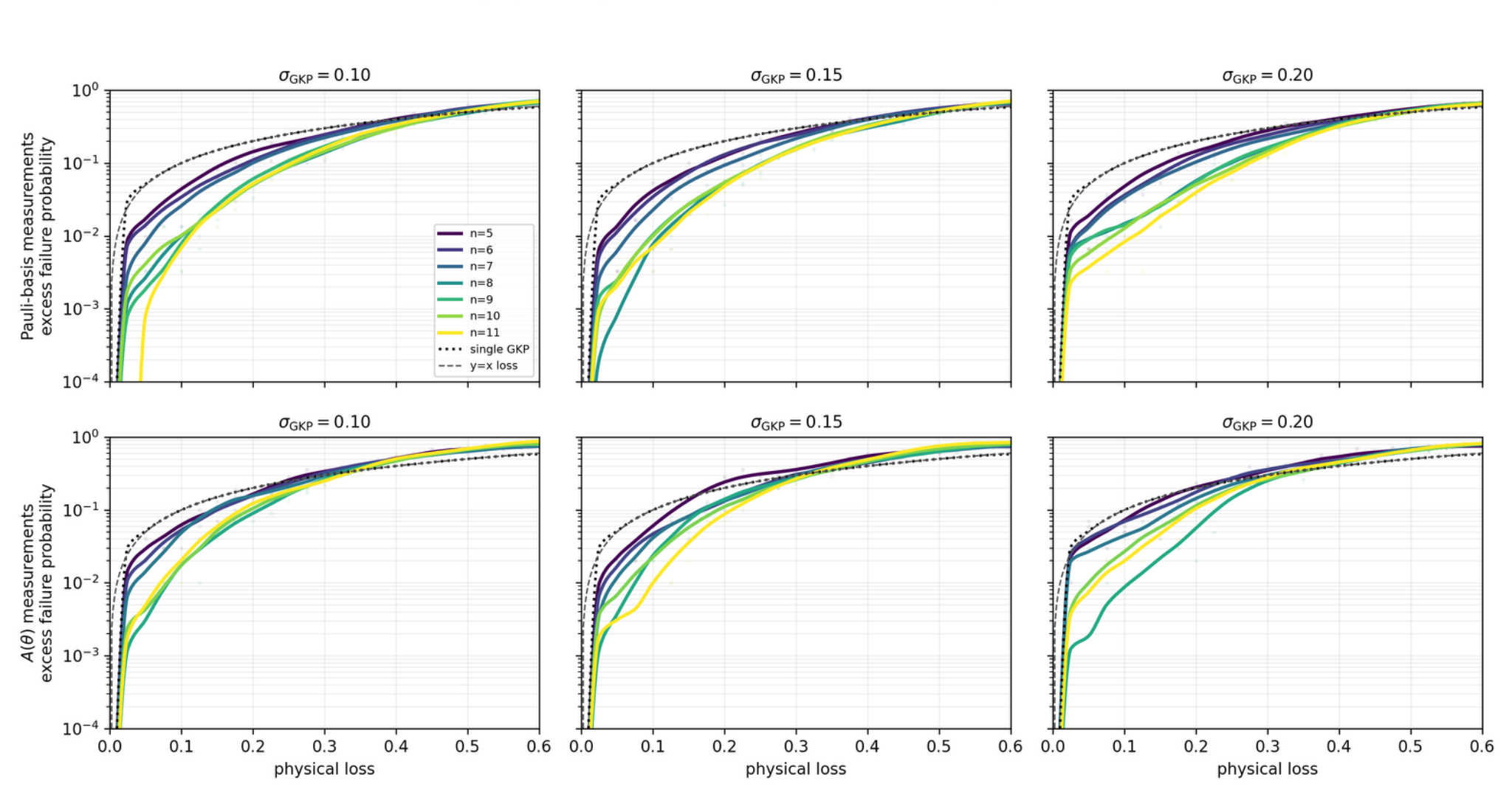}
\caption{
Graph-family excess-failure benchmarks for hexagonal-lattice GKP blocks
decoded with the closest-coset decoder.
The columns correspond to
\(\sigma_{\mathrm{GKP}}=0.10\), \(0.15\), and \(0.20\).
The upper row reports logical-Pauli measurements, while the lower row
reports the protected \(A(\theta)\) transport/Pauli-frame layer, for
graph sizes \(n=5,\ldots,11\).
The dotted black curve gives the matched single-block GKP excess-failure
probability, and the dash-dotted curve gives the attenuation-coordinate
reference.
Crossings with the relevant reference curves define finite-size
pseudothresholds for this fixed, unoptimized graph catalog.
}
\label{fig:result2_hex_family}
\end{figure*}

\paragraph{Hexagonal-family interpretation.}
For the best member of the displayed hexagonal catalog, the Pauli-measurement attenuation-reference crossing lies in the interval $\ell_{\rm ref}\simeq0.48$--$0.51$, corresponding to $2.8$--$3.1\,\mathrm{dB}$ of attenuation. The best protected $A(\theta)$ transport/frame crossing is $\ell_{\rm ref}\simeq0.34$--$0.37$, or $1.8$--$2.0\,\mathrm{dB}$. Across all sizes $n=5,\ldots,11$, the topology-dependent spread is larger than the shift produced by changing $\sigma_{\GKP}$ from $0.10$ ($17.0\,\mathrm{dB}$) to $0.20$ ($11.0\,\mathrm{dB}$): graph choice controls the displayed reference crossing, whereas finite squeezing mainly changes the vertical failure level. The nonmonotonic ordering with $n$ demonstrates topology dependence within the declared catalog, not an optimized size law.

\begin{figure*}[!t]
\centering
\includegraphics[width=0.98\textwidth]{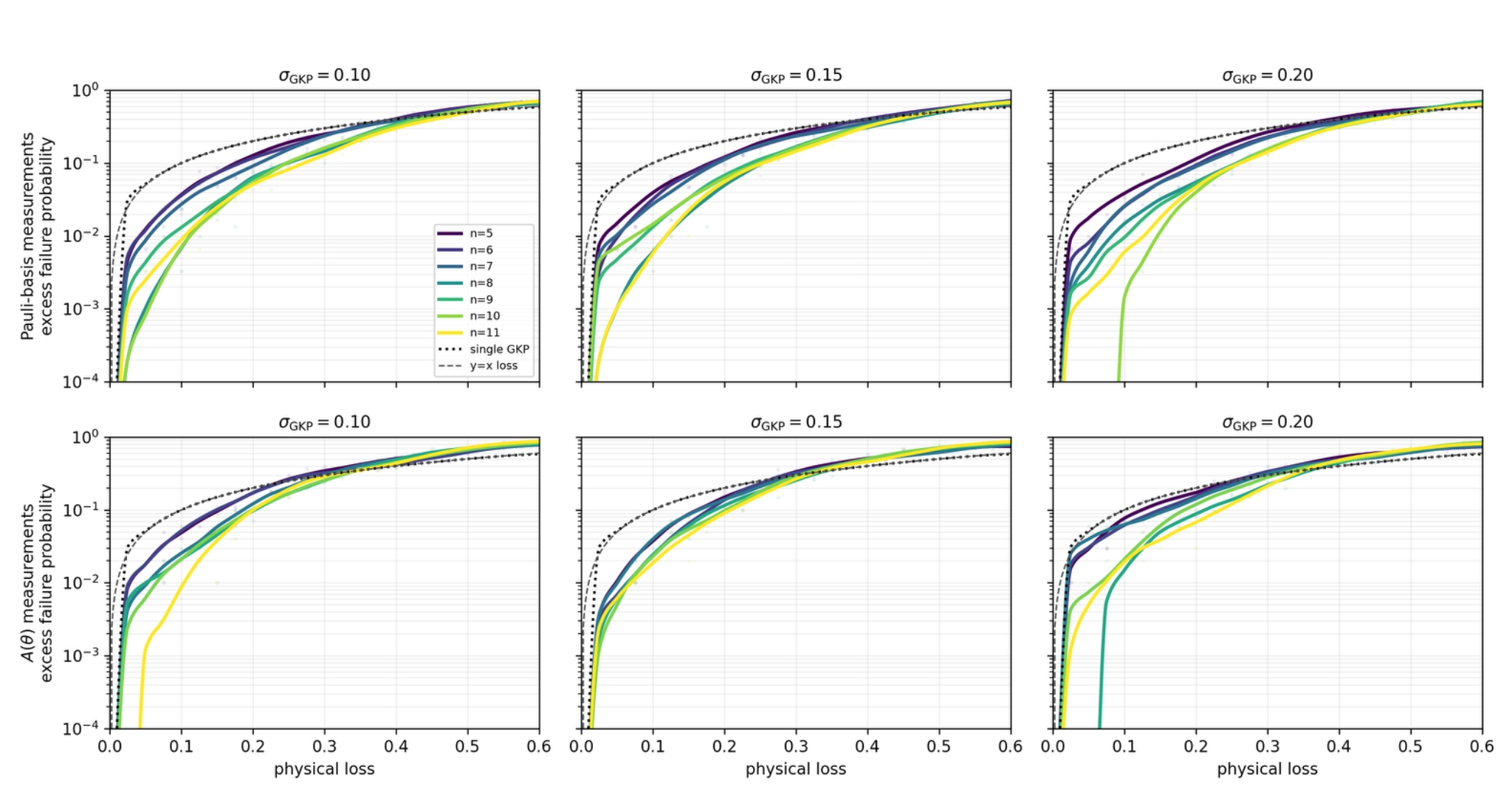}
\caption{Square-lattice counterpart of Fig.~\ref{fig:result2_hex_family}; all graph, decoder, ideal-terminal-resource, and reference-crossing conventions are unchanged.}
\label{fig:result2_square_family}
\end{figure*}

\paragraph{Square-family interpretation.}
The square-lattice catalog in Fig.~\ref{fig:result2_square_family} gives best Pauli attenuation-pseudothresholds $\ell_{\rm ref}\simeq0.47$--$0.50$ ($2.8$--$3.0\,\mathrm{dB}$) and best protected $A(\theta)$ transport/frame pseudothresholds $\ell_{\rm ref}\simeq0.34$--$0.35$ ($1.8$--$1.9\,\mathrm{dB}$). These values are close to the hexagonal results, but the hexagonal lattice generally lowers the excess-failure probability at fixed $(\ell,s_{\GKP}^{\rm dB})$ because of its larger shortest logical displacement. Consequently, the lattice effect appears more clearly in the vertical separation of the finite-size curves than in a large displacement of the attenuation-reference crossing.

\section{Extending to larger graphs by modularization}
\label{sec:modularization}
The static lattice substitution is unchanged by the adopted channel, but the effective channel passed between layers is not a scalar physical-loss probability.  Every child exports a Pauli-frame posterior and an availability decision generated by its decoder.  The internal origin of an unavailable child, $\accera$ or $\confera$, is retained for accounting, while the parent sees a located unavailable interface.

\subsection{Recursive graph--GKP modules and finite-depth benchmarks}

A level-one graph--GKP module is specified by
\begin{equation*}
\mathcal M_G=
\bigl(M_G,\phi_G,L_G,J_G,
\{e_{G,\alpha},f_{G,\alpha}\},\mathcal D_G,
\Gamma_{\rm loc},\Gamma_{\rm module}\bigr),
\end{equation*}
where the first five entries define the compiled oscillator code and $\mathcal D_G=\mathcal D_G^{\rm CC}$ is the fixed selected-branch closest-coset engine of Sec.~\ref{subsec:outcome_likelihood}. For physical parameters
$\lambda_{\rm phys}=(\bm\eta,\Sigma_{\rm prep},\Sigma_{\rm gate},\Sigma_{\rm hom})$, its syndrome-resolved channel is
\begin{equation*}
\mathcal N_G^{\rm post}(\lambda_{\rm phys}):
D_b\longmapsto
\bigl(\Pi_b(\sigma,e\mid D_b),\Pi_b^{\rm fr},
\widehat f_b,\Gamma_b^{\rm fr},a_b^{\rm int}\bigr),
\end{equation*}
where $a_b^{\rm int}=0$ if the module emits $\accera$ or $\confera$ at the declared interface.  The posterior remains part of the stored decoder record even when the interface rejects the module.  This prevents a parent decoder from treating a data-dependent erasure as an independent Bernoulli event.

\paragraph{Recursive lattice substitution.}

Suppose each physical qubit of an outer graph code is replaced by a level-$k$ graph--GKP module with lattice $L_G^{[k]}$ and logical representatives $e_i^{[k]},f_i^{[k]}$.  The child product code has quotient
\begin{equation*}
\mathsf H_{\rm child}^{[k]}
=\left(\bigoplus_i (L_G^{[k]})^\perp\right)
 \bigg/\left(\bigoplus_i L_G^{[k]}\right),
\end{equation*}
and the outer binary code is embedded by $\overline\Phi_G^{[k+1]}$.  A chosen lift is
\begin{equation*}
\Phi_G^{[k+1]}(x\mid z)
=\sum_{i=1}^{n_G}x_ie_i^{[k]}
+\sum_{i=1}^{n_G}z_if_i^{[k]}.
\end{equation*}
The level-$(k+1)$ lattice is the same quotient pullback as at level one,
\begin{align}
L_G^{[k+1]}
&=(\pi_{\rm child}^{[k]})^{-1}
 \!\left(\overline\Phi_G^{[k+1]}(Q_G(G))\right)\nonumber\\
&=\left(\bigoplus_{i=1}^{n_G}L_G^{[k]}\right)
+\spanZ\{\Phi_G^{[k+1]}(q):q\in Q_G(G)\}.
\label{eq:recursive_lattice}
\end{align}
Thus the loss--amplification channel changes the probabilistic interface but not the compiled lattice or logical quotient.

For conditionally independent children, the posterior-channel recursion is
\begin{equation}
\mathcal N_{k+1}^{\rm post}
=\mathcal R_G^{\rm post}
\bigl(\mathcal N_k^{\rm post}\bigr),
\qquad
\mathcal N_1^{\rm post}=\mathcal N_G^{\rm post}(\lambda_{\rm phys}),
\label{eq:channel_recursion}
\end{equation}
where $\mathcal R_G^{\rm post}$ multiplies the selected child likelihoods only after including each child selection/erasure event and then performs the outer syndrome and frame pushforward.  Correlated child records require their joint channel rather than the product in Eq.~\eqref{eq:channel_recursion}.  Algorithm~\ref{alg:recursive_decoder} applies this recursion at depth $k$.

\begin{prxalgorithm}{Recursive graph--GKP module decoding}
\label{alg:recursive_decoder}
\footnotesize
\begin{algorithmic}[1]
\Require Outer graph module $G$; depth $k$; physical loss--amplification parameters; local and module confidence rules; child joint-noise model.
\Ensure Level-$k$ syndrome-resolved frame channel and optional hard interface flag.
\If{$k=1$}
  \State Run Algorithm~\ref{alg:local_interface} on every block, construct the decoder-conditioned branch with Algorithm~\ref{alg:logical_branch}, and decode it with Algorithm~\ref{alg:syndrome_frame_decoder}; \Return $\mathcal N_1^{\rm post}$.
\EndIf
\State Decode every level-$(k-1)$ child and retain its complete frame posterior, hard interface flag, and the likelihood of the accepted/rejected event.
\State Construct $L_G^{[k]}$ by Eq.~\eqref{eq:recursive_lattice} and translate the child interface flags into the parent availability record.
\State Run the parent pathfinding policy before any destructive parent-level measurement.
\If{no parent representative survives}
  \State Return $\accera$ while retaining the complete child evidence.
\EndIf
\State Form the selected parent likelihood from the product of child posterior distributions, or from their declared joint channel when correlated.
\State Normalize the parent sector posterior, push it to the parent frame posterior, and threshold only at the declared parent interface.
\State \Return the level-$k$ posterior distribution and the disjoint $\accera/\confera$/accepted-frame label.
\end{algorithmic}
\end{prxalgorithm}

\paragraph{Accessibility recursion.}

If every child independently emits an unavailable flag with probability $\varepsilon_k$ and all accepted-frame errors are ignored, the two worked modules reduce to scalar accessibility recursions
\begin{align*}
\varepsilon_{k+1}^{\rm cube}
&=7\varepsilon_k^3-21\varepsilon_k^5
 +21\varepsilon_k^6-6\varepsilon_k^7,\\
\varepsilon_{k+1}^{\rm dec}
&=5\varepsilon_k^2-5\varepsilon_k^3+\varepsilon_k^5.\end{align*}
These maps are exact only for the availability subproblem under IID child flags.  They are useful algebraic checks of an implementation but are not the full recursive loss--amplification channel.  A physically meaningful recursion must additionally propagate accepted Pauli-frame errors, confidence distributions, branch-selection likelihoods, and any inter-block correlations.

The finite-squeezing simulations propagate complete closest-coset child posterior distributions rather than the scalar unit-test maps. Figures~\ref{fig:result3_hex_concat} and \ref{fig:result3_square_concat} compare depths $L=1,2,3,4$ for the Pauli and protected $A(\theta)$ transport/frame tasks. For continuity with the fixed-size panels we quote intersections with the attenuation-coordinate reference.  A level-to-level pseudothreshold, when a crossing exists, would instead satisfy
\begin{equation}
P_{\rm fail}^{(L+1)}(\ell_{\rm pseudo}^{(L)},\sigma_{\GKP})
=P_{\rm fail}^{(L)}(\ell_{\rm pseudo}^{(L)},\sigma_{\GKP}).
\label{eq:finite_depth_pseudothreshold}
\end{equation}

\begin{figure*}[!t]
\centering
\includegraphics[width=0.98\textwidth]{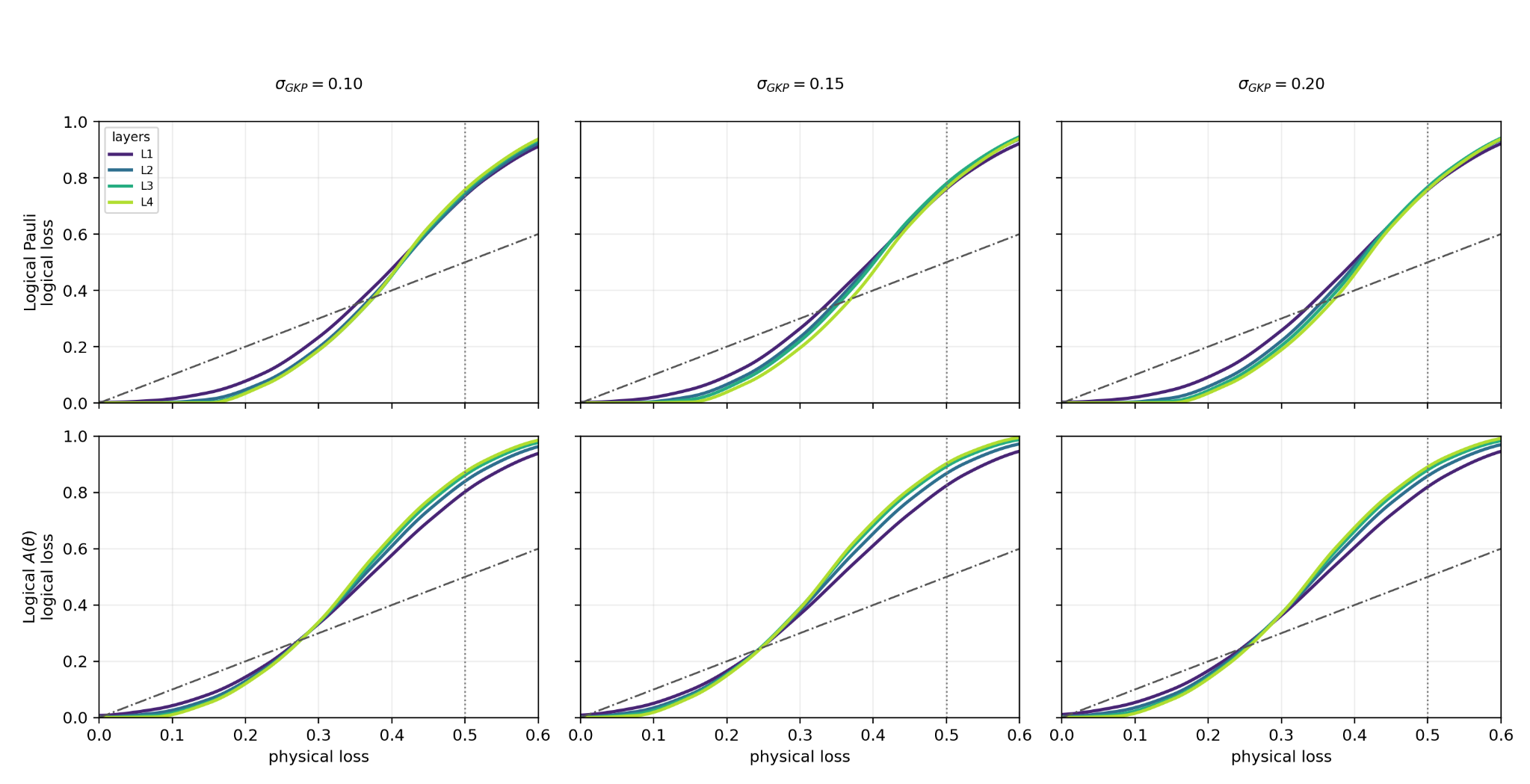}
\caption{Finite-depth recursive closest-coset loss sweeps for hexagonal GKP blocks. Columns use $\sigma_{\GKP}=0.10$, $0.15$, and $0.20$; the upper row is the cube logical-Pauli task and the lower row is the decorated-pentagon protected $A(\theta)$ transport/frame task. Curves L1--L4 correspond to one through four self-concatenation layers. The dash-dotted line is the attenuation-coordinate reference and the vertical dotted line marks $\ell=0.5$. }
\label{fig:result3_hex_concat}
\end{figure*}

\paragraph{Hexagonal-recursion interpretation.}
The finite-depth Pauli attenuation-reference pseudothresholds in Fig.~\ref{fig:result3_hex_concat} span $\ell_{\rm ref}\simeq0.33$--$0.37$, or $1.7$--$2.0\,\mathrm{dB}$, over $L=1,\ldots,4$ and the three squeezing values. The protected $A(\theta)$ transport/frame speudothresholds span $\ell_{\rm ref}\simeq0.24$--$0.27$, equivalent to $1.2$--$1.4\,\mathrm{dB}$. Additional layers lower the displayed logical failure in part of the low-attenuation regime and shift the reference intersections with depth. The degradation from $17.0$ to $11.0\,\mathrm{dB}$ squeezing appears as a higher logical-failure curve. Four finite layers do not establish asymptotic suppression.

\begin{figure*}[!t]
\centering
\includegraphics[width=0.98\textwidth]{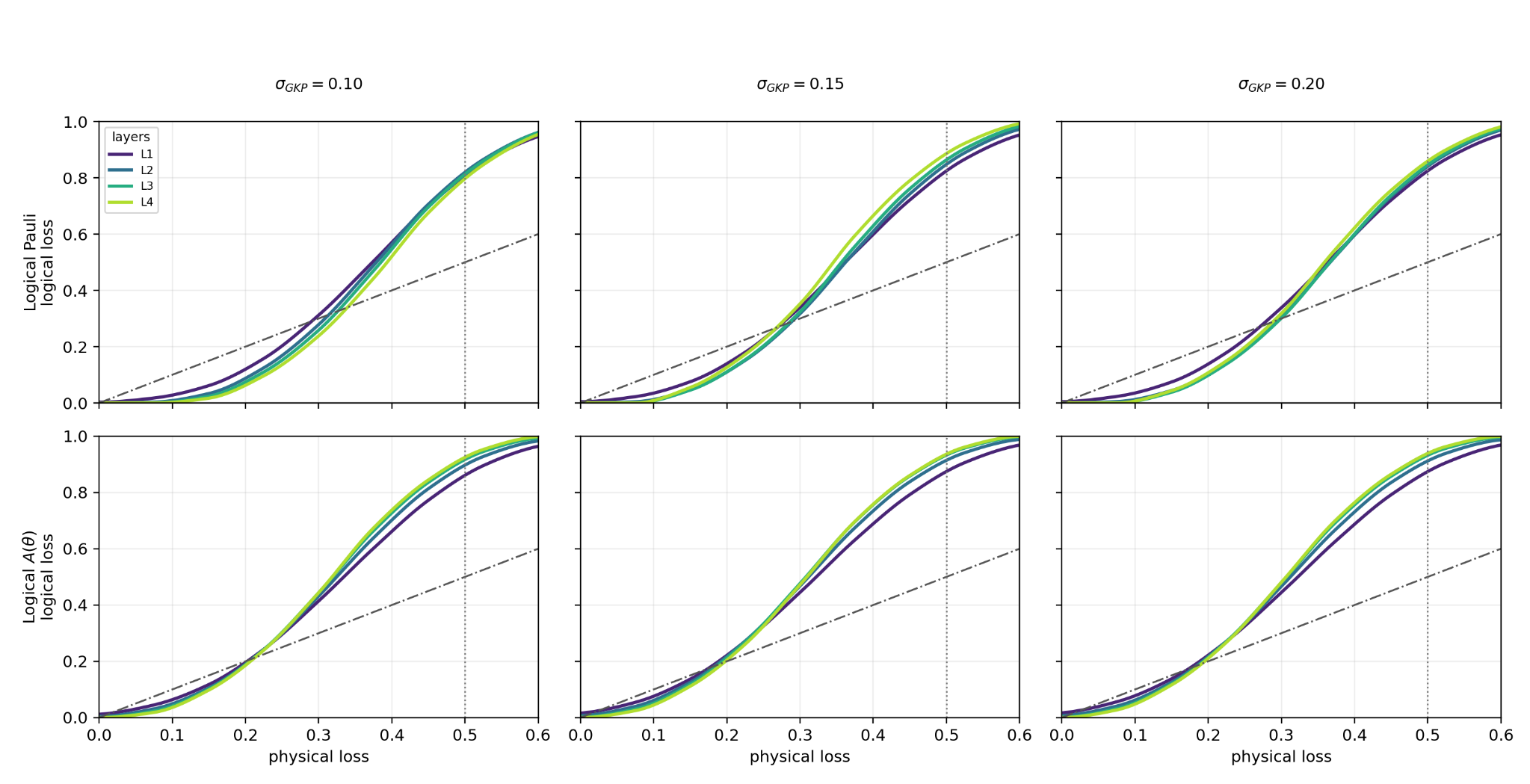}
\caption{Square-lattice counterpart of the finite-depth recursion study in Fig.~\ref{fig:result3_hex_concat}; the same ideal-terminal-resource and reference-crossing conventions apply.}
\label{fig:result3_square_concat}
\end{figure*}

\paragraph{Square-recursion interpretation.}
For square GKP blocks, the Pauli attenuation-reference pseudothresholds range from approximately $\ell_{\rm ref}=0.25$ to $0.34$, corresponding to $1.25$--$1.8\,\mathrm{dB}$, while the protected $A(\theta)$ transport/frame pseudothresholds range from $\ell_{\rm ref}=0.17$ to $0.21$, or $0.81$--$1.0\,\mathrm{dB}$. Relative to the hexagonal recursion, the smaller local distance reduces both displayed crossover windows. Increasing the depth lowers logical failure over part of the low-attenuation range.

\section{Logical graph-state fusions with GKP blocks}
\label{sec:fusion}

Fusion measurements are the second central primitive of photonic measurement-based architectures. A logical fusion should infer two commuting parity bits while distinguishing successful fusion, partial erasure, full erasure, and an accepted but wrong parity.

\subsection{Bell sectors and syndrome-resolved parity decoding}

Consider two independently prepared graph--GKP modules $A$ and $B$ with product lattice
\begin{equation*}
L_{AB}=L_G^{(A)}\oplus L_G^{(B)},
\qquad
J_{AB}=J_A\oplus J_B.
\end{equation*}
For boundary logical sites $a$ and $b$, the commuting parity observables are represented by
\begin{equation*}
f_X=e_a^{(A)}+e_b^{(B)},
\qquad
f_Z=f_a^{(A)}+f_b^{(B)}.
\end{equation*}
They commute because the two local half-integer symplectic pairings add to an integer. The four parity sectors are not generated by shifting with $f_X$ and $f_Z$ themselves, because a measured parity commutes with its own subgroup and does not change its eigenvalue. Instead choose dual shifts $g_X,g_Z$ satisfying
\begin{align}
2f_X^TJ_{AB}g_X&=1,&2f_Z^TJ_{AB}g_Z&=1,\nonumber\\
2f_X^TJ_{AB}g_Z&=0,&2f_Z^TJ_{AB}g_X&=0
\qquad(\bmod 2).
\label{eq:dual_shifts}
\end{align}
Let $\mathcal Q_{AB}\simeq\F_2^4$ denote the two-module logical Pauli quotient and define
\begin{equation*}
\varpi(v)=\bigl(2f_X^TJ_{AB}v,2f_Z^TJ_{AB}v\bigr)\pmod2.
\end{equation*}
The common periodicity after adjoining the measured parity subgroup is
\begin{equation*}
L_{\Bell}=L_{AB}+\spanZ\{f_X,f_Z\}.
\end{equation*}
For a parity-action label $E=(e_X,e_Z)\in\F_2^2$, define the affine sector
\begin{equation*}
\mathcal C^{E}_{\fus}=e_Xg_X+e_Zg_Z+L_{\Bell}.
\end{equation*}

\begin{theorem}[CV/GKP Bell parity-sector reduction]
\label{thm:fusion_partition}
The map $\varpi:\mathcal Q_{AB}\to\F_2^2$ is surjective and
\begin{equation*}
\ker\varpi=H_{\Bell}:=\spanF\{f_X,f_Z\}.
\end{equation*}
The dual shifts in Eq.~\eqref{eq:dual_shifts} define a section of $\varpi$. Consequently every two-logical-qubit Pauli class has a unique decomposition
\begin{equation}
v=e_Xg_X+e_Zg_Z+h,
\qquad (e_X,e_Z)\in\F_2^2,
\quad h\in H_{\Bell}.
\label{eq:fusion_unique_decomposition}
\end{equation}
Under the faithful displacement embedding, adjoining $H_{\Bell}$ to the product-lattice periodicity gives $L_{\Bell}$, and the four affine sets $\mathcal C^E_{\fus}$ are precisely the four parity-action sectors: they are disjoint modulo $L_{\Bell}$ and together contain all sixteen two-module logical Pauli classes.
\end{theorem}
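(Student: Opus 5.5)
The plan is to reduce everything to linear algebra over $\F_2$ on the finite quotient $\mathcal Q_{AB}=(L_{AB}^\perp)/L_{AB}\simeq\F_2^4$, equipped with the nondegenerate binary pairing $\langle[\xi],[\eta]\rangle=2\xi^TJ_{AB}\eta \pmod 2$. This pairing is well defined for the same reason as $\langle\cdot,\cdot\rangle_{\rm loc}$: shifting a representative by $L_{AB}$ changes $\xi^TJ_{AB}\eta$ by an integer. Nondegeneracy follows because $L_G^{(A)}$ and $L_G^{(B)}$ each encode one qubit, so $L_{AB}^\perp/L_{AB}\simeq Q_G^\perp/Q_G\oplus Q_G^\perp/Q_G$ carries the standard symplectic form (Theorem~\ref{thm:branch_compilation} and the compilation identity $L_G^\perp/L_G\simeq\F_2^2$). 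In this language $\varpi([v])=(\langle f_X,v\rangle,\langle f_Z,v\rangle)$, so $\varpi$ is a well-defined linear map. The isomorphism identifies $f_X,f_Z$ with $X_aX_b$ and $Z_aZ_b$, since $e^{(A)},f^{(A)}$ satisfy $\langle e,f\rangle=1$.

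The first step is to show that $f_X$ and $f_Z$ are linearly independent in $\mathcal Q_{AB}$ and mutually orthogonal. Independence holds because they are distinct nonzero classes ($X_aX_b\neq Z_aZ_b$). Orthogonality holds because $\langle f_X,f_Z\rangle=\langle e_a,f_a\rangle+\langle e_b,f_b\rangle=1+1=0$, which is the commutation remark preceding the theorem. Nondegeneracy then makes $v\mapsto(\langle f_X,v\rangle,\langle f_Z,v\rangle)$ surjective, since the two functionals are independent. Hence $\dim\ker\varpi=2$. Because $H_{\Bell}=\spanF\{f_X,f_Z\}$ is isotropic, it lies in $\ker\varpi$, and dimension counting gives equality.

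For the section, I would exhibit $g_X=f_a^{(A)}$ and $g_Z=e_a^{(A)}$ (i.e., $Z_a$ and $X_a$) and check Eq.~\eqref{eq:dual_shifts} directly. More abstractly, any $g_X,g_Z$ satisfying Eq.~\eqref{eq:dual_shifts} have $\varpi(g_X)=(1,0)$ and $\varpi(g_Z)=(0,1)$, so $(e_X,e_Z)\mapsto e_Xg_X+e_Zg_Z$ is a linear section. The decomposition \eqref{eq:fusion_unique_decomposition} then follows from the splitting $\mathcal Q_{AB}=\operatorname{im}(\text{section})\oplus\ker\varpi$. Uniqueness holds because $(e_X,e_Z)=\varpi(v)$ is forced, and then $h=v-e_Xg_X-e_Zg_Z$ is determined.

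Finally, I would lift everything to phase space. The preimage of $H_{\Bell}$ under $\pi:L_{AB}^\perp\to\mathcal Q_{AB}$ is $L_{AB}+\spanZ\{f_X,f_Z\}=L_{\Bell}$, by the same pullback argument as Eq.~\eqref{eq:static_lattice_pullback}. Cosets of $L_{\Bell}$ in $L_{AB}^\perp$ therefore correspond bijectively to cosets of $H_{\Bell}$ in $\mathcal Q_{AB}$, which are the fibers of $\varpi$. Hence $\mathcal C^E_{\fus}=\pi^{-1}(\varpi^{-1}(E))$. These four sets are pairwise disjoint, each contains four quotient classes, and together they exhaust all sixteen. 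I expect the main obstacle to be this lifting step, not the algebra. One must make sure the chosen lifts $\Phi$ (linear only modulo $L_{AB}$) and any sign or phase sectors do not spoil the identification. I would handle this by noting that $\varpi$ and the coset statements depend only on classes modulo $L_{AB}\subseteq L_{\Bell}$, so the carries are absorbed into $L_{\Bell}$ and phases are irrelevant to set membership.
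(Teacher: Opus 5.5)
Your proposal is correct and follows the same route as the paper, whose entire justification is the one-line remark that the theorem follows from rank--nullity for the parity map $\varpi$ together with the faithful displacement embedding. Your argument is a faithful expansion of that sketch: nondegeneracy of the quotient pairing gives surjectivity, isotropy of $H_{\Bell}$ plus dimension counting gives $\ker\varpi=H_{\Bell}$, the explicit choice $g_X=f_a^{(A)}$, $g_Z=e_a^{(A)}$ supplies a section, and the pullback identification $\mathcal C^E_{\fus}=\pi^{-1}(\varpi^{-1}(E))$ handles the lift, where, as you note, carries and phases are absorbed because everything depends only on classes modulo $L_{AB}\subseteq L_{\Bell}$.
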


Theorem~\ref{thm:fusion_partition} is a direct consequence of rank--nullity for the parity map together with the faithful displacement embedding. Figure~\ref{fig:fusion_logic} illustrates the resulting sector logic.

\begin{figure*}[!t]
\centering
\resizebox{0.96\textwidth}{!}{\begin{tikzpicture}[
  >=Latex,
  every node/.style={font=\footnotesize},
  box/.style={draw=black!55,rounded corners=2pt,very thick,align=center,text width=35mm,minimum height=12mm,inner sep=4pt},
  arrow/.style={-{Latex[length=2.4mm]},very thick,draw=black!65}
]
\node[box,fill=gkpblue!11] (parity) at (0,0) {measured parity subgroup\\$H_{\rm Bell}=\mathrm{span}_{\F_2}\{f_X,f_Z\}$};
\node[box,fill=gkpgreen!12,right=8mm of parity] (dual) {conjugate dual shifts\\$g_X,g_Z$\\$2f_i^TJg_j=\delta_{ij}$};
\node[box,fill=gkporange!13,right=8mm of dual,text width=58mm] (sectors) {four affine sectors\\$L_{\rm Bell}$, $g_X+L_{\rm Bell}$, $g_Z+L_{\rm Bell}$, $g_X+g_Z+L_{\rm Bell}$};
\node[box,fill=gkppurple!12,right=8mm of sectors] (chars) {parity-action characters\\$00,10,01,11$};
\draw[arrow] (parity)--(dual); \draw[arrow] (dual)--(sectors); \draw[arrow] (sectors)--(chars);
\end{tikzpicture}}
\caption{Fusion-sector logic. The commuting measured parity generators define the Bell parity subgroup. Outcome sectors are labeled by characters of this subgroup, equivalently by conjugate dual shifts satisfying Eq.~\eqref{eq:dual_shifts}. Shifting by a measured parity generator itself would leave its eigenvalue unchanged.}
\label{fig:fusion_logic}
\end{figure*}

\paragraph{Parity likelihood and erasure classes.}

For single-mode boundary blocks, a CV Bell measurement may be implemented with a $50{:}50$ beam splitter followed by homodyne measurements of
\begin{equation*}
q_-=(q_a-q_b)/\sqrt 2,
\qquad
p_+=(p_a+p_b)/\sqrt 2.
\end{equation*}
The four sectors $\mathcal C^{E}_{\fus}$ are disjoint modulo $L_{\Bell}$ and exhaust the sixteen two-logical-qubit Pauli classes after grouping them by their action on the measured parities.

Before a Bell attempt, both interface blocks undergo Algorithm~\ref{alg:local_interface}.  The complete fusion record is
\begin{equation}
D_{\fus}=\bigl(D_A^{\rm loc},D_B^{\rm loc},a_A,a_B,
D_{\Bell}^{\rm out}\bigr).
\label{eq:loss_amplification_fusion_record}
\end{equation}
If either local decoder abstains, the corresponding physical Bell attempt is omitted and the rejected local syndrome remains evidence.  If both are accepted, $D_{\Bell}^{\rm out}$ contains the two Bell homodyne records.  The selected fusion likelihood therefore has the causal form
\begin{equation*}
p_{\fus}(D_{\fus},b\mid E)
=\mathbf1[b=\mathcal B_{\fus}(D_A^{\rm loc},D_B^{\rm loc})]
 p_{\LA}(D_{\fus}\mid E),
\end{equation*}
with local $H^{\rm acc}$ or $H^{\rm era}$ factors exactly as in Eq.~\eqref{eq:loss_amplification_factorized_score}.  There is no independent photon-loss draw in this channel description.  A probabilistic linear-optical Bell analyzer may be composed as a separate declared hardware channel, but it is not generated by $1-\eta$ in the adopted channel.

Let $P_{\fus}$ be the Bell observation map, $r_{\fus}$ the visible residual, and $\Sigma_{\fus,\obs}$ the covariance. With $K^{\rm lat}_{\fus}=L_{\Bell}\cap\ker P_{\fus}$, the exact translation-covariant maximum-likelihood target for parity sector $E$ is
\begin{align*}
W_{\fus,{\rm ML}}(E\mid r_{\fus})
&=\sum_{[\lambda]\in\mathcal C^E_{\fus}/K^{\rm lat}_{\fus}}
\exp[-Q_{\fus}(\lambda)/2],\\
Q_{\fus}(\lambda)
&=(r_{\fus}-P_{\fus}\lambda)^T
\Sigma_{\fus,\obs}^{-1}
(r_{\fus}-P_{\fus}\lambda).
\end{align*}
The implemented fusion decoder uses the closest-coset score
\begin{align*}
D_{\fus}^{\rm CC}(E)
&=\min_{[\lambda]\in\mathcal C^E_{\fus}/K^{\rm lat}_{\fus}}
Q_{\fus}(\lambda),\\
\widehat W_{\fus,{\rm CC}}(E)
&=e^{-D_{\fus}^{\rm CC}(E)/2},
\end{align*}
with the discrete local and Bell factors included as additive negative-log costs when present. If the Bell readout is not translation covariant, each CC sector is assigned its calibrated normalizer. The primary fusion output is the normalized closest-coset distribution over the parity-frame error class,
\begin{equation}
\Pi_{\fus}(E\mid D_{\fus})
=
\frac{\pi(E)\widehat p_{\fus}^{\rm CC}(D_{\fus}\mid E)}
{\sum_{E'}\pi(E')\widehat p_{\fus}^{\rm CC}(D_{\fus}\mid E')},
\label{eq:fusion_posterior}
\end{equation}
where $D_{\fus}$ contains the discrete Bell data and analog residual. In the translation-covariant Gaussian model, $\widehat p_{\fus}^{\rm CC}$ is proportional to $\widehat W_{\fus,{\rm CC}}$. The maximum-score parity-frame estimate and confidence are
\begin{align*}
\widehat E
&=\argmax_E\Pi_{\fus}(E\mid D_{\fus}),\\
\Gamma_{\fus}
&=\log\frac{\Pi_{\fus}(\widehat E\mid D_{\fus})}
{1-\Pi_{\fus}(\widehat E\mid D_{\fus})}.\end{align*}
The estimated $\widehat E$ is the Pauli-frame update passed to the outer fusion-network controller. The reported parity values are only the pushforward
\begin{align*}
\Pr_{\fus}(s_X,s_Z\mid D_{\fus})
={}&\sum_E\Pi_{\fus}(E\mid D_{\fus})\\
&\times\mathbf1[(s_X,s_Z)=\chi_{\fus}\oplus E].\end{align*}
If module syndromes are unresolved, Eq.~\eqref{eq:fusion_posterior} is first formed on $(\sigma_A,e_A,\sigma_B,e_B)$ and then pushed forward through the parity-action map $\varpi$.

The two marginal parity confidences may be thresholded independently when the architecture permits a one-parity erasure. Both accepted parities give a successful logical fusion unless the estimated parity frame is wrong; one accepted parity gives a partial erasure; neither gives a full erasure. If one Bell quadrature is absent, the associated precision vanishes and the corresponding posterior becomes uninformative. A separate $p_{\rm fail}=1/2$ linear-optical Bell-failure layer, when relevant to a particular hardware encoding, must be composed explicitly and is not intrinsic to the CV/GKP likelihood. An ``unflagged fusion fault'' is the event that the accepted frame estimate differs from the simulated true class; it is not observable by the decoder during operation.

Algorithm~\ref{alg:fusion_decoder} gives the corresponding logical parity-fusion decoder.

\begin{prxalgorithm}{Syndrome-resolved logical parity-fusion decoder}
\label{alg:fusion_decoder}
\footnotesize
\begin{algorithmic}[1]
\Require Two interface modules under the adopted loss--amplification channel; local records and confidence rules; Bell readout circuit; module priors; parity-confidence thresholds.
\Ensure Posterior over parity-frame classes, MAP frame/confidence, parity marginals, and disjoint erasure labels.
\State Run Algorithm~\ref{alg:local_interface} on both interface blocks and retain both local posteriors and decisions.
\If{either interface is locally erased}
  \State Omit the Bell attempt, retain the corresponding rejection likelihood, and expose the declared located fusion-interface erasure.
\Else
  \State Perform the Bell measurement and append $D_{\Bell}^{\rm out}$ to Eq.~\eqref{eq:loss_amplification_fusion_record}.
\EndIf
\State Construct $f_X,f_Z$, verify dual shifts $g_X,g_Z$, and form $L_{\Bell}=L_{AB}+\spanZ\{f_X,f_Z\}$.
\For{each retained $(\sigma_A,e_A,\sigma_B,e_B)$}
  \State Compute $E=\varpi(e_A,e_B)$ and evaluate the selected-branch closest-coset score, including local accept/reject factors and any intrinsic Bell-analyzer outcome.
\EndFor
\State Normalize the joint weights and push them forward to $\Pi_{\fus}(E\mid D_{\fus})$.
\State Compute $(\widehat E,\Gamma_{\fus})$ and the two parity marginals.
\State Apply the declared marginal or joint confidence rule: one rejected parity is a partial erasure, two are a full erasure, and an accepted mismatch $\widehat E\neq E$ is a fusion-frame fault.
\State \Return the complete parity-frame posterior and hard interface label.
\end{algorithmic}
\end{prxalgorithm}

Figure~\ref{fig:fusion_schematic} summarizes the transversal and adaptive fusion patterns used below.

\begin{figure}[!t]
\centering
\resizebox{0.98\columnwidth}{!}{\begin{tikzpicture}[
  x=1cm,y=1cm,>=Latex,
  every node/.style={font=\footnotesize},
  panel/.style={draw=black!25,rounded corners=3pt,fill=black!1},
  q/.style={circle,draw=black!60,very thick,minimum size=5.7mm,inner sep=0pt},
  pair/.style={line width=1.5pt,draw=gkporange!90},
  chosen/.style={line width=1.8pt,draw=gkpblue},
  pending/.style={line width=1.2pt,draw=black!30,densely dotted},
  flow/.style={-{Latex[length=2.2mm]},very thick,draw=black!65},
  box/.style={draw=black!45,rounded corners=2pt,very thick,align=center,
              fill=white,minimum height=9mm,inner sep=2.2pt},
  decoder/.style={box,fill=gkpgreen!10},
  decision/.style={box,fill=gkporange!10},
  output/.style={box,fill=gkpblue!8},
  note/.style={font=\scriptsize,align=center,text=black!70}
]

\node[panel,minimum width=15.8cm,minimum height=3.05cm,anchor=south west]
  at (0,4.15) {};
\node[anchor=west,font=\small\bfseries] at (0.35,6.88)
  {(a) Transversal policy: fixed Bell-attempt schedule};
\node[note,anchor=west] at (0.45,6.47)
  {all candidate interface blocks are locally refreshed before the schedule is executed};

\node[anchor=east,font=\scriptsize\bfseries] at (0.82,5.77) {module $A$};
\node[anchor=east,font=\scriptsize\bfseries] at (0.82,4.72) {module $B$};

\foreach \x/\lab in {1.25/1,2.10/2,2.95/3,3.80/4}{
  \node[q,fill=gkpteal!70] (TA\lab) at (\x,5.77) {};
  \node[q,fill=gkpteal!70] (TB\lab) at (\x,4.72) {};
}
\draw[black!35,thick] (TA1)--(TA2)--(TA3)--(TA4);
\draw[black!35,thick] (TB1)--(TB2)--(TB3)--(TB4);
\foreach \i in {1,2,3,4}{
  \draw[pair] (TA\i)--(TB\i);
  \node[font=\scriptsize,text=gkporange!90] at ($(TB\i)+(0,-0.38)$) {$\i$};
}
\node[note] at (2.52,4.23) {fixed candidate list: all scheduled pairs are consumed};

\node[box,text width=25mm] (recordsT) at (5.55,5.25)
  {collect all Bell records\\$D_{\rm Bell}^{(1:r)}$};
\node[decoder,text width=25mm] (decodeT) at (8.55,5.25)
  {joint closest-coset\\parity decoding};
\node[output,text width=37mm] (outT) at (12.55,5.25)
  {$\Pi_{\rm fus}(E\mid D_{\rm fus})$\\success / partial erasure /\\full erasure / frame fault};

\draw[flow] (4.25,5.25)--(recordsT.west);
\draw[flow] (recordsT.east)--(decodeT.west);
\draw[flow] (decodeT.east)--(outT.west);

\node[panel,minimum width=15.8cm,minimum height=3.80cm,anchor=south west]
  at (0,0.10) {};
\node[anchor=west,font=\small\bfseries] at (0.35,3.57)
  {(b) Adaptive policy: posterior-driven stopping};
\node[note,anchor=west] at (0.45,3.17)
  {candidate pairs are processed sequentially; every retained record updates both parity marginals};

\node[anchor=east,font=\scriptsize\bfseries] at (0.82,2.28) {module $A$};
\node[anchor=east,font=\scriptsize\bfseries] at (0.82,1.23) {module $B$};

\foreach \x/\lab in {1.25/1,2.10/2,2.95/3,3.80/4}{
  \node[q,fill=gkpteal!70] (AA\lab) at (\x,2.28) {};
  \node[q,fill=gkpteal!70] (AB\lab) at (\x,1.23) {};
}
\draw[black!35,thick] (AA1)--(AA2)--(AA3)--(AA4);
\draw[black!35,thick] (AB1)--(AB2)--(AB3)--(AB4);
\draw[pair,dashed] (AA1)--(AB1);
\draw[chosen] (AA2)--(AB2);
\draw[pending] (AA3)--(AB3);
\draw[pending] (AA4)--(AB4);
\node[font=\scriptsize,text=gkporange!90] at ($(AB1)+(0,-0.38)$) {$1$};
\node[font=\scriptsize,text=gkpblue] at ($(AB2)+(0,-0.38)$) {$2$};
\node[font=\scriptsize,text=black!45] at ($(AB3)+(0,-0.38)$) {$3$};
\node[font=\scriptsize,text=black!45] at ($(AB4)+(0,-0.38)$) {$4$};
\node[box,text width=22mm] (attemptA) at (5.25,1.78)
  {attempt pair $j$\\retain $D_{\rm Bell}^{(j)}$};
\node[decoder,text width=23mm] (updateA) at (7.85,1.78)
  {update $\Pi_{\rm fus}$\\and $\Delta_X,\Delta_Z$};
\node[decision,text width=26mm] (testA) at (10.72,1.78)
  {$\Delta_X,\Delta_Z\geq\Delta_{\rm th}$?};
\node[output,text width=26mm] (acceptA) at (14.05,2.43)
  {yes: stop\\joint acceptance};
\node[decision,text width=22mm] (remainA) at (10.72,0.58)
  {candidate pair remains?};
\node[box,fill=black!3,text width=18mm] (nextA) at (7.65,0.58)
  {yes: $j\leftarrow j+1$};
\node[box,fill=black!4,text width=27mm] (eraseA) at (14.05,0.58)
  {no: partial or full erasure};

\draw[flow] (4.25,1.78)--(attemptA.west);
\draw[flow] (attemptA.east)--(updateA.west);
\draw[flow] (updateA.east)--(testA.west);
\draw[flow] (testA.east)--(acceptA.west) node[midway,above,font=\scriptsize] {yes};
\draw[flow] (testA.south)--(remainA.north) node[midway,right,font=\scriptsize] {no};
\draw[flow] (remainA.west)--(nextA.east) node[midway,above,font=\scriptsize] {yes};
\draw[flow] (nextA.west)-|(attemptA.south);
\draw[flow] (remainA.east)--(eraseA.west) node[midway,above,font=\scriptsize] {no};

\end{tikzpicture}}
\caption{Transversal and adaptive logical-fusion policies. All candidate
interface blocks are locally refreshed before any Bell attempt.
(a) The transversal policy executes a fixed set of pairwise GKP Bell
measurements, collects the complete Bell record, and performs joint
closest-coset parity decoding. (b) The adaptive policy processes candidate
pairs sequentially and updates the fusion posterior
$\Pi_{\fus}(E\mid D_{\fus})$ and parity margins
$\Delta_X,\Delta_Z$ after every retained Bell record. It stops only when
both parity margins satisfy the declared joint acceptance rule; otherwise
it advances to the next candidate and returns a partial or full erasure
if the candidate list is exhausted. In the adaptive panel, the orange
dashed link denotes an inconclusive record, the blue link the first record
satisfying the joint rule, and the gray dotted links candidates that are
not attempted. The diagram specifies causal ordering and stopping logic,
not a resource-normalized comparison.}
\label{fig:fusion_schematic}
\end{figure}

\subsection{Fusion accounting and finite-squeezing benchmarks}
\label{subsec:fusion_benchmark}

Algorithm~\ref{alg:fusion_decoder} induces the outer-interface channel
\begin{equation}
\bigl(P_{\fus}^{\rm succ},P_{\fus}^{\rm part},
P_{\fus}^{\rm full},P_{\fus}^{\rm err}\bigr),
\label{eq:loss_amplification_fusion_channel_vector}
\end{equation}
where the last component is an accepted parity-frame mismatch. Local interface erasures are already included through the adaptive fusion policy and are not added again when routing succeeds.

Figures~\ref{fig:result4_hex_fusion} and \ref{fig:result4_square_fusion} compare adaptive and transversal fusion at $\Delta_{\rm th}=0.50$ for $n=4,\ldots,9$, using the same closest-coset parity decoder and identical physical and confidence parameters. Success requires both logical parities to be accepted and correct.

\begin{figure*}[!t]
\centering
\includegraphics[width=0.98\textwidth]{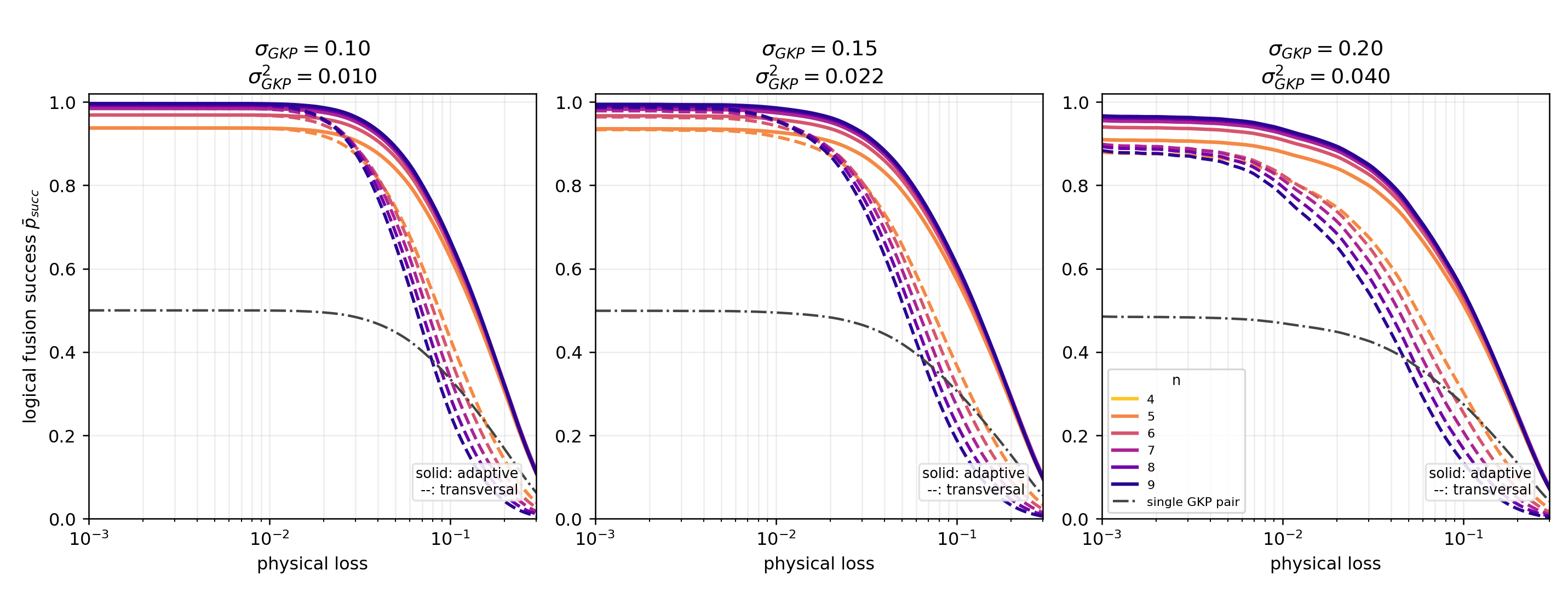}
\caption{Logical CV/GKP fusion success with closest-coset parity decoding for hexagonal GKP blocks at parity-confidence threshold $\Delta_{\rm th}=0.50$.  The three panels use $\sigma_{\GKP}=0.10$, $0.15$, and $0.20$ and graph sizes $n=4,\ldots,9$.  Solid curves are adaptive fusion and dashed curves are transversal fusion.  The dotted curve is the boosted physical-fusion reference and the dash-dotted curve is the single-GKP-pair reference.  The physical-loss coordinate is $\ell=1-\eta$.}
\label{fig:result4_hex_fusion}
\end{figure*}

\paragraph{Hexagonal-fusion interpretation.}
A useful operational threshold for Fig.~\ref{fig:result4_hex_fusion} is the half-success point $\ell_{1/2}$ defined by $p_{\rm succ}=1/2$.  For adaptive fusion, the displayed graph sizes give $\ell_{1/2}\simeq0.066$--$0.071$ at $17.0\,\mathrm{dB}$ squeezing, $0.060$--$0.064$ at $13.5\,\mathrm{dB}$, and $0.053$--$0.056$ at $11.0\,\mathrm{dB}$.  These correspond to only $0.30$--$0.32$, $0.27$--$0.29$, and $0.24$--$0.25\,\mathrm{dB}$ of channel attenuation.  Transversal fusion reaches half success earlier, at approximately $0.038$--$0.048$, $0.032$--$0.041$, and $0.022$--$0.033$ physical loss, respectively ($0.17$--$0.21$, $0.14$--$0.18$, and $0.10$--$0.15\,\mathrm{dB}$).  The adaptive advantage therefore survives finite squeezing, although the useful loss window contracts continuously as the squeezing is reduced.

\begin{figure*}[!t]
\centering
\includegraphics[width=0.98\textwidth]{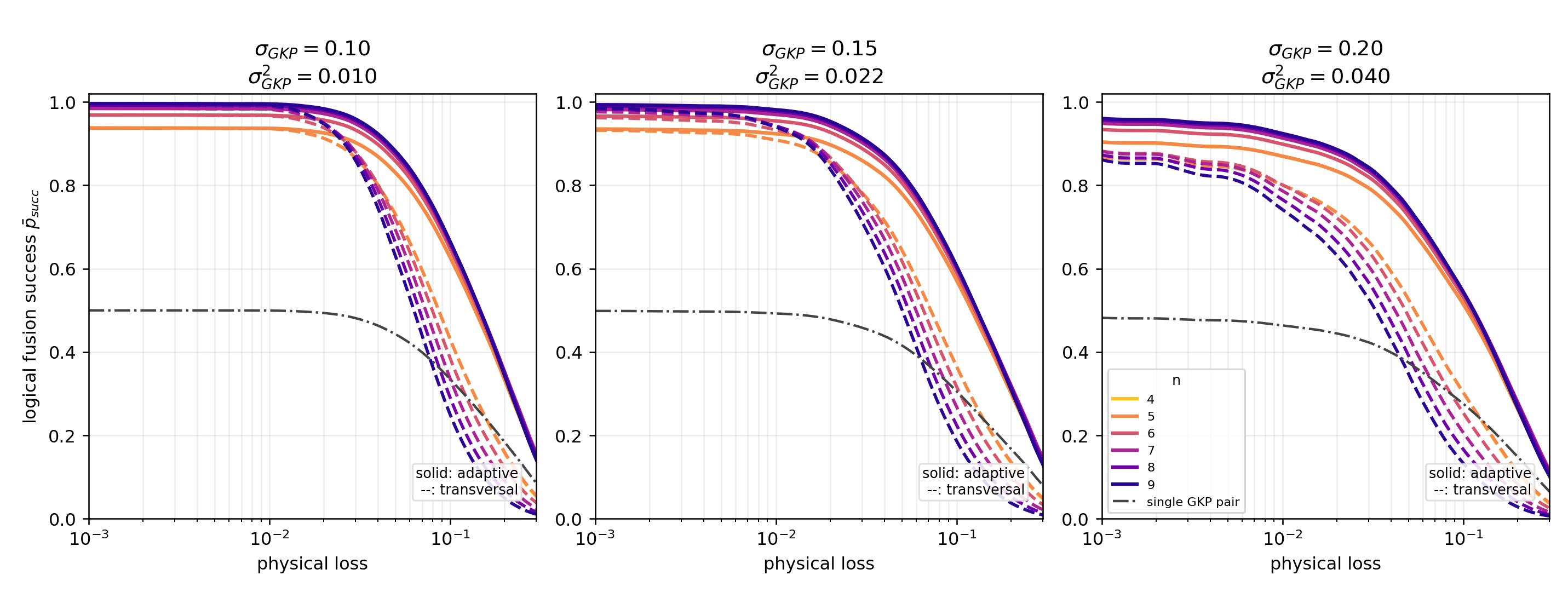}
\caption{Square-lattice counterpart of the logical-fusion benchmark in Fig.~\ref{fig:result4_hex_fusion}.}
\label{fig:result4_square_fusion}
\end{figure*}

\paragraph{Square-fusion interpretation.}
The square-lattice half-success points are very close to the hexagonal ones.  At $17.0$, $13.5$, and $11.0\,\mathrm{dB}$ squeezing, adaptive fusion gives $\ell_{1/2}\simeq0.067$--$0.071$, $0.063$--$0.068$, and $0.055$--$0.059$, respectively.  These ranges correspond to $0.30$--$0.32$, $0.28$--$0.31$, and $0.25$--$0.26\,\mathrm{dB}$ attenuation.  The transversal ranges are $0.037$--$0.047$, $0.031$--$0.043$, and $0.021$--$0.034$; in decibels they are $0.16$--$0.21$, $0.14$--$0.19$, and $0.09$--$0.15\,\mathrm{dB}$.  Across the simulated parameter range, lattice geometry changes the detailed success curves more strongly than it changes the half-success attenuation itself.
\section{Discussion and outlook}
\label{sec:discussion}

The central contribution of this work is a causal statistical interface between bosonic recovery and graph-level measurement control. In the proposed architecture, local GKP recovery does more than output a hard Pauli decision. It produces a continuous syndrome record, a posterior distribution over residual Pauli classes, and a confidence-dependent availability decision. When the confidence is insufficient, the refreshed block is deliberately converted into a located erasure before the destructive outer graph measurement. Importantly, this decision does not remove the corresponding record from the likelihood. The rejected syndrome remains informative about the physical displacement sector and therefore contributes to the subsequent inference of the outer syndrome and outgoing logical Pauli frame. This selected-event conditioning is the main distinction between the present decoder and a simple serial composition of a local GKP decoder with a graph-code loss decoder.

This distinction also clarifies the separation between several operational failure mechanisms. A local abstention, denoted by $\perp_{\rm loc}$, removes one candidate block from the future measurement pattern but need not cause logical failure because another representative may remain accessible. An accessibility failure $\perp_{\rm acc}$ occurs only when no compatible representative of the requested logical observable survives. A confidence failure $\perp_{\rm conf}$ occurs when a branch remains accessible but its inferred outgoing frame is not sufficiently reliable. Finally, an accepted frame error corresponds to an incorrect logical Pauli update despite branch acceptance. Treating these events separately avoids double counting and exposes the tradeoff controlled by the local confidence threshold: increasing the threshold converts more uncertain accepted Pauli errors into located erasures, which can then be handled by graph redundancy when an alternative representative exists.

The controller-facing object is consequently not a single decoded bit, but a posterior over the outer syndrome and the induced logical Pauli frame. This is particularly important in MBQC, where the frame determines both the interpretation of previous measurement outcomes and the physical basis of subsequent measurements. A wrong $Z$-frame estimate flips a reported logical outcome, whereas a wrong $X$-frame estimate changes the sign of a future equatorial measurement angle. Retaining the complete posterior therefore permits soft propagation of frame uncertainty through the computation rather than forcing an irreversible hard decision after every module. The update rule in Eq.~\eqref{eq:posterior_frame_propagation} provides the natural classical interface for this purpose: each graph--GKP module exports a posterior channel that can be composed with the controller state and used to adapt later measurement bases.

The finite-size results demonstrate how this interface behaves for explicit graph modules, but they should be interpreted according to the benchmark definitions introduced in Sec.~\ref{subsec:loss_amplification_exact_results}. The cube and decorated-pentagon reliability polynomials possess nontrivial fixed points for the idealized independent-erasure accessibility problem. These fixed points quantify the graph combinatorics acting on decoder-generated availability flags; they are not direct thresholds in the optical attenuation variable. In the full loss--amplification model, attenuation first modifies the Gaussian displacement covariance, which changes both the local abstention probability and the accepted Pauli-error distribution. The reported attenuation-reference crossings and encoded-to-single-GKP excess break-even regions therefore characterize finite-size operating boundaries of the complete decoder rather than asymptotic fault-tolerance thresholds.

Within these qualifications, the numerical comparisons reveal several useful trends. First, the outer graph topology has a substantial effect on the loss-tolerance boundary, and increasing the number of modes does not produce a monotonic improvement for the unoptimized graph catalog. This reflects the competing effects of additional logical representatives, larger measurement support, accumulated frame uncertainty, and the probability that at least one required block is locally rejected. Second, the hexagonal GKP lattice generally reduces the logical failure probability relative to the square lattice because of its larger shortest nontrivial displacement. The lattice advantage appears more clearly in the vertical separation of the failure curves than in a large shift of the attenuation-reference crossing. This suggests that graph topology and local lattice geometry play complementary roles: the graph controls accessibility and rerouting, while the local lattice primarily controls the reliability of the accepted analog decisions.

The non-Pauli results should likewise be interpreted as benchmarks of the protected transport and Pauli-frame layer surrounding the $A(\theta)$ interface. The encoded non-Gaussian terminal resource is supplied ideally in the reported simulations, so the curves do not include magic-state preparation, distillation, injection, or terminal-readout errors. Nevertheless, they quantify a necessary component of a fault-tolerant non-Clifford MBQC primitive: the ability to route the encoded state to a viable output, infer the syndrome-dependent outgoing frame, and adapt the physical angle and reported bit accordingly. An end-to-end non-Pauli analysis should incorporate a noisy encoded resource, a specific injection or teleportation circuit, and correlations between resource preparation, GKP recovery, and graph measurements. Such an extension would determine whether the routing and frame-management advantage survives after the complete non-Gaussian resource cost is included.

Recursive concatenation provides a route from the finite modules studied here to larger architectures, but the relevant recursion is a recursion of posterior channels rather than a scalar recursion of erasure probabilities. Each child module exports an availability flag together with a complete distribution over its residual frame. The parent must include the likelihood of the child acceptance or rejection event before applying graph pathfinding and syndrome inference. The scalar accessibility maps derived for the cube and decorated pentagon are therefore useful implementation checks, but they omit accepted frame errors, confidence distributions, and correlations between child modules. A future threshold analysis should propagate the full joint channel to substantially larger depths, extract genuine level-to-level pseudothresholds, and test whether the logical error decreases systematically below a stable critical region. Efficient density-evolution, tensor-network, or population-dynamics approximations may be useful for this purpose when exact posterior propagation becomes prohibitively expensive.

The fusion construction shows that the same decoder principle applies to logical two-module operations. The commuting logical parities define a Bell subgroup, while the observable parity sectors are labeled by conjugate dual shifts rather than by the measured parity generators themselves. The closest-coset decoder then produces a posterior over the parity-frame class, from which successful fusion, partial erasure, full erasure, and accepted parity-frame error are obtained as distinct outcomes. This formulation is directly compatible with fusion-based quantum computation \cite{bartolucci2023fusion,song2024encoded}, where a higher-level controller must know not only whether a fusion was attempted, but also which parity components were accepted and how reliable the inferred frame is.

The adaptive and transversal fusion curves demonstrate a policy-level advantage for sequential posterior updating. Under the declared attempt rules, the adaptive policy reaches the half-success marker at larger attenuation because it updates the parity posterior after each retained Bell record and stops when the joint acceptance criterion is satisfied. The present comparison is intentionally not resource normalized. Adaptive and transversal strategies can consume different expected numbers of Bell attempts, interface blocks, homodyne detections, and graph vertices. A complete architecture comparison should therefore optimize a joint objective involving logical fusion success, accepted parity-frame error, expected Bell-attempt count, optical depth, and resource-state consumption. It should also include any intrinsic probabilistic Bell-analyzer failure as a separate hardware channel rather than identifying it with the attenuation parameter of the loss--amplification model.

The posterior interface is also relevant to all-photonic quantum repeaters and quantum-network architectures \cite{fukui2021all,rozpkedek2023all,swain2026loss,koudia2024quantum}. In a repeater chain, the quality of a received graph fragment need not be represented by a binary success flag. The local GKP records can instead be converted into module-level availability and frame distributions that inform whether the fragment should be consumed, fused, rerouted, stored, or replaced. This provides a natural interface between the physical bosonic layer and link- or network-layer scheduling. The same principle extends to multimode bosonic communication, where several spatial or modal channels provide correlated reliability information \cite{Koudia2025CVComm,Koudia2025MIMO}. In such settings, syndrome-resolved posteriors could support path diversity, mode selection, decoder allocation, and confidence-aware routing.

Several limitations identify concrete directions for further work. The numerical benchmarks use the loss--amplification covariance together with a controlled finite-squeezing contribution, while additional gate, ancilla, detector, and feedforward noise terms in Eq.~\eqref{eq:loss_amplification_total_covariance} are set to zero. Incorporating these terms will be necessary for comparison with a specific optical implementation. All executable results use the closest-coset approximation rather than the exact wrapped-coset likelihood; small-module comparisons between the two would quantify the effect of the omitted multiplicity factors. The graph families are finite and unoptimized, the recursive study is limited to a few levels, the non-Pauli resource is idealized, and the fusion comparison is not normalized by expected resource consumption. These restrictions prevent an end-to-end asymptotic fault-tolerance claim, but they do not affect the definition of the selected-event posterior or its role as the common decoder interface.

A particularly important next step is a direct decoder ablation. One should compare the full selected-event decoder against a hard-decision architecture that passes only local Pauli estimates and erasure flags, a decoder that discards rejected-syndrome information, and an approximation that treats the availability mask as statistically independent of the analog record. Such a comparison would isolate the quantitative value of retaining the selection likelihood and clarify which regimes benefit most from syndrome-resolved Pauli-frame inference. A complementary optimization should jointly choose the graph topology, logical representatives, basis-dependent confidence thresholds, branch policy, local GKP lattice, concatenation depth, and fusion boundary under fixed mode and measurement budgets.

Overall, the results establish a unified route from continuous GKP records to the discrete control information required by graph-state quantum computation. The same selected-event closest-coset engine applies to logical Pauli measurements, protected non-Pauli transport, recursive concatenation, and logical fusion. This common interface makes it possible to design larger measurement-based and fusion-based architectures in which analog bosonic information is retained until the final graph-level syndrome and Pauli-frame decision, rather than being compressed prematurely into independent hard errors and erasures.

\appendix

\section{Proof of the branch-compatible graph--GKP compilation}
\label{app:branch_compilation}

\begin{proof}[Proof of Theorem~\ref{thm:branch_compilation}]
The chosen lift satisfies
\begin{equation*}
\Phi(u)+\Phi(v)-\Phi(u\oplus v)\in L_{\rm loc},
\qquad
2\Phi(u)\in L_{\rm loc}.
\end{equation*}
In particular, since $2\Phi(v)\in L_{\rm loc}$,
\begin{equation}
\Phi(u)-\Phi(v)
\equiv
\Phi(u)+\Phi(v)
\equiv
\Phi(u\oplus v)
\pmod{L_{\rm loc}}.
\label{eq:branch_lift_congruence}
\end{equation}

Suppose first that $u\oplus v=s\in Q_b$. Equation~\eqref{eq:branch_lift_congruence} then gives
\begin{equation*}
\Phi(u)-\Phi(v)=\Phi(s)+\lambda_0
\end{equation*}
for some $\lambda_0\in L_{\rm loc}$. Since
\begin{equation*}
L_{G,b}
=
L_{\rm loc}
+
\spanZ\{\Phi(q):q\in Q_b\},
\end{equation*}
it follows that $\Phi(u)-\Phi(v)\in L_{G,b}$ and hence
\begin{equation*}
\Phi(u)+L_{G,b}=\Phi(v)+L_{G,b}.
\end{equation*}

Conversely, suppose that $\Phi(u)-\Phi(v)\in L_{G,b}$. Reducing modulo $L_{\rm loc}$ gives
\begin{equation*}
\overline\Phi(u\oplus v)
=
\pi_{\rm loc}\!\left(\Phi(u)-\Phi(v)\right)
\in \overline\Phi(Q_b),
\end{equation*}
where $\pi_{\rm loc}:L_{\rm loc}^{\perp}\to
L_{\rm loc}^{\perp}/L_{\rm loc}$ is the quotient map. Therefore there exists $s\in Q_b$ such that
\begin{equation*}
\overline\Phi(u\oplus v)=\overline\Phi(s).
\end{equation*}
Faithfulness of $\overline\Phi$ implies
\begin{equation*}
u\oplus v=s\in Q_b.
\end{equation*}

Finally, Eq.~\eqref{eq:branch_fiber_coset} gives
\begin{equation*}
L_{a,b}=q_{a,b}+Q_b
\end{equation*}
for every nonempty logical branch fiber. Hence any two elements of $L_{a,b}$ differ by an element of $Q_b$ and, by the equivalence just proved, map to the same coset of $L_{G,b}$. Conversely, two compatible representatives mapping to the same coset differ by an element of $Q_b$. Thus each nonempty logical branch fiber is represented by one and only one coset of $L_{G,b}$.
\end{proof}

\section{Carry-aware signed compilation}
\label{app:signed_compilation}

\paragraph{Signed Pauli convention and lattice carry.}

We use the standard Hermitian binary Pauli section
\cite{dehaene2003}
\begin{equation*}
P_H(u)
=
i^{x\cdot z}
\bigotimes_{i=1}^{n}X_i^{x_i}Z_i^{z_i},
\qquad
P_{\pm}(u,\epsilon)
=
(-1)^\epsilon P_H(u),
\end{equation*}
for $u=(x\mid z)\in\F_2^{2n}$. Its multiplication cocycle is defined by
\begin{equation}
P_H(u)P_H(v)
=
i^{\omega_P(u,v)}P_H(u\oplus v),
\label{eq:pauli_cocycle_product}
\end{equation}
where, for $v=(x'\mid z')$,
\begin{equation*}
\omega_P(u,v)
=
2z\cdot x'
+x\cdot z
+x'\cdot z'
-(x\oplus x')\cdot(z\oplus z')
\pmod 4.
\end{equation*}
For commuting labels, $\omega_P(u,v)$ is even, and Eq.~\eqref{eq:pauli_cocycle_product} gives the signed-tableau rule
\begin{equation}
P_{\pm}(u,\epsilon)P_{\pm}(v,\delta)
=
P_{\pm}\!\left(
u\oplus v,
\epsilon\oplus\delta
\oplus\frac{\omega_P(u,v)}{2}
\right).
\label{eq:signed_tableau_rule}
\end{equation}

Under the logical embedding of the product GKP code, the abstract Hermitian Pauli $P_H(u)$ is represented on the product code space by the displacement $D(\Phi(u))$. Because the chosen lift is linear only modulo the base lattice, binary addition produces the carry
\begin{equation}
c(u,v)
=
\Phi(u)+\Phi(v)-\Phi(u\oplus v)
\in L_{\rm loc}.
\label{eq:binary_lattice_carry}
\end{equation}
Writing $W=\Phi(u\oplus v)$, the Weyl relation gives
\begin{align*}
D(\Phi(u))D(\Phi(v))
={}&
e^{-i\pi\Phi(u)^TJ_{\rm loc}\Phi(v)}
D(W+c(u,v))
\\
={}&
e^{-i\pi\Phi(u)^TJ_{\rm loc}\Phi(v)}
e^{i\pi W^TJ_{\rm loc}c(u,v)}
D(W)D(c(u,v)).
\end{align*}
Since
\begin{equation*}
e^{i\phi_{\rm loc}(c)}D(c)
\end{equation*}
acts trivially on the product local-code space for every
$c\in L_{\rm loc}$, comparison with
Eq.~\eqref{eq:pauli_cocycle_product} yields the intertwining identity
\begin{multline}
e^{-i\pi\Phi(u)^TJ_{\rm loc}\Phi(v)}
e^{i\pi\Phi(u\oplus v)^TJ_{\rm loc}c(u,v)}
e^{-i\phi_{\rm loc}(c(u,v))}
\\
=
i^{\omega_P(u,v)}.
\label{eq:carry_intertwining}
\end{multline}
This is the only convention-dependent identity needed below; the standard Pauli and Weyl multiplication laws themselves are not rederived.

\paragraph{Extension to the compiled lattice.}

Let the signed graph-code stabilizer with label $s\in Q_G$ be
\begin{equation}
\widehat P(s)
=
(-1)^{\epsilon(s)}P_H(s),
\qquad
\epsilon:Q_G\longrightarrow\F_2,
\label{eq:signed_stabilizer_section}
\end{equation}
where the sign function $\epsilon$ is fixed by the progenitor stabilizer group. Since $Q_G$ is isotropic, its labels commute. Closure of the signed stabilizer group therefore requires
\begin{equation}
\epsilon(s\oplus t)
=
\epsilon(s)\oplus\epsilon(t)
\oplus\frac{\omega_P(s,t)}{2}
\pmod 2
\label{eq:signed_stabilizer_cocycle}
\end{equation}
for all $s,t\in Q_G$.

The corresponding compiled graph generator is assigned the phase
\begin{equation}
\phi_G(\Phi(s))
=
\pi\epsilon(s)
\pmod{2\pi}.
\label{eq:compiled_graph_phase_generator}
\end{equation}
For a symplectically integral lattice, a compatible phase function obeys the quadratic-refinement law
\begin{equation}
\phi(\lambda+\mu)
=
\phi(\lambda)+\phi(\mu)
-\pi\lambda^TJ\mu
\pmod{2\pi}.
\label{eq:lattice_phase_refinement}
\end{equation}

\begin{theorem}[Carry-aware extension of the lattice phase sector]
\label{thm:signed_compilation}
Assume that $\overline\Phi$ is faithful and that
Eq.~\eqref{eq:signed_stabilizer_section} defines an abelian stabilizer group. Every $\lambda\in L_G$ has a unique decomposition
\begin{equation*}
\lambda=\lambda_0+\Phi(s),
\qquad
\lambda_0\in L_{\rm loc},
\quad
s\in Q_G.
\end{equation*}
Moreover,
\begin{align}
\phi_G(\lambda_0+\Phi(s))
={}&
\phi_{\rm loc}(\lambda_0)
+\pi\epsilon(s)
\nonumber\\
&-\pi\lambda_0^TJ_{\rm loc}\Phi(s)
\pmod{2\pi}
\label{eq:compiled_phase_explicit}
\end{align}
defines the unique phase sector extending the local phases and the graph-code signs. Consequently,
\begin{equation}
\mathcal S_G
=
\left\{
e^{i\phi_G(\lambda)}D(\lambda):
\lambda\in L_G
\right\}
\label{eq:compiled_signed_stabilizer_group}
\end{equation}
is a well-defined abelian GKP stabilizer group.
\end{theorem}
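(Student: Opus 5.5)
The plan has three parts: prove the decomposition $\lambda=\lambda_0+\Phi(s)$ from faithfulness of $\overline\Phi$; show that the quadratic-refinement law Eq.~\eqref{eq:lattice_phase_refinement} forces Eq.~\eqref{eq:compiled_phase_explicit}, which gives uniqueness; and then verify that this formula does satisfy the refinement law, which gives existence. The group property of $\mathcal S_G$ then follows from the Weyl product rule Eq.~\eqref{eq:weyl_product}.

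\emph{Decomposition.} By Eq.~\eqref{eq:static_lattice_pullback}, $L_G=L_{\rm loc}+\spanZ\{\Phi(q):q\in Q_G\}$. The carry Eq.~\eqref{eq:binary_lattice_carry} and $2\Phi(q)\in L_{\rm loc}$ reduce any integer combination $\sum_k n_k\Phi(q_k)$ to $\Phi(s)$ modulo $L_{\rm loc}$, where $s=\bigoplus_k (n_k \bmod 2)\,q_k\in Q_G$. This gives existence of the decomposition. For uniqueness, suppose $\lambda_0+\Phi(s)=\lambda_0'+\Phi(s')$. Then $\Phi(s)-\Phi(s')\in L_{\rm loc}$, so by Eq.~\eqref{eq:branch_lift_congruence} $\overline\Phi(s\oplus s')=0$. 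Faithfulness gives $s=s'$, and hence $\lambda_0=\lambda_0'$. This is the argument of Appendix~\ref{app:branch_compilation} with $Q_b$ replaced by $Q_G$.

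\emph{Uniqueness of the phase.} Suppose $\phi$ satisfies Eq.~\eqref{eq:lattice_phase_refinement}, restricts to $\phi_{\rm loc}$ on $L_{\rm loc}$, and equals $\pi\epsilon(s)$ on $\Phi(s)$ as in Eq.~\eqref{eq:compiled_graph_phase_generator}. Applying the refinement law to $\lambda_0+\Phi(s)$ yields Eq.~\eqref{eq:compiled_phase_explicit} directly. Since the decomposition is unique, this formula is a well-defined function on $L_G$, and any compatible extension must coincide with it. Two consistency checks are needed. First, at $s=0$ the formula returns $\phi_{\rm loc}$, using $\Phi(0)=0$ and $\epsilon(0)=0$. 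Second, at $\lambda_0=0$ it returns $\pi\epsilon(s)$, using $\phi_{\rm loc}(0)=0$. Throughout I will use that $\lambda_0^TJ_{\rm loc}\Phi(s)\in\Z$, because $\Phi(s)\in L_{\rm loc}^\perp$. Hence the sign of the $\pi\lambda_0^TJ\Phi(s)$ term is irrelevant modulo $2\pi$.

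\emph{Existence: the main obstacle.} The substantive step is to show that Eq.~\eqref{eq:compiled_phase_explicit} satisfies Eq.~\eqref{eq:lattice_phase_refinement} on all of $L_G$. Write $\lambda=\lambda_0+\Phi(s)$ and $\mu=\mu_0+\Phi(t)$. Then
\[
\lambda+\mu=\bigl(\lambda_0+\mu_0+c(s,t)\bigr)+\Phi(s\oplus t),
\]
which is the canonical decomposition of the sum. I will evaluate $\phi_G(\lambda+\mu)$ from the formula in four moves:
\begin{enumerate}
\item Expand $\phi_{\rm loc}(\lambda_0+\mu_0+c)$ with the local refinement law. This law holds for $\phi_M$, as follows from Eq.~\eqref{eq:lattice_product_phase}.
\item Replace $\pi\epsilon(s\oplus t)$ using the sign cocycle Eq.~\eqref{eq:signed_stabilizer_cocycle}, which introduces $\pi\,\omega_P(s,t)/2$.
\item Expand every bilinear symplectic term.
\item Subtract $\phi_G(\lambda)+\phi_G(\mu)-\pi\lambda^TJ\mu$.
\end{enumerate}
Terms pairing $\lambda_0$ or $\mu_0$ with the other pieces either cancel or are integer multiples of $\pi$ with even total, using integrality of $L_{\rm loc}$ against $L_{\rm loc}^\perp$. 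What remains should be exactly
\[
\phi_{\rm loc}(c)-\pi\Phi(s\oplus t)^TJ_{\rm loc}c+\pi\Phi(s)^TJ_{\rm loc}\Phi(t)+\tfrac{\pi}{2}\omega_P(s,t)\equiv 0 \pmod{2\pi}.
\]
This is the logarithm of the intertwining identity Eq.~\eqref{eq:carry_intertwining}, with $i^{\omega_P}=e^{i\pi\omega_P/2}$. I expect the difficulty to be pure sign and parity bookkeeping, so I will first check the cross terms on the two-block example, where $c(s,t)$ is explicit.

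\emph{Group structure.} With the refinement law in hand, Eq.~\eqref{eq:weyl_product} gives
\[
e^{i\phi_G(\lambda)}D(\lambda)\,e^{i\phi_G(\mu)}D(\mu)=e^{i\phi_G(\lambda+\mu)}D(\lambda+\mu).
\]
Hence $\mathcal S_G$ is closed under products, contains the identity since $\phi_G(0)=0$, and contains inverses. Because the phase is single-valued, the element $-I$ cannot appear. For commutativity, the symplectic products on $L_G$ are integral: on $L_{\rm loc}\times L_G$ because $L_G\subseteq L_{\rm loc}^\perp$, and on $\Phi(Q_G)\times\Phi(Q_G)$ by Eq.~\eqref{eq:Phi_symplectic} together with isotropy of $Q_G$. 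Eq.~\eqref{eq:weyl_comm} then shows that the elements of $\mathcal S_G$ commute.
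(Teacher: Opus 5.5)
Your proposal is correct and follows essentially the same route as the paper: the unique decomposition comes from faithfulness of $\overline\Phi$, uniqueness of the phase is forced by the refinement law, existence reduces to the sign cocycle Eq.~\eqref{eq:signed_stabilizer_cocycle} together with the intertwining identity Eq.~\eqref{eq:carry_intertwining}, and the group property follows from the Weyl relations and symplectic integrality. The one difference is that you check the refinement law directly on general pairs $\lambda_0+\Phi(s)$, $\mu_0+\Phi(t)$, whereas the paper checks it only on pure graph pairs $\Phi(s),\Phi(t)$ and then extends ``by bilinearity''; your computation makes that extension explicit, since the only surviving cross term is $2\pi\,\Phi(s)^TJ_{\rm loc}\mu_0\in2\pi\Z$, and your $-\pi\Phi(s\oplus t)^TJ_{\rm loc}c$ differs from the $+\pi\Phi(s\oplus t)^TJ_{\rm loc}c$ that actually arises only by an element of $2\pi\Z$.
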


\begin{proof}
We first establish the decomposition. Since
\begin{equation*}
L_G/L_{\rm loc}
\simeq
\overline\Phi(Q_G),
\end{equation*}
the quotient class of every $\lambda\in L_G$ can be written as
\begin{equation*}
\lambda+L_{\rm loc}
=
\Phi(s)+L_{\rm loc}
\end{equation*}
for some $s\in Q_G$. Hence
\begin{equation*}
\lambda_0:=\lambda-\Phi(s)\in L_{\rm loc},
\end{equation*}
which proves existence.

If
\begin{equation*}
\lambda_0+\Phi(s)
=
\lambda_0'+\Phi(t),
\end{equation*}
then
\begin{equation*}
\Phi(s)-\Phi(t)\in L_{\rm loc}.
\end{equation*}
Reducing modulo $L_{\rm loc}$ and using the order-two property of the quotient gives
\begin{equation*}
\overline\Phi(s\oplus t)=0.
\end{equation*}
Faithfulness of $\overline\Phi$ implies $s=t$, after which
$\lambda_0=\lambda_0'$. The decomposition is therefore unique.

It remains to verify consistency of the phase assignment. For two graph labels $s,t\in Q_G$, define
\begin{equation*}
A=\Phi(s),
\qquad
B=\Phi(t),
\qquad
w=s\oplus t,
\qquad
W=\Phi(w),
\end{equation*}
and let
\begin{equation*}
c=A+B-W\in L_{\rm loc}.
\end{equation*}
Closure of the signed graph stabilizer, together with
Eq.~\eqref{eq:carry_intertwining}, gives
\begin{align}
\pi\epsilon(w)
={}&
\pi\epsilon(s)+\pi\epsilon(t)
-\pi A^TJ_{\rm loc}B
\nonumber\\
&+\pi W^TJ_{\rm loc}c
-\phi_{\rm loc}(c)
\pmod{2\pi}.
\label{eq:graph_phase_closure}
\end{align}
Using $A+B=c+W$ in Eq.~\eqref{eq:compiled_phase_explicit},
\begin{align*}
\phi_G(A+B)
={}&
\phi_{\rm loc}(c)
+\pi\epsilon(w)
-\pi c^TJ_{\rm loc}W.
\end{align*}
Substitution of Eq.~\eqref{eq:graph_phase_closure}, together with antisymmetry of $J_{\rm loc}$, gives
\begin{align*}
\phi_G(A+B)
={}&
\pi\epsilon(s)+\pi\epsilon(t)
-\pi A^TJ_{\rm loc}B
\\
&+2\pi W^TJ_{\rm loc}c
\pmod{2\pi}.
\end{align*}
Because $W\in L_{\rm loc}^{\perp}$ and
$c\in L_{\rm loc}$,
\begin{equation*}
W^TJ_{\rm loc}c\in\mathbb Z,
\end{equation*}
and the last term vanishes modulo $2\pi$. Therefore,
\begin{equation}
\phi_G(A+B)
=
\phi_G(A)+\phi_G(B)
-\pi A^TJ_{\rm loc}B
\pmod{2\pi}.
\label{eq:graph_graph_refinement}
\end{equation}

The refinement law already holds on $L_{\rm loc}$. In addition,
$\Phi(s)\in L_{\rm loc}^{\perp}$ for every $s\in Q_G$, so all local--graph symplectic pairings are integral. Equation~\eqref{eq:graph_graph_refinement}, bilinearity, and the unique decomposition of every element of $L_G$ therefore extend the refinement law to arbitrary
\begin{equation*}
\lambda_0+\Phi(s),
\qquad
\mu_0+\Phi(t)
\in L_G.
\end{equation*}
The prescribed values determine $\phi_G$ uniquely. Finally, symplectic integrality of $L_G$ implies that the associated displacement generators commute, so
Eq.~\eqref{eq:compiled_signed_stabilizer_group} defines a well-defined abelian GKP stabilizer group.
\end{proof}

\paragraph{Observed signed characters.}

Let
\begin{equation*}
F_b\in\F_2^{2n}
\end{equation*}
denote the known physical Pauli-frame label on the graph blocks immediately before the measurements of branch $b$. It includes a chosen physical representative of the incoming logical controller frame together with all deterministic byproducts accumulated before the branch.

For a measurement-supported signed Pauli $\widehat Q$, let
$\beta_b(\widehat Q)\in\F_2$ denote its measured eigenvalue bit, with eigenvalue $(-1)^{\beta_b(\widehat Q)}$. Products of local measurement factors are accumulated using
Eq.~\eqref{eq:signed_tableau_rule}. For a measured check
$q_{b,k}\in Q_b^{\meas}$, the frame-corrected observed syndrome bit is
\begin{equation}
z_{b,k}
=
\beta_b\!\left(\widehat S(q_{b,k})\right)
\oplus
[q_{b,k},F_b]_{\DV}.
\label{eq:observed_signed_check}
\end{equation}

For a requested logical observable, choose a fixed signed reference
$\widehat Q^0_{a,b}$ with binary label $q^0_{a,b}$. Every operational representative factors as
\begin{equation}
\widehat Q_{a,b}
=
\widehat Q^0_{a,b}\widehat S(s_{a,b}),
\qquad
s_{a,b}\in Q_b^{\meas}.
\label{eq:signed_reference_factorization}
\end{equation}
The corresponding reference-normalized raw bit is
\begin{align}
\chi_b^{\rm raw}(a;q^0_{a,b})
={}&
\beta_b(\widehat Q_{a,b})
\oplus
\beta_b\!\left(\widehat S(s_{a,b})\right)
\nonumber\\
&\oplus
[q^0_{a,b},F_b]_{\DV}.
\label{eq:chi_explicit}
\end{align}
Changing the operational representative multiplies both measured factors in Eq.~\eqref{eq:signed_reference_factorization} by the same signed stabilizer character. Their changes therefore cancel in Eq.~\eqref{eq:chi_explicit}, so the reconstructed raw logical character is independent of the selected representative.

\bibliographystyle{apsrev4-2}
\bibliography{references}


\end{document}